\newcommand{\intl}[2]{\!\underset{#1}{\overset{#2}{\rotatebox[origin=rc]{15}{\large\ensuremath{\int}}}}}
\newcommand{\intL}[2]{\!\!\underset{#1}{\overset{#2}{\rotatebox[origin=rc]{15}{\Large\ensuremath{\int}}}}\!}
\newcommand{\Conv}{\mathop{\scalebox{2}{\raisebox{-0.2ex}{$\ast$}}}}
\newcommand{\z}{\textsl{z}}
\newcommand{\D}{\partial}
\newcommand{\deq}{\overset{\operatorname{def}}{=}}
\newcommand{\supp}{\operatorname{supp}}
\newcommand{\ch}{\operatorname{ch}}
\newcommand{\diam}{\operatorname{diam}}
\renewcommand{\alpha}{\alphaup}
\renewcommand{\beta}{\betaup}
\renewcommand{\gamma}{\gammaup}
\renewcommand{\delta}{\deltaup}
\renewcommand{\epsilon}{\varepsilonup}
\renewcommand{\varepsilon}{\epsilonup}
\renewcommand{\zeta}{\zetaup}
\renewcommand{\eta}{\etaup}
\renewcommand{\theta}{\thetaup}
\renewcommand{\vartheta}{\varthetaup}
\renewcommand{\iota}{\iotaup}
\renewcommand{\kappa}{\varkappa}
\renewcommand{\lambda}{\lambdaup}
\renewcommand{\mu}{\muup}
\renewcommand{\nu}{\nuup}
\renewcommand{\xi}{\xiup}
\renewcommand{\pi}{\piup}
\renewcommand{\rho}{\rhoup}
\renewcommand{\varrho}{\varrhoup}
\renewcommand{\sigma}{\sigmaup}
\renewcommand{\varsigma}{\varsigmaup}
\renewcommand{\tau}{\tauup}
\renewcommand{\Upsilon}{\textrm{\greektext U}}
\renewcommand{\upsilon}{\upsilonup}
\renewcommand{\phi}{\upvarphi}
\renewcommand{\varphi}{\phiup}
\renewcommand{\chi}{\chiup}
\renewcommand{\psi}{\textrm{\greektext y}}
\renewcommand{\omega}{\omegaup}
\renewcommand{\mathbb}{\varmathbb}
\title{Natural selection in compartmentalized environment with reshuffling}
\author{A.S. Zadorin$^1$ \and Y. Rondelez$^2$}
\date{\footnotesize
	$^1$Gulliver, ESPCI Paris, PSL University, CNRS, 75005 Paris, France\\
	(Current affiliation: Chimie Biologie Innovation, ESPCI Paris, CNRS, PSL University, 75005 Paris, France)\\
	email: \texttt{anton.zadorin@espci.fr}\\
	$^2$Gulliver, ESPCI Paris, PSL University, CNRS, 75005 Paris, France\\
	email: \texttt{yannick.rondelez@espci.fr}
}
\newtheorem{property}{Property}
\newtheorem*{lemma-main}{Lemma}
\newtheorem{theorem}{Theorem}[subsection]
\newtheorem{lemma}[theorem]{Lemma}
\newtheorem{proposition}[theorem]{Proposition}
\begin{document}

\maketitle

\begin{abstract}

\sloppy{
The emerging field of high-throughput compartmentalized \emph{in vitro} evolution is a promising new approach to protein engineering. In these 
experiments, libraries of mutant genotypes are randomly distributed and expressed in compartments---droplets of an emulsion. The selection of desirable 
variants is performed according to the phenotype of each compartment. The random partitioning leads to a fraction of compartments receiving more than 
one genotype making the whole process a lab implementation of the group selection. From a practical point of view (where efficient selection is 
typically sought), it is important to know the impact of the increase in the mean occupancy of compartments on the selection efficiency. We carried out 
a theoretical investigation of this problem in the context of selection dynamics for a simple model with an infinite, non-mutating population that is 
periodically partitioned among an infinite number of identical compartments. We derive here an update equation for any distribution of phenotypes and 
any value of the mean occupancy. Using this result, we demonstrate that, for the linear additive fitness, the best genotype is still selected 
regardless of the mean occupancy. Furthermore, the ``natural'' selection process is remarkably resilient to the presence of multiple genotypes per 
compartments, and slows down approximately inversely proportional to the mean occupancy at high values. We extend out results to more general 
expressions that cover nonadditive and non-linear fitnesses, as well non-Poissonian distribution among compartments. Our conclusions may also apply to 
natural genetic compartmentalized replicators, such as viruses or early \emph{trans}-acting RNA replicators.
}

\end{abstract}

\section*{Introduction}

\emph{In vitro} directed evolution is a laboratory technique that mimics natural evolution and can be used to obtain proteins with new or improved 
properties \cite{Packer2015}. Modern approaches are applied to the search of catalytic properties (i.e. artificial enzymes), for which they can test 
millions or billions of variants in parallel. These approaches use microcompartments, such as water-in-oil droplets in an emulsion, in which variants 
are randomly distributed, to enforce the phenotype-genotype linkage between the gene and its protein products. A selection pressure is then applied, at 
the level of compartments, in order to drive the population of enzymes towards the desired property. Cycles of mutation-selection are usually iterated 
until a satisfying variant is obtained.

The technique has been successfully used to improve existing enzymes (in relation to their direct catalytic characteristics, or to thermal stability, 
resistance to inhibitors, etc.), to change the substrate specificity of an enzyme, to develop a completely new enzymatic activity, or even new 
catalytic pathways (specific examples can be found in \cite{Martinez2013,Wojcik2015,Zeymer2018}). Compartmentalized \emph{in vitro} selection has also 
been used to study fundamental questions of protein physics such as the local shape of the protein fitness landscape on its sequence space with respect 
to a particular selection pressure \cite{Romero2015,Zeymer2018}. Here the gene frequency change after a selection round for millions of mutants at once 
is used as a readout.

A typical high throughput directed evolution experiment starts with a large library of mutated genes of interest. Because enzymatic activity is carried 
by the encoded proteins, these genes require transcription and translation for phenotypic expression. This expression step is performed either by 
microbiological means (\emph{via} recombinant expression of a gene-carrying plasmid in bacteria) or with direct \emph{in vitro} approaches. In all 
cases, the mutant library is distributed among a large number of small compartments (e.g. microdroplets within an emulsion). The selection is carried 
out on each compartment, by evaluating its global -or apparent- phenotype and converting this information into an artificial ``fitness''. This critical 
step can be done in two different ways. The first one uses an external observation and subsequent physical separation of ``good'' and ``bad'' 
compartments, depending on the detected phenotype. In many cases, the phenotypic properties are converted into some fluorescent readout and then the 
compartments are sorted based on their spectral properties. An experiment of this kind, pioneering high-throughput emulsion sorting and \emph{in vitro} 
compartmentalization, was performed in \cite{Tawfik1998}.

\sloppy{The second approach is based on an internal biochemical reaction that autonomously replicates the genotype in relation to some metrics of the 
phenotype (this situation is usually referred to as ``self-selection''). This approach is well suited to the selection of DNA- and RNA-polymerases and 
was pioneered in \cite{Ghadessy2001} with the evolution of Taq polymerase towards higher resistance to PCR inhibitors. In that work, bacteria 
containing a plasmid with Taq gene and the corresponding Taq polymerase were randomly encapsulated in water-in-oil emulsion, with the addition of PCR 
primers targeting the Taq gene variant. After bacterial lysis released both plasmid and polymerase, droplets were submitted to thermal cycles and those 
with better polymerases produced more copies of the genetic variants they contained. At the end of the selection cycle the emulsion is broken and the 
genetic mate- rial is collected. If multiple selections rounds are used, the recovered genes serve as the initial library for the subsequent cycle. An 
idealized scheme of the self-selection process is depicted on Figure~\ref{fig-scheme}.
}

Note that while the two selection processes described above correspond to very different experimental setups, from a modelling point of view, one can 
consider the former approach as a special case of the latter, where the fitness and the phenotype in a droplet are related via a step-function. 

An essential feature of all such experiments is that selection does not act on individuals, but rather on groups of individuals which are randomly 
formed at each generation, during the compartmentalization process. When multiple individuals happen to share the same compartment, the selection 
outcome for a given genotype depends not only on its own identity, but also on identities of the others. This is because, inside each compartment, all 
genetic molecules are copied (or sorted) without distinction, but as a function of the combined phenotypic composition in the compartment. Therefore, 
as the statistics of the genotype distribution in compartments depends on the distribution of genotypes in the population, the selection effectively 
becomes frequency-dependent. In fact, these selection systems represent an extreme case of group selection model, where the reproductive success of an 
individual depends entirely on the combined phenotype of the group it belongs to \cite{Wilson1973}.

In the context of the search of new enzymes by high-throughput \emph{in vitro} evolution the problem of random co-encapsulation is of great practical 
significance. The most efficient selection, in the sense of the selection pressure, is achieved in the situation when any compartment contains no more 
than one genotype. Such situation brings no dependence on frequency. However, unless sophisticated methods are used to enforce single occupancy 
\cite{Edd2008,He2005}, this implies a very high fraction of empty droplets, and thus a loss in throughput. A question naturally arises about the 
effect of allowing multiple genotypes in one droplet. Obviously, the selection pressure will drop, because if a stronger genotype $A$ and a weaker 
genotype $B$ meet in the same droplet, the shared replicative phenotype is weaker than that of two $A$-s and stronger than that of two $B$-s, while the 
number of copies is evenly shared. The precise response in terms of selection dynamics, and the relevant parameters, are however less clear. Lacking a 
general understanding of the process dynamics, researches are bound to empirical approaches to perform \emph{in vitro} evolution experiments 
\cite{Collins2015,Dodevski2015}.

The goal of our work was to establish a general model of group selection suitable for studying \emph{in vitro} evolution with random co-encapsulation. 
To be useful for interpreting experiments, the model must have a form of a dynamical equation that governs the temporal evolution of genotypic and 
phenotypic distributions in the library of mutants. The initial data and parameters must be: the initial genotypic or phenotypic distribution, 
the microscopic (on the level of a single group) rules of genotype interaction and its effect of the group survival and/or reproduction, and the 
statistics of the co-encapsulation. The model must eventually allow to study the effect of random co-encapsulation on the selection efficiency. This 
article takes a first step towards that goal, where we assume no selection inside the groups.

The existing general theoretical works on group selection are typically focused on two different aspects of this phenomenon. The first group of works 
studies the conditions on selection of individually disadvantageous but collectively advantageous traits (like altruism). This direction of research 
has produced a very large body of literature. We refer \cite{Gardner2009} and \cite{Queller1992} and the links therein. The main instrument here is 
Price's covariation formalism \cite{Price1970,Price1972}. However, as Price's identity does not provide a dynamically complete equation, this approach 
is not appropriate for the question studied in this article. Furthermore, the construction of this identity is based on the knowledge of the 
individuals' fitness, which itself requires the development of a theory for compartmentalized selection \cite{vanVeelen2005}. The second traditional 
direction is related to the selection of an altruistic trait, too, but the main goal is to find its dynamics \cite{Smith1964,Wilson1975}. These works 
assume groups of the same size, additive fitness effect with altruism cost and benefits. This approach does produce a dynamical equation for the 
selection process starting from the model of interactions. The key point of such models is the interplay between the inter- and intragroup selection. 
Although such models with no intragroup selection do approach closely to the problem of compartmentalized selection, they carry a number of 
limitations. The group size is fixed, the number of possible phenotypic values is finite, and the emphasis on elementary mathematics complicates the 
treatment of nonlinear genotype/phenotype relations. Finally, there is a recent attempt to build a very general dynamic equation of group selection in 
form of PDE \cite{Simon2013}. The assumptions of the model, however, departure from the context of the compartmentalized experiments, too. They include 
a well defined finite set of phenotypes and a continuous-time dynamics of group restructure, while in the experiments, the phenotypic distributions are 
often continuous and the groups are completely reformed at discrete times.

There is also an overlap of the problem of co-encapsulated group selection with the problem of evolution in patched/structured populations, first 
conceived in \cite{Wright1931}. More specifically, the problem in question is isomorphic to a certain limit of the metapopulation model with 
migrant-pool gene flow \cite{Burger2014,Hamilton2011}. Here the role of abstract groups is played by ecological demes. This direction is currently 
experiencing a renewed theoretical interest with the development of strict mathematical models of exact or approximate stochastic evolution dynamics 
with mutations based on Markov process formalism (for a modern treatment see, for example, \cite{Bitbol2014}). However, typical simplifications of 
the models include few phenotypes (in most cases only two), weak migration, and an essential competition inside demes. Compartmentalized directed 
evolution experiments are, instead, characterized by a very large number of phenotypes, panmixing, and no intracompartment selection. Therefore, a new 
approach is needed to tackle the problem.

As the co-encapsulation of multiple genotypes under selection also arises in a number of biologically relevant scenario (e.g. multiple infection for 
viruses, parasites), it has also been the subject of a number of more specific previous work. Some recent examples include the works 
\cite{Bianconi2013,Fontanari2006,Fontanari2013,Lampert2011,Matsumura2016,Zintzaras2010}, as well as \cite{Higgs2015} and links therein. Most of them, 
however, focus on aspects of kin/group selection, altruistic trait fixation, coexistence of selfish and cooperative genes, and the error catastrophe. 
Typically, these works treat more complicated cases of primordial evolution in presence of parasitic sequences with different reproduction rates. These 
complex problems are explored primarily with numerical simulations, and are not directly applicable to the present problem of \emph{in vitro} 
evolution. A preliminary analysis of the selection dynamics directly related to the context of \emph{in vitro} co-compartmentalization was done in 
\cite{Zheng2007} for the case of two alleles, one of which is completely inactive, where any droplet with at least one active genotype inside is 
selected and all its content propagates to the next generation. A similar study of the effect of co-infection on selection from two viral phenotypes 
that share, when located in the same cell, both the replicative activity and the total offspring number was also carried out in \cite{Novella2004}.

Here we present a more general description of the compartmentalized selection problem, that we initially motivate on a model related to \emph{in 
vitro} evolution experiments in droplets, but which may apply also to biological situations. We consider an infinite population distributed in an 
infinite number of identical compartments, where the reproduction happens at discrete moments of time, the generations do not overlap, and the 
population is randomly redistributed at each cycle. We take into account only selection and ignore mutations and stochasticity. We derive, and solve 
for special cases, the update equation that defines the temporal evolution of the phenotypic distribution. Such approach explicitly incorporates the 
dependence of the selection pressure on the mean occupancy of droplets in the emulsion.

The structure of the article is the following. Section~\ref{model} gives the detailed description of the model. Section~\ref{selection-arbitrary} deals 
with the general theory for an additive phenotype and a linear phenotype-fitness dependence. There we derive the probability distribution densities of 
a fitness experienced by a given phenotype in the emulsion for an arbitrary initial library. This allows to write the general update equation for the 
evolution of the library in course of selection. In Section~\ref{trajectory} we study some general properties of solutions to Cauchy problems for the 
derived update equation. We also obtain exact solutions for some special cases. Section~\ref{nonlinear} deals with generalizations to nonlinear 
phenotype-fitness dependencies, including polynomial functions and sums of exponentials. We also demonstrate how, at least in principle, to deal with an 
arbitrary continuous phenotype-fitness dependence using already established results. In Section~\ref{nonadditive} we outline the framework to capture 
more general situations like a non-Poissonian distribution of individuals in the compartments, nonadditive phenotype, and multiple traits. In 
Section~\ref{numerical} we provide the results of numerical simulations to test some predictions from previous sections. Finally, we briefly discuss 
the result and their relevance to biological situations. Technical introduction and detailed mathematical proofs are given in appendices.

\section*{List of notations}

The following notations are adopted throughout the article:

\begin{longtable}[h]{l p{0.2cm} l} 
\itshape Notation && \itshape meaning or comment\\
\endfirsthead
\itshape Notation && \itshape meaning or comment\\
\endhead 
$\mathbb N$ && we assume $0 \in \mathbb N$\\
$\mathbb R_+$ && the nonnegative semiaxis: $\mathbb R_+ = [0,+\infty) \subset \mathbb R$\\
$C_c$ && space of continuous functions with compact support\\
$C_{c+}$ && space of nonnegative functions from $C_c$\\
$C'_c$ && space of generalized functions on $C_c$ (Radon measures)\\ 
$C'_{c+}$ && subset of nonegative generalized functions\\
$\mathbb P$ && subset of probability densities: $\mathbb P = \{\rho \in C_{c+}'\,|\,\langle \rho,1\rangle = 1\}$\\ 
$\mathbb P_p$ && finite point-mass densities: $\mathbb P_p = \{\rho \in \mathbb P\,|\,\rho = \sum\limits_{k=1}^n a_k 
	\delta_{x_k}$\}\\ 
$\mathcal I$ && some very large closed interval: $\mathcal I = [0,\mathcal L]$\\
$\mathbb P^\mathcal{I}$ && densities in $\mathcal I$: $\mathbb P^\mathcal{I} = \{\rho \in \mathbb P\,|\, \supp \rho \subset \mathcal I\}$\\ 
$\mathbb P^\mathcal{I}_p$ && finite point-mass densities in $\mathcal I$: $\mathbb P^\mathcal{I}_p = \mathbb P_p \cap \mathbb 
	P^\mathcal{I}$\vspace{0.2cm}\\ 
$\chi_A$ && indicator function of the set $A$: $\chi_A(x) = \begin{cases}1,&x \in A\\0,&x\notin A\end{cases}$\\ 
$C^k_n$ && binomial coefficient $\dfrac{n!}{k!(n-k)!}$\\ 
$\langle \rho, \phi \rangle$ & & the action of the generalized function $\rho$ on the test function $\phi$\\ 
$\langle \rho, \phi(x) \rangle$ & & implicitly $\langle \rho(x),\phi(x)\rangle$, where $x$ is the internal variable\\ 
$\langle \rho, \phi(x,y) \rangle$ \vspace{0.2cm} & & implicitly $\langle \rho(y),\phi(x,y)\rangle$, where $y$ is internal and $x$ is external\\ 
$\langle \rho_x, \phi(y) \rangle$ \vspace{0.2cm} & & \parbox{8.5cm}{implicitly $\langle \rho_x(y),\phi(y)\rangle$, where $y$ is the internal variable 
	and $x$ is a parameter of the distribution family $\{\rho_x\}$}\\
$g(x)$ && a shortcut for $(1 - e^{-x})/x$\\ 
$\delta_a$ && $\delta$-function concentrated at $a$: $\langle \delta_a, \phi \rangle = \phi(a)$\\ 
$\supp\phi$ \vspace{0.2cm} && support of the function $\phi$: the closure of $\{x \in \mathbb R\,|\,\phi(x) \neq 0\}$\\
$\supp\rho$ \vspace{0.2cm} && \parbox{8.5cm}{support of the generalized function $\rho$: $\supp \rho = \mathbb R \setminus O_\rho$, where $O_\rho$ is 
	the largest open subset $O \subset \mathbb R$ such that $\rho|_O = 0$}\\
$\bigotimes\limits_k \rho_k$ \vspace{0.2cm}&& tensor product $\rho_1 \otimes \rho_2 \otimes \ldots$\\ 
$\rho^{\otimes n}$ && $n$-th tensorial power: $\underbrace{\rho\otimes\rho\otimes\ldots\otimes\rho}_{n\text{ times}}$\\ 
$\Conv\limits_k \rho_k$ \vspace{0.2cm}&& convolution product $\rho_1 * \rho_2 * \ldots$\\ 
$\rho^{\ast n}$ \vspace {0.2cm} && $n$-th convolution power: $\underbrace{\rho\ast\rho\ast\ldots\ast\rho}_{n\text{ times}}$\\ 
$f_\star$ \vspace{0.2cm} && \parbox{8cm}{pushforward of a generalized function by the map $f$ of the domain: 
$\langle f_\star\rho, \phi\rangle = \langle \rho, \phi \circ f\rangle$}\\ 
$\mathrm{Corr}(\rho_1,\rho_2)$ \vspace{0.2cm} && cross-correlation of densities $\rho_1$ and $\rho_2$ \\
$\rho$ \vspace{0.2cm} && \parbox{8.5cm}{probability density of the phenotypes (in the model description and application)}\\
$\sigma$ \vspace{0.2cm}&& \parbox{8.5cm}{probability density of the fitness in a compartmentalized population}\\
$\sigma_x$ \vspace{0.2cm} && probability density of the fitness conditioned on phenotype $x$\\
$\bar x$\vspace{0.2cm} && \parbox{8.5cm}{mean phenotypic trait:
	mathematical expectation of the function $x\mapsto x$
	with respect to the phenotype distribution, $\langle \rho,x\rangle$ (in the model description and application)}\\ 
$\overline{x^n}$\vspace{0.2cm} && \parbox{8.5cm}{the $n$-th moment of the phenotypic trait:
	mathematical expectation of the function $x\mapsto x^n$
	with respect to the phenotype distribution, $\langle \rho,x^n\rangle$ (in the model description and application)}\\ 
$\bar \varw$ \vspace{0.2cm}&& \parbox{8.5cm}{mean fitness of an individual in a compartmentalized population: $\langle \sigma,x\rangle$
	(in the model description and application)}\\
$\bar \varw_x$ \vspace{0.2cm}&& \parbox{8.5cm}{mean fitness of an individual with pheontype $x$ in a compartmentalized population:
	$\langle\sigma_x,y\rangle$ (in the model description and application)}\\
$\ch x$ \vspace{0.2cm}&& hyperbolic cosine of $x$: $\ch x = (e^x + e^{-x})/2$\\
$\lambda$ \vspace{0.2cm}&& \parbox{8cm}{Poisson parameter: the mean number of individuals per compartment}\\
$\wedge$, $\Rightarrow$, $\neg$ && logical conjunction, implication, and negation, respectively
\end{longtable}

\section{Model}
\label{model}

We consider a population of haploid individuals that is subject to a group selection without any selection inside the groups. The population reproduces 
at discrete moments of time, different generations do not overlap, and the phenotype is strictly inherited. The groups model the compartments of the 
\emph{in vitro} evolution experiments. We will refer to groups as \emph{compartments} in the following. The term \emph{compartment} is more appropriate 
than the term \emph{group} because we can freely talk about empty compartments. The number of compartments $M$ and the population size $N$ are assumed 
to be very large, so we consider the infinite population limit and all the stochasticity at the population level is ignored. Both numbers are the same 
in all generations. At each generation, the compartments are entirely reformed and randomly repopulated from the pool of descendants produced by the 
previous generation. Following the infinite population limit, we assume that the number of individuals in each compartment is given by the Poisson 
statistics with the Poisson parameter $\lambda$, the average number of individuals per compartment (understood as a limit of $N/M$). Each individual 
produces as many descendants as any other member of the same compartment. The number of descendants of an individual can be called the \emph{(local) 
fitness of the individual}. It is defined by the total phenotype $x_\mathrm{tot}$ of the compartment in the following way. Firstly, we assume the total 
phenotype of the compartment to be additive. More specifically, if there are $n$ individuals in the compartment with phenotypes $x_1$, $x_2$, \ldots, 
$x_n$, then $x_\mathrm{tot} = x_1 + \ldots + x_n$. Secondly, each of the individuals produces $f(x_\mathrm{tot})/n$ descendants. Function $f$ (we call 
it the \emph{selection function}) bears all the specific information on the selection process by defining the overall replication/reproduction activity 
in compartments. The division by $n$ manifests the sharing of that replication activity by all the members of the current compartment. In the simplest 
case of linear selection we have $f(x_\mathrm{tot}) = x_\mathrm{tot}$ and the fitness of an individual is equal to the average phenotype in the 
compartment it belongs to. The phenotypic distribution of the descendants is obtained by averaging the result of the reproduction process over all 
compartments. We explicitly assume that the phenotypic distribution of the compartmentalized population at the next generation is equal to the 
phenotypic distribution of the descendants of the current generation. The corresponding lifecycle is schematically depicted on Figure~\ref{fig-scheme}.

\begin{figure}[t!]
\centering
\includegraphics{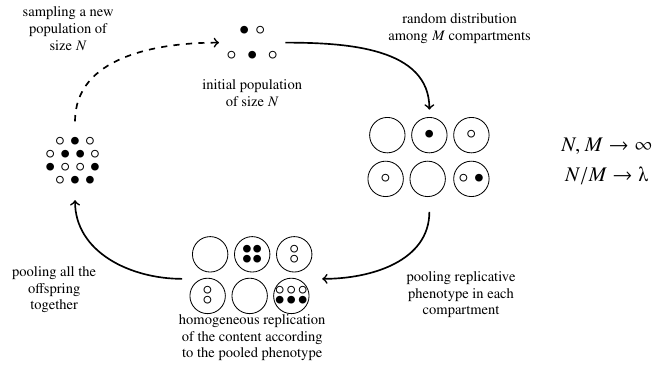}
\caption{
	The process of compartmentalized selection considered in the article. The initial population of size $N$ (here only two genotypes are shown, 
	$\circ$ and $\bullet$) is randomly encapsulated inside $M$ compartments. Some compartments contain multiple individuals with possibly different 
	genotypes. All individuals in a given compartment collectively contribute to the overall reproductive/replicative activity inside the compartment 
	by pooling together their reproductive/replicative phenotypes. The content of the compartment then reproduces such that the total number of new 
	individuals in the compartment depends on the collective phenotype \emph{via} the selection function $f$ (not shown, see the model description for 
	details). All individuals of the compartment equally contribute to this total progeny regardless of their genotype. Then the compartments are 
	broken and a new generation is formed by sampling from the pool of the offspring. Note that a genotype of a weaker phenotype ($\circ$) has an 
	opportunity to reproduce more efficiently if co-compartmentalized with a genotype of a stronger phenotype ($\bullet$). The article deals with the 
	deterministic infinite population model, when both $N \to \infty$ and $M \to \infty$ with the conservation of the average number of individuals per 
	compartment $\lambda = N/M$.
 }
\label{fig-scheme}
\end{figure}

One essential assumption that we will always imply is that the phenotypic distribution is compactly supported and the phenotypic values are 
non-negative. This assumption is technical. However, it is well justified by the application domain. Indeed, we are interested in the phenotype 
(related to the replication activity) as a random variable. The replication activity is non-negative and has some physical upper bound (the 
reproduction rate cannot be infinite nor can be the rate of an enzymatic reaction).

It should be noted that, unlike the classical haploid selection, a nonlinear compartmentalized selection in general cannot be reduced to the 
the linear selection by a simple nonlinear reparametrization of the phenotypic variable. The reason is that the phenotype additivity breaks with the 
reparametrization.

In the context of directed \emph{in vitro} evolution of enzymes in emulsions, the individuals would correspond to gene carrying constructs or other 
genetic vectors, the compartments would correspond to the droplets of the emulsion, while the phenotype would correspond, for example, to the enzymatic 
activity. In the context of structured population in a patched environment, one may think of the pool of descendants as of the dispersion phase that, 
on a new cycle, randomly recolonizes the same set of patches. In this case the compartments correspond to the ecological patches.

The philosophy of the article is the following. We first use a heuristic approach to derive the mathematical formulation of the problem. Then we study 
implications of this formulation in the full mathematical rigor.

\section{General theory of linear compartmentalized selection}
\label{selection-arbitrary}

In this section, we will consider only the linear selection function ($f(x) = x$). As it was written above, the linear selection corresponds to the 
case when the fitness of an individual is equal to the average phenotype in the compartment it belongs to. We will first develop a needed mathematical 
formalism. With this formalization, we will derive a general update formula that governs the dynamics of the phenotypic distribution. The update 
equation will be valid for any compactly supported phenotypic distribution, be it discrete, continuous, or more general than that. The equation will 
explicitly include $\lambda$ as a parameter and thus will allow to directly investigate the dependence of the selection dynamics on the degree of the 
co-compartmentalization. The approach and the notions used in this section will be generalized and reused in the following sections to treat more 
complicated cases of nonlinear selection functions and non-Poissonian compartmentalizations.

We will treat random variables using the formalism of generalized functions. We will use Sobolev's term \emph{generalized function} for a linear 
continuous functional on an appropriate space of test functions instead of Schwartz's term \emph{distribution} to avoid confusion with the somewhat 
ambiguous term \emph{probability distributions}. A random variable will be represented by its probability density, which in turn will be interpreted as 
a generalized function. This approach is equivalent to the standard treatment of random variables with the assumption of the weak convergence, where 
probability densities are interpreted as Radon measures. The emphasis is, however, put on the functional-theoretic aspect of the problem rather than on 
the measure-theoretic one. This point of view is more appropriate for the current analysis, as we are interested by computing average values of various 
functions rather than by finding probabilities of various events. Another advantage of the functional-theoretic approach is its intrinsic algebraicity, 
which strongly simplifies practical computations. See Appendix~\ref{generalized-functions} for a brief introduction to the subject.

Any physically realistic phenotypic distribution in a real population has only finite number of different values $x$ of the phenotype and, therefore, 
is represented by a point-mass probability density with finite number of $\delta$-functions. General probability densities enter the picture as 
approximation of these point-mass generalized functions when the number of points (phenotype classes) becomes unmanageably large. Therefore, it is 
natural to derive the update equation first on the subset of point-mass probability densities with finite number of points and then to extend it to all 
probability densities by continuity. We will denote $\mathbb P_p \subset \mathbb P$ the subspace of such point-mass densities, where $\mathbb P$ is the 
space of all probability densities. We want to derive an update equation in the form $\rho_{t+1} = A(\rho_t)$. To be able to extend the update operator 
$A$ to $\mathbb P$ or to some its subset by continuity from $\mathbb P_p$, we have to be sure that 1) $A$ is continuous on $\mathbb P_p$ with the 
subset topology and 2) that $\mathbb P_p$ is dense in $\mathbb P$. Unfortunately, such direct approach does not work because the operator $A$ 
constructed in the following appears not to be continuous on $\mathbb P_p$. However, an additional nonrestrictive assumption on the boundness of the 
possible phenotypic values removes this obstacle. The exact correct formulation of this assumption will be given in the end of this section.

Let us consider a random variable $\eta$ that describes the total phenotype of a compartment that contains $n$ individuals with phenotypes defined by 
$n$ random variables $\xi_1$, \ldots, $\xi_n$ with compactly supported densities $\rho_{\xi_k}$, so $\eta = \sum\limits_k \xi_k$. The joint density of 
the random ``vector'' defined by $\xi_k$ is given by $\rho_{(\xi_1\ldots\xi_n)} = \bigotimes\limits_k \rho_{\xi_k}$. Using the standard simplification 
due to the compactness of the supports of $\rho_k$, for any $\phi \in C_c$, by definition, we can express the action of $\rho_{(\xi_1\ldots\xi_n)}$ on 
$\phi\left(\sum\limits_k x_k\right)$ \emph{via} the convolution of the individual distributions
	\begin{equation}
	\langle \bigotimes_k\rho_{\xi_k}, \phi(x_1 + \ldots + x_n) \rangle = \langle \Conv_k \rho_{\xi_k}, \phi \rangle.
	\end{equation}

\noindent Therefore, $\rho_\eta = \Conv\limits_k\rho_{\xi_k}$. In case when $\forall k$ $\rho_{\xi_k} = \rho$, we have $\rho_\eta = \rho^{*n}$.

In the same way, we can compute the density of a random variable that corresponds to a per-individual fitness in a compartment with $n$ such 
individuals. This fitness is given by $\zeta = \eta/n$. Its density can be computed noting that it is a pushforward of $\rho_\eta$ with respect to the 
map $h_n\colon \mathbb R \to \mathbb R$, $x \mapsto x/n$, so
	\begin{equation}
	\rho_\zeta = (h_n)_\star \rho_\eta = (h_n)_\star \left(\Conv_k \rho_{\xi_k}\right).
	\end{equation}

\noindent Here one can understand $(h_n)_\star \rho\,(x)$ as $n\rho(nx)$ with a slight abuse of notations.

In the following we assume that all $\rho_{\xi_i}$ are the same and are equal to $\rho \in \mathbb P_p$, that is the phenotypic distribution in the 
population is characterized by the probability density $\rho$.

Let us find the density $\rho_{\zeta|k}$ of the per-individual fitness, given that the number of individuals in the compartment is $n$, and $k$ of 
$\xi_i$ assumed the value $x$ such that $\langle \rho, \chi_{\{x\}} \rangle = p > 0$, while the rest $n-k$ variables assumed any value different from 
$x$. This is equivalent to say, that first $k$ individuals are independently drawn from the distribution given by $\delta_x$, and the rest $n-k$ 
individuals are independently drawn from the distribution given by
	\begin{equation}
	\rho_{-x} \deq \frac{1}{1 - p}(\rho - p \delta_x).
	\label{rho-x}
	\end{equation}

\noindent We can immediately conclude that
	\begin{equation}
	\rho_{\eta|k} = \delta_x^{*k}*\rho_{-x}^{*n-k},
	\end{equation}

\noindent and then, as a consequence, the per-individual fitness has the probability density
	\begin{equation}
	\rho_{\zeta|k} = \left(n\delta_x^{*k}*\rho_{-x}^{*n-k}\right)(nx) = (h_n)_\star\left(\delta_x^{*k}*\rho_{-x}^{*n-k}\right)  \deq \sigma^x_{nk},
	\quad \text{where} \; h_n\colon x \mapsto x/n
	\label{sigma-n}
	\end{equation}

The next step is to find the density of fitness distribution of individuals with a given phenotype $x$ in the whole compartmentalized population, given 
that the phenotype of the initial library has density $\rho \in \mathbb P_p$. The density that we want to find describes the distribution of a local 
fitness (in a compartment) of an individual randomly chosen from all individuals with phenotype $x$. Let $P_n = e^{-\lambda} \lambda^n/n!$ be the 
probability to find a compartment with $n$ individuals (assuming the Poisson distribution with the mean number of individuals per compartments 
$\lambda$). As phenotype $x$ is present in macroscopic quantities in the population, the probability to find a compartment with $n$ individuals, $k$ of 
which are with phenotype $x$, and the rest have other values (an $nk$-class compartment), randomly drawing it from all compartments is given by
	\begin{equation}
	P_{nk} = \frac{e^{-\lambda}\lambda^n}{n!}C^k_n p^k (1-p)^{n-k} = P_nC^k_n p^k (1-p)^{n-k}, \quad \text{where} \; p = \langle \rho, \chi_{\{x\}}\rangle,
	\end{equation}

\noindent Each such compartment contains $k$ individuals with phenotype $x$, therefore the probability to find an individual that is encapsulated in an 
$nk$-class compartment randomly drawing it from the subpopulation of all individuals with phenotype $x$ is equal to
	\begin{equation}
	P^x_{nk} = \frac{k P_{nk}}{\sum\limits_{m,r} r P_{mr}}.
	\label{Pxnk}
	\end{equation}

The normalization constant in the denominator of (\ref{Pxnk}) is equal to
	\begin{multline}
	\sum_{n=1}^\infty\sum_{k=1}^n kP_{nk} = \sum_{n=1}^\infty \frac{e^{-\lambda}\lambda^n}{n!} \sum_{k=1}^n kC^k_n p^k (1-p)^{n-k} =
	\sum_{n=1}^\infty\frac{e^{-\lambda}\lambda^n}{n!}np = \lambda p.
	\end{multline}

In each $nk$-class compartment, the per-individual fitness density of each individual is given by $\sigma^x_{nk}$ from (\ref{sigma-n}), where 
$\rho_{-x}$ is defined by (\ref{rho-x}). Therefore, the fitness density $\sigma_x$ of an individual with phenotype $x$ randomly drawn from the whole 
compartmentalized population is given by
	\begin{multline}
	\sigma_x = \sum_{n=1}^\infty\sum_{k=1}^n P^x_{nk} \sigma^x_{nk} = \sum_{n=1}^\infty\sum_{k=1}^n\frac{kP_{nk}}{\lambda p} \sigma^x_{nk} =\\
	=\sum_{n=1}^\infty\frac{e^{-\lambda}\lambda^{n-1}}{n!}\sum_{k=1}^n k C^k_n p^{k-1}(1-p)^{n-k} 
	(h_n)_\star\left(\delta_x^{*k}*\rho_{-x}^{*n-k}\right)=\\
	= \sum_{n=0}^\infty\frac{e^{-\lambda}\lambda^n}{n!}\sum_{k=0}^n C^k_n p^k(1-p)^{n-k} 
	(h_{n+1})_\star\left(\delta_x^{*k+1}*\rho_{-x}^{*n-k}\right),
	\label{sigma-nk}
	\end{multline}

\noindent where again $h_n\colon x \mapsto x/n$.

To simplify this expression, we can use the linearity of $(h_i)_\star$, the facts that
	\begin{equation}
	\sum\limits_{k=0}^n C^k_n\, \rho_1^{*k}*\rho_2^{*n-k} = (\rho_1 + \rho_2)^{*n},
	\end{equation}

\noindent that $a(\rho_1 * \rho_2) = (a\rho_1)*\rho_2 = \rho_1*(a\rho_2)$, where $a$ is some number, and that $p\delta_x + (1-p)\rho_{-x} = \rho$. 
Finally, we obtain
	\begin{equation}
	\sigma_x = \sum_{n=0}^\infty\frac{e^{-\lambda}\lambda^n}{n!}(h_{n+1})_\star\left(\delta_x * \rho^{*n}\right).
	\label{sigma-x}
	\end{equation}

This formula remarkably depends neither on $p$, the probability to encounter the phenotype $x$ in the initial library, nor on $\rho_{-x}$. It is well 
defined even in the limit $p$ = 0. In fact, its structure resembles something expected for a continuous distribution, when the conditional probability 
to find another individual of phenotype $x$ in a compartment that already contains one is equal to 0.

In the same way we can compute the density of the fitness distribution for the whole population (without conditioning on the phenotype value):
	\begin{equation}
	\sigma = \sum_{n=0}^\infty \frac{e^{-\lambda}\lambda^n}{n!}(h_{n+1})_\star \rho^{*n+1}.
	\label{sigma}
	\end{equation}

The fact that $\sigma_x$ and $\sigma$ are indeed generalized functions, that is, that the generalized functional series used to define them (both 
(\ref{sigma-nk}) and (\ref{sigma-x}) for $\sigma_x$ and (\ref{sigma}) for $\sigma$) converge to some generalized functions, can be easily established 
using three facts: 1) the series have the form $\sum\limits_n a_n \rho_n$, where every $\rho_n$ is a compactly supported nonnegative generalized 
function that obeys $\langle \rho_n, 1 \rangle = 1$, 2) every $\phi \in C_c$ can be majorated by a constant in the sense that $\exists B_\phi \in 
\mathbb R_+$ $\forall x$ $|\phi(x)| \leqslant B_\phi$, and 3) the series $\sum\limits_n a_n$ converges absolutely. It is not difficult to see that 
these generalized functions are indeed probability densities with the support in the nonnegative semiaxis.

In the following, when the explicit dependence on time (understood as the population number) is required, we will denote it either as a subscript 
argument, like $\rho_t$, or with an argument in parentheses, like $\sigma_x(t)$. The latter means that we computed $\sigma_x$ in (\ref{sigma-x}) using 
the value of $\rho$ at generation $t$ (thus, using $\rho_t$). The same meaning will be implied for $\sigma(t)$ and for various expectations. We will 
omit the time argument when it is not important and no confusion is possible.

As the population is assumed to be infinite, the coefficient in front of each term in $\rho = \sum\limits_k p_k \delta_{x_k}$ deterministically changes 
according to $p_k(t+1) = p_k(t) \bar \varw_{x_k}(t) / \bar \varw (t)$ at any selection step, where $\bar \varw_{x_k}(t) = \langle \sigma_{x_k}(t) , x 
\rangle$ is the mean fitness of phenotypes $x_k$ at generation $t$ and $\bar \varw(t) = \langle \sigma(t), x\rangle$ is the mean fitness of the whole 
population at generation $t$.

The mean fitness in the linear selection case is simply given by
	\begin{equation}
	\bar \varw = \sum_{n=0}^\infty \frac{e^{\lambda}\lambda^n}{n!}\frac{(n+1)\langle \rho, x \rangle}{n+1} = 
	\langle \rho, x \rangle = \bar x,
	\label{mean-w}
	\end{equation}

\noindent so the population average fitness is exactly equal to the population average phenotype. This property is a specific feature of the phenotype 
additivity with the linear selection function $f(x) = x$ and with sharing of the reproduction/replicative activity in a compartment among the 
individuals in it. Indeed, under this condition, the local fitness of any individual in any compartment is exactly equal to the average phenotype in 
that compartment, and thus the result of (\ref{mean-w}) is intuitively expected.

The mean fitness of phenotypes $x_k$ is also easily computed
	\begin{equation}
	\bar \varw_{x_k} = \sum_{n=0}^\infty\frac{e^{-\lambda}\lambda^n}{n!}\frac{x_k + n\bar x}{n+1} = \bar x + g(\lambda)(x_k-\bar x).
	\label{xk-fitness}
	\end{equation}

\noindent Here the factor that depends on $\lambda$ is equal to
	\begin{equation}
	g(\lambda) \deq \frac{1 - e_{}^{-\lambda}}{\lambda}.
	\label{g-factor}
	\end{equation}

\noindent Function $g$ is monotonously decreasing with $g(0) = 1$ and $\lim\limits_{\lambda \to \infty} g(\lambda) = 0$. It is asymptotic to 
$1/\lambda$ at $\lambda \to +\infty$ (see Figure~\ref{g}).

\begin{figure}[t!]
\centering
\includegraphics{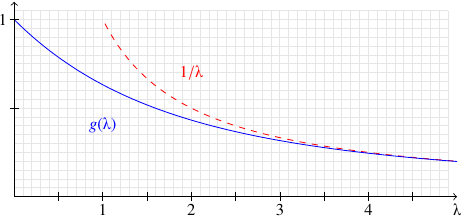}
\caption{The function $g(\lambda)$ and its asymptotic behaviour at large $\lambda$.}
\label{g}
\end{figure}

Note that the selection with the fitness given by (\ref{xk-fitness}) is frequency-dependent. The fitness of a phenotype does not depend only on 
properties of this phenotype but on properties of the whole population as total $\emph{via}$ the term with $\bar x$.

It follows that a given component $p_k \delta_{x_k}$ of $\rho$ after one round of selection changes to
	\begin{equation}
	p_k \delta_{x_k} \mapsto \left(1-g(\lambda) + g(\lambda)\frac{x_k}{\bar x}\right) p_k\delta_{x_k} = \left(1-g(\lambda) + g(\lambda)\frac{x}{\bar 
	x}\right) p_k\delta_{x_k}.
	\end{equation}

The update equation for the phenotypic distribution $\rho$ is then written as
	\begin{equation}
	\boxed{\rho_{t+1} = \left(1 - g(\lambda) + g(\lambda)\frac{x}{\bar x_t}\right) \rho_t,}
	\label{update-rho}
	\end{equation}

\noindent Where $x$ is the phenotypic value. Note that the influence of the Poisson compartmentalization comes only through the factor $g(\lambda)$ 
given by (\ref{g-factor}).

We can rewrite this update rule as $\rho_{t+1} = A(\rho_t)$. The update operator $A$ at the left-hand side is defined for any $\rho \in \mathbb P_p$ 
such that $\langle \rho , x\rangle \neq 0$. As was mentioned before, $A$ is not continuous even on $\mathbb P_p$ (see Appendix~\ref{app-continuity} 
Proposition~\ref{discontinuous}). However, as we demonstrate in Appendix~\ref{app-continuity}, for any closed interval $\mathcal I = [0,\mathcal L]$, 
this operator is continuous on the space of finite point-mass densities $\mathbb P^\mathcal{I}_p\setminus\{\delta_0\}$ with the support in $\mathcal I$ 
with the exclusion of the $\delta$-function concentrated at $x = 0$. We also demonstrate that $\mathbb P^\mathcal{I}_p$ is dense in $\mathbb 
P^\mathcal{I}$, the space of all probability densities with supports in $\mathcal I$. Therefore, $A$ can be extended by continuity to $\mathbb 
P^\mathcal{I}\setminus\{\delta_0\}$. Thus, any kind of general nonegatively and compactly supported probability density, that happens to well describe 
the library at hand, evolves according to (\ref{update-rho}). The operators that generate $\sigma_x$ in (\ref{sigma-x}) and $\sigma$ in (\ref{sigma}) 
are well defined and continuous on the whole $\mathbb P_p$. As $\mathbb P_p$ is dense in $\mathbb P$, they can be extended by continuity on all 
probability densities.

\section{Selection trajectory}
\label{trajectory}

In this section we will assume only the linear selection. The phenotypic density $\rho$ will be however considered to be any compactly supported 
probability density with $\supp \rho \subset \mathbb R_+$ and $\langle \rho , x \rangle \neq 0$.

Let the initial phenotypic distribution be given by $\rho_0$. We will consider its trajectory under the action of operator $A$ defined in the 
previous section. So, the update equation (\ref{update-rho}) is rewritten as $\rho_{t+1} = A(\rho_t)$ and we have $\rho_t = A^t(\rho_0)$.

Note that the action of $A$ on $\rho$ amounts to a multiplication of $\rho$ by the affine function with the slope $g(\lambda)/\langle \rho,x\rangle$ 
and with the intercept $1 - g(\lambda)$. Because of the dependence of parameters of this function on the current ${\bar x}_t$, (\ref{update-rho}) is 
not solvable in closed form for a generic $\rho_0$ and $\lambda > 0$ (see Figure~\ref{fig-selection}). Nevertheless, we can study some properties of a 
generic trajectory.

\begin{figure}[t!]
\centering
\includegraphics[scale=0.9]{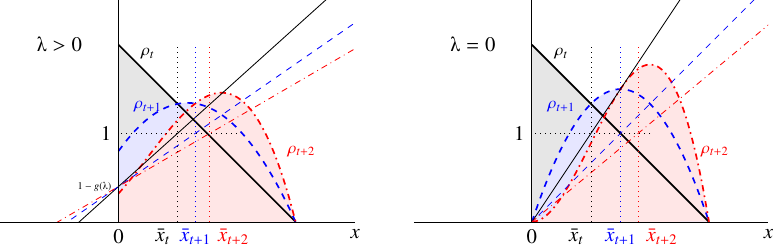}

\caption{The difference in the selection dynamics with $\lambda > 0$ and $\lambda = 0$. Phenotypic distribution densities $\rho_\tau$ at $\tau = t$, 
$\tau = t + 1$, and $\tau = t + 2$ are shown in thick solid line, thick dashed line, and thick dash-dotted line, respectively. Thin inclined straight 
lines are graphs of functions $1-g(\lambda) + g(\lambda)x/\bar x_\tau$, where $\tau = t$ (solid line), $t+1$ (dashed line), or $t+2$ (dash-dotted 
line). Note that the affine functions by which the operator $A$ multiplies the density at each step are different only by a rescaling for $\lambda = 
0$. In the case of $\lambda > 0$, these functions are different both by a rescaling and by a shift. This difference between the cases is responsible 
for the possibility to solve the case $\lambda = 0$ in closed form and for the lack of such solution for $\lambda > 0$. The initial density is taken to 
be $\rho_t = 2(1-x)\chi^{}_{[0,1]}$. On the left panel, $g(\lambda) = 0.6$, $\lambda \approx 1.15$.}

\label{fig-selection}
\end{figure}

\subsection{General properties of the trajectories}
\label{properties}

We will prove two generic properties.

\begin{property}
For any $\rho_0$ and any $t$ we have $\overline{x}_{t+1} - \overline{x}_t \geqslant 0$. In other words, the average phenotype does not 
decrease.
\label{property1}
\end{property}

\begin{proof}
To prove the first property it is enough to apply $\rho_{t+1}$ from (\ref{update-rho}) to $x$, which gives
	\begin{equation}
	\bar x_{t+1} = \bar x_t + g(\lambda)\frac{\overline{x^2}_t - \bar x_t^2}{\bar x_t}.
	\label{x-update}
	\end{equation}

\noindent The nonnegativeness of $\bar x_{t+1} - \bar x_t$ follows from the positiveness of $\bar x$ and the nonnegativeness of the variance 
$\overline{x^2}_t - \bar x_t^2$.  \end{proof}

Formula (\ref{x-update}) is reminiscent of Fisher's fundamental theorem of natural selection in Price's covariation form \cite{Price1970}. Indeed, if 
only the individual's phenotype mattered and was equal the individual's fitness, Price's formula for $\Delta \bar x_t \deq \bar x_{t+1} - \bar x_t$ 
would be written as
	\begin{equation}
	\Delta \bar x_t = \frac{\overline{x^2}_t - \bar x_t^2}{\bar x_t},
	\end{equation}

\noindent which is different from (\ref{x-update}) only by the factor $g(\lambda)$. Furthermore, any moment of the phenotypic distribution is updated 
according to
	\begin{equation}
	\Delta \overline{x^m}_t = g(\lambda)\frac{\overline{x^{m+1}}_t - \bar x_t \overline{x^m}_t}{\bar x_t},
	\label{xm}
	\end{equation}

\noindent which again transforms to Price's covariation formula for a deterministic phenotype-fitness relation in the limit $\lambda \to 0$.

\begin{property}
For any $\rho_0$ we have $\rho_t \to \delta_{x_0}$, as $t \to +\infty$, where $x_0 = \sup\supp\rho_0$. It 
means that the best mutant is always selected at infinite time.
\label{property2}
\end{property}

\noindent To prove the second property we first will prove an intuitive lemma.

\begin{lemma-main}
Let $\rho$ be a probability density with bounded support and the cardinality of $\supp \rho$ is greater than 1, then $x_i < \bar x < x_s$, where $x_i 
= \inf \supp \rho$ and $x_s = \sup \supp \rho$.
\end{lemma-main}

\begin{proof} 

Indeed, as $\rho$ is nonnegative, it is monotone in the following sense: $\forall \phi_1, \phi_2 \in C_c$ $\phi_1 \geqslant \phi_2$ $\Rightarrow$ 
$\langle \rho, \phi_1\rangle \geqslant \langle \rho, \phi_2\rangle$. To show that $\bar x \in [x_i, x_s]$, let us take test functions $\phi_i$, 
$\phi_x$, and $\phi_s$ such that $\phi_i|_{[x_i,x_s]} = x_i$, $\phi_x|_{[x_i,x_s]} = x$, and $\phi_s|_{[x_i,x_s]} = x_s$, and they are extended outside 
$[x_i,x_s]$ to respect $\phi_i \leqslant \phi_x \leqslant \phi_s$, which is always possible. Then, by monotonicity and by $\langle \rho, 1 \rangle = 
1$, we have $x_i \leqslant \langle \rho , x \rangle \leqslant x_s$.

Suppose that $\bar x = x_s$, so $\langle \rho,x \rangle = x_s$. Let $\phi \in C_c$ be a test function with the support in $(-\infty,x_s)$. Then there 
always exist a positive $a$ and a function $\phi_a \in C_{c+}$ such that $\phi_a|_{[x_i,x_s]} = a(x_s - x)$, $\phi_a(x) = 0$ for $x > x_s$, and 
$-\phi_a \leqslant \phi \leqslant \phi_a$. As $\langle \rho , \phi_a \rangle = 0$, it follows that $\langle \rho, \phi \rangle = 0$, and therefore, by 
definition $\supp \rho = \{x_s\}$, which contradicts the premise. In the same way we conclude that if $\bar x = x_i$, then $\supp \rho = \{x_i\}$. This 
proves the lemma. \end{proof}

\begin{proof}[of property \ref{property2}]
Solution of (\ref{update-rho}) at time $t$ can be written as
	\begin{equation}
	\rho_t = \rho_0\prod_{\tau = 0}^{t-1}\left(1 - g(\lambda) + g(\lambda)\frac{x}{\bar x_\tau}\right) \deq \rho_0 \Pi_t.
	\label{Pi}
	\end{equation}

\noindent For any $t$, $\rho_t$ is obtained from $\rho_0$ as its product with a positive (except possibly at 0, where it is zero for the special case 
$\lambda=0$) monotone continuous function $\Pi_t$. Therefore, $\supp \rho_t \subset \supp \rho_0$. More specifically, in most cases $\supp \rho_t = 
\supp \rho_0$. The only exception corresponds to $\lambda = 0$ and when $0 \in \supp \rho_0$ and it is not a limit point of the support. For this 
special case the following holds: $\forall t > 0$ $\supp \rho_t = \supp \rho_0 \setminus \{0\}$. Therefore, $\forall t$ $\sup \supp \rho_t = x_0$.

The monotonously increasing bounded sequence $\{\bar x_t\}$ has the limit $\bar x_\infty = \lim\limits_{t \to \infty}\bar x_t$ with $\bar x_\infty 
\leqslant x_0$. For any point $x_1$ such that $0 \leqslant x_1 < \bar x_\infty$, there exist $t_0$ such that for any $t \geqslant t_0$ we have 
$x_1/\bar x_t \leqslant x_1/\bar x_{t_0} < 1$. Let us denote $P_y^t(x) \deq (1 - g(\lambda) + g(\lambda)x/y)^t$. As $\Pi_t(x_1) \leqslant 
\Pi_{t_0}(x_1) P^{t-t_0}_{\bar x_{t_0}}(x_1)$ and $1 - g(\lambda) + g(\lambda)x_1/\bar x_{t_0} < 1$, we have $P^{t-t_0}_{\bar x_{t_0}}(x_1) \to 0$, and 
thus, $\Pi_t \to 0$ uniformly on $[0,x_1]$. It follows that $\rho_t|_{(-\infty,x_1)} \to 0$ with $t \to \infty$.

Suppose that $\bar x_\infty < x_0$. Then $\Pi_t \to \infty$ uniformly on any $(x_1,x_0] \subset [\bar x_\infty,x_0]$, and as $\supp \rho_t \cap 
(x_1,x_0] \neq \varnothing$, at large enough $t$ the relation $\langle \rho_t, 1 \rangle = \langle \rho_0, \Pi_t \rangle = 1$ is violated. Therefore, 
$\bar x_\infty = x_0$ and $\rho_t \to \delta_{x_0}$.
\end{proof}

\subsection{Solution in general and explicit solution for $\lambda = 0$}

For $\lambda > 0$, the solution of (\ref{update-rho}) is not representable with a simple closed formula. Nevertheless, it can be always computed by a 
simple recursion using (\ref{xm}) and (\ref{Pi}). We have $\rho_t = \Pi_t(x) \rho_0$, where $\Pi_0(x) = 1$, $\Pi_{t+1} = \left(1-g(\lambda) + 
g(\lambda)\dfrac{x}{\bar x_t}\right)\Pi_t$, and $\bar x_t$ is found by iterations $\overline{x^m}_t = (1 - g(\lambda))\overline{x^m}_{t-1} + 
g(\lambda)\dfrac{\overline{x^{m+1}}_{t-1}}{\bar x_{t-1}}$. As expected, $\Pi_t$ is a polynomial with coefficients that depend on all the moments of 
$\rho_0$ from $\bar x_0$ up to $\overline{x^{t+1}}_0$.

The case $\lambda = 0$, which effectively corresponds to the case when the number of compartments is much larger than the population size, can be 
solved in closed form. The idea of the following simple approach to solving (\ref{update-rho}) for this case was taken from \cite{Smerlak2017}. At 
$\lambda = 0$, $g(\lambda) = 1$ and the update equation (\ref{update-rho}) takes the simple form
	\begin{equation}
	\rho_{t+1} = \frac{x}{\bar x_t}\rho_t.
	\label{rho-equation-discr}
	\end{equation}

\noindent Note that, at every step, the density from the previous step is multiplied by $x$ and renormalized. Therefore, $\rho_t$ is proportional to 
$x^t\rho_0$ with some normalization constant, which is trivially reconstructed, and the solution is
	\begin{equation}
	\rho_t = \frac{\,x^t\,}{\overline{x^t}_0}\rho_0,\quad \bar x_t = \overline{x^{t+1}}_0/\overline{x^t}_0.
	\label{trajectory-discrete}
	\end{equation}

An interesting corollary of these formulas is the conclusion that the initial phenotypic probability density of the initial library can be 
reconstructed from the trajectory of $\bar x_t$ during the selection. Indeed, the following equality is a direct consequence of 
(\ref{trajectory-discrete}):
	\begin{equation}
	\overline{x^m}_0 = \prod_{i = 0}^{m-1} \bar x_i.
	\end{equation}

\noindent As the knowledge of all moments of $\rho_0$ allows to reconstruct the density itself, the phenotypic distribution in a library can be in 
principle obtained by tracking the mean population phenotype during selection process instead of direct measurement.

As an illustration, we will apply (\ref{trajectory-discrete}) to a library that contains only two classes of phenotypes, $x_1$ and $x_2$, and to a 
library that is described by a homogeneous distribution of phenotype on the interval $[x_1,x_2]$. In the former case, we have $\rho_0 = p\delta_{x_1} + 
(1-p)\delta_{x_2}$ and, thus, if we denote $\rho_t = p_t \delta_{x_1} + (1-p_t) \delta_{x_2}$, the solution is $p_t = \dfrac{p x_1^t}{p x_1^t + 
(1-p)x_2^t}$. In the latter case, $\rho_0 = \dfrac{1}{x_2 - x_1}\chi_{[x_1,x_2]}$ and, thus, $\rho_t = \dfrac{x^t(x_2^{t+1} - x_1^{t+1})}{(x_2 - x_1)^2 
(t+1)}\chi_{[x_1,x_2]}$.

\subsection{Exact solution for continuous time}

Unlike (\ref{update-rho}), its continuous time counterpart can be solved exactly for any value of $\lambda$. If we rewrite (\ref{update-rho}) as
	\begin{equation}
	\rho_{t+1} - \rho_t = g(\lambda)\left(\frac{x}{\bar x_t} - 1 \right)\rho_t,
	\end{equation}

\noindent and assume that at each selection step the changes are small (which is true, for example, for $\rho_0$ with small diameter of the support in 
comparison to $\bar x_0$, or for very large $\lambda$), then the selection dynamics can be approximated by
	\begin{equation}
	\frac{d\rho_t}{dt} = g(\lambda)\left(\frac{x}{\bar x_t} - 1\right)\rho_t.
	\label{rho-diff-full}
	\end{equation}

\noindent Here $\{\rho_t\}$ is understood as a $C^1$ one-parameter family of generalized functions and $d\rho_t/dt$ is understood as
	\begin{equation}
	\frac{d\rho_t}{dt} = \lim\limits_{\Delta t \to 0} \frac{\rho_{t + \Delta t} - \rho_t}{\Delta t}
	\end{equation}

\noindent in the topology of $C_c'$. We will prove the existence of solution of (\ref{rho-diff-full}) by construction.

Let us first assume that a solution of (\ref{rho-diff-full}) with given $\rho_0$ exists for any compactly supported phenotypic probability density 
$\rho_0$, $\supp \rho_0 \subset \mathbb R_+$. Then, for a given solution, we can assume $\bar x_t$ to be a given function of time that allows us to 
reparametrize $t$ in (\ref{rho-diff-full}) by the introduction a new independent variable $\tau$ such that
	\begin{equation}
	d\tau = g(\lambda)\frac{dt}{\bar x_t}.
	\label{t-tau}
	\end{equation}

\noindent This transforms (\ref{rho-diff-full}) to
	\begin{equation}
	\frac{d\rho_\tau}{d\tau} = (x - \bar x_\tau)\rho_\tau.
	\label{rho-diff-tau}
	\end{equation}

Let us introduce a new family of generalized functions $n_\tau$ that solves the following Cauchy problem
	\begin{equation}
	\frac{dn_\tau}{d\tau} = (x - 1)n_\tau,\quad n_0 = \rho_0.
	\label{n-diff}
	\end{equation}

One can look at $n$ as at the population density in the case when the normalization is not performed at the end of each 
selection cycle (the population is let to grow freely).

A solution of this problem corresponds to a solution of (\ref{rho-diff-tau}) by $\rho_\tau = \dfrac{n_\tau}{\langle n_\tau, 1 
\rangle}$. Indeed,
	\begin{equation}
	\frac{d\rho_\tau}{d\tau} = \frac{\dot n_\tau}{\langle n_\tau, 1 \rangle}
	- \frac{n_\tau \langle \dot n_\tau, 1 \rangle}{\langle n_\tau,1 \rangle^2} = (x - 1)\rho_\tau
	- \left(\frac{\langle xn_\tau,1\rangle}{\langle n_\tau,1\rangle} - 1\right)\rho_\tau = (x - \bar x_\tau) \rho_\tau,
	\end{equation}

\noindent where $\dot n_\tau = dn_\tau/d\tau$.

The solution of (\ref{n-diff}) exists, unique and is very easy to find. Namely, it is
	\begin{equation}
	n_\tau = e_{}^{x\tau - \tau} \rho_0.
	\end{equation}

\noindent The fact of its uniqueness can be established by passing to a Laplace transform of (\ref{n-diff}) with respect to $x$, which gives a linear 
PDE with constant coefficients, uniqueness of solution of which can be conventionally established by the method of characteristics and by the uniqueness 
and smooth dependence theorems for ODEs. A relevant condition here is the compactness of $\supp \rho_0$.

As $\langle n_\tau, 1 \rangle = \psi(\tau)e_{}^{-\tau}$, where $\psi(y) = \langle \rho_0, e_{}^{xy}\rangle$ is the so-called moment 
generating function of $\rho_0$ (it is its Laplace transform evaluated at $-\tau$), the corresponding solution of (\ref{rho-diff-tau}) is
	\begin{equation}
	\rho_\tau = \frac{e_{}^{x\tau}}{\psi(\tau)} \rho_0.
	\label{rho-tau-solution}
	\end{equation}

It follows that $\bar x_\tau = \psi'(\tau)/\psi(\tau)$, and thus, taking into account (\ref{t-tau}), the corresponding implicit solution of 
(\ref{rho-diff-full}) is given by
	\begin{equation}
	\rho_t = \frac{e_{}^{x\tau}}{\psi(\tau)} \rho_0, \quad \bar x_t = \frac{\psi'(\tau)}{\psi(\tau)}, 
	\quad t =\frac{\ln \psi(\tau)}{g(\lambda)}, \quad \psi(\tau) = \langle \rho_0, e_{}^{x\tau}\rangle.
	\end{equation}

This solution can be rewritten in an explicit form
	\begin{equation}
	\rho_t = e_{}^{x\psi^{-1}(e_{}^{g(\lambda) t}) - g(\lambda)t} \rho_0,\quad \bar x_t = e_{}^{-g(\lambda)t}\psi'(\psi^{-1}(e_{}^{g(\lambda)t})),
	\end{equation}

\noindent however, this explicit form is not practical, as for a generic case, the inverse of $\psi$ is impossible to compute explicitly. Even in a 
simple case of $\rho_0 = p\delta_{x_1} + (1-p)\delta_{x_2}$, for which $\psi(\tau) = p e_{}^{x_1 \tau} + (1-p)e_{}^{x_2 \tau}$, the inversion requires 
the solution of a transcendental functional equation.

The existence of a solution of (\ref{rho-diff-full}) follows from the explicit construction. It is also unique. Indeed, suppose that there are two 
solutions of (\ref{rho-diff-full}) $\rho^1_t$ and $\rho^2_t$ such that $\rho^1_0 = \rho^2_0 = \rho_0$. Then, with a solution dependent time rescaling, 
they both obey the same equation (\ref{rho-diff-tau}) with the same initial data but with possibly different $\tau$ (we denote them $\tau_1$ and 
$\tau_2$). The corresponding $n^i_{\tau_i}$ are uniquely constructed as $n^i_{\tau_i} = \psi(\tau_i)e_{}^{-\tau_i}\rho_0$, and it follows that the 
values of $t$ that correspond to $\tau_1$ and $\tau_2$ such that $\tau_1 = \tau_2$ are the same. Finally, we conclude that $\rho^1_t = \rho^2_t$.

The solution (\ref{rho-tau-solution}) of (\ref{rho-diff-tau}), along with the proof of its existence and uniqueness, was essentially obtained by 
different methods in \cite{Alfaro2014} for a class of regular probability densities (absolutely continuous distributions) and in \cite{Martin2016} for 
general densities. In particular, both works considered selection with mutations, while we left the question of mutations completely out of the scope. 
It is instructive to compare our explicit solution with the aforementioned results.

In \cite{Alfaro2014}, the exact solution was found for a more general ``replicator-mutator'' equation
	\begin{equation}
	\D_t u = (x - \bar x)u + a^2\D_{xx} u, \quad \bar x(t) \deq \intl{}{} xu(x,t)\,dx,
	\label{alfaro-eq}
	\end{equation}

\noindent where $u(x,t)$ is the phenotypic distribution and the Laplacian represents an approximation of mutation process. It is shown in this work 
that the exact solution to (\ref{alfaro-eq}) with the initial condition $u(x,0) = u_0(x)$ is given by
	\begin{equation}
	u(x,t) = \frac{\varv(x,t)}{1 + \intl{0}{t}\intl{}{} \xi \varv(\xi,\tau)\, d\xi d\tau}, \; \text{where }\varv\text{ solves}\; \D_t \varv = a^2\D_{xx} \varv 
	+ x\varv,\; \varv(x,0) = u_0(x).
	\end{equation}

\noindent The equation (\ref{rho-diff-tau}) with a regular $\rho_0$ corresponds to a degenerate case with $a = 0$. Then, $\varv$ is simply given by 
$\varv(x,t) = e^{xt}u_0(x)$ and $u(x,t) = e^{xt}u_0(x)/\intl{}{} e^{\xi t}u_0(\xi)\, d\xi$, which is identical to (\ref{rho-tau-solution}).

In \cite{Martin2016}, an approximate equation for the evolution of the cumulant generating function of the phenotypic distribution was studied for a 
wide class of mutation processes
	\begin{equation}
	\D_t C_t(\z) = \alpha(z)C_t'(\z) - C_t'(0) + \beta(\z),
	\label{martin-eq}
	\end{equation}

\noindent where $C_t(\z)$ is the expected cumulant distribution function under the deterministic approximation ($C(\z) = \ln \psi(\z)$), $\alpha$ and 
$\beta$ characterize the mutation process. This equation is equivalent to (\ref{rho-diff-tau}), if one assumes $\alpha(\z) = 1$, $\beta(\z) = 0$, and 
applies (\ref{rho-diff-tau}) on $e^{x\z}$ with the subsequent division by $\psi_t(\z)$ (note also that $C_t'(0) = \psi_t'(0) = \bar x_t$, as $\psi_t(0) = 
1$). It is demonstrated in \cite{Martin2016} that the solution of (\ref{martin-eq}) with the initial condition $C_0$ is given by
	\begin{equation}
	C_t(\z) = C_0\left(y\left(y^{-1}(\z) + t\right)\right) - C_0\left(y(t)\right) + \intL{0}{t}\left(\beta\left(y\left(y^{-1}(\z) + s\right)\right) - 
	\beta\left(y(s)\right)\right)\, ds,
	\end{equation}

\noindent where $y$ solves $y'(\z) = \alpha\left(y(\z)\right)$, $y(0) = 0$. In our case, $y(\z) = y^{-1}(\z) = \z$, and therefore $C_t(\z) = C_0(\z+t) - 
C_0(t)$, or $\psi_t(\z) = \psi_0(\z+t)/\psi_0(t)$, which is equivalent to (\ref{rho-tau-solution}).

\section{Nonlinear phenotype-fitness dependence}
\label{nonlinear}

In this section, we suppose that the local fitness of any individual inside a compartment with $n$ individuals is given by
	\begin{equation}
	h_n(x_1 + \ldots + x_n) = \frac{f(x_1 + \ldots +x_n)}{n},
	\label{h-nonadditive}
	\end{equation}

\noindent where the function $f$ associates the total number of new individuals generated in the compartment with the total phenotype in the 
compartment, and $x_i$ are phenotypes of the individuals in the compartment. In the following, we will call function $f$ a \emph{selection function}. 
The linear selection corresponds to the case $f = \mathrm{Id}_{\mathbb R}$.

If the dependence of the total in-compartment fitness on the total 
in-compartment phenotype is nonlinear, the update equation is still given by the formula
	\begin{equation}
	\rho_{t+1} = \dfrac{\langle\sigma_x(t),y\rangle}{\langle \sigma(t),y\rangle}\rho_t,
	\label{update-rho-nonlin}
	\end{equation}

\noindent where $\sigma_x$ and $\sigma$ are given by (\ref{sigma-x}) and (\ref{sigma}) with the only difference that $h_n$ is now taken to be $h_n(x) = 
f(x)/n $ as in~(\ref{h-nonadditive}). Its analysis, however, becomes much harder. The main problem is the inability to compute the evolution operator 
$A$ in closed form. Even in a case when each individual term in both $\langle \sigma_x , y \rangle$ and $\langle \sigma, y \rangle$ can be explicitly 
computed for any density $\rho$, it is difficult (and impossible in general) to find the explicit sum of these whole series. Another complication comes 
from the observation that although every term in these series is a compactly supported density, the support of the series themselves may happen to be 
unbound due to the expansion of the support of $(h_{n+1})_\star(\delta_x * \rho^{*n})$ and $(h_{n+1})_\star \rho^{*n+1}$ in case when $h_n$ grows 
sufficiently fast. This may lead to the divergence of the expectation for the fitness in the compartmentalized population. It is possible to show, 
however, that if $f$ is majorated by an exponential function $ae^{bx}$ (or more generally, by $a\ch bx$, which means that $f$ does not grow faster than 
some exponential from both sides), then not only are $\langle \sigma_x,y\rangle$ and $\langle \sigma,x\rangle$ well defined finite numbers, but also 
the maps $\rho \to \sigma_x$ and $\rho \to \sigma$ are continuous on $\mathbb P$ and the update operator $A$ is continuous on the set of probability 
densities with the support in some arbitrary large closed interval $\mathcal I$ (see Appendix~\ref{app-f-continuity}).

If the selection function $f$ is continuous, the update equation can be rewritten in different terms using the cross-correlation
	\begin{equation}
	\rho_{t+1} = \frac{1}{\mathcal N}\left(\sum_{n=0}^\infty \frac{e^{-\lambda}\lambda^n}{(n+1)!}\mathrm{Corr}(\rho_t^{*n},f)\right)\rho_t,
	\label{corr}
	\end{equation}

\noindent where $\mathcal N$ is the normalization constant and $\mathrm{Corr(\rho_1,\rho_2)}$ is the cross-correlation of densities $\rho_1$ and 
$\rho_2$. It is defined by its action on any function $\phi$ from $C_c$:
	\begin{equation}
	\langle\mathrm{Corr}(\rho_1, \rho_2), \phi\rangle = 
	\langle\rho_1(x_1),\langle\rho_2(x_2),\phi(x_2 - x_1)\rangle\rangle.
	\end{equation}

\noindent When one of $\rho_i$ is continuous, $\mathrm{Corr}(\rho_1,\rho_2)$ is continuous, too. Therefore, (\ref{corr}) is defined correctly for a 
continuous $f$. The only potential problem of this expression is when $n = 0$. We can define $\rho^{*0} = \delta_0$, which can be understood 
rigorously. Then $\mathrm{Corr}(\rho^{*0},f)(x) = \mathrm{Corr}(\delta_0,f)(x) = f(x)$.

If $\rho$ is also continuous, then one can use the classical formulas
	\begin{equation}
	\rho * \rho\,(x) = \intl{\mathbb{R}}{} \rho(y) \rho(x-y)\, dy,\quad \text{and}\quad
	\mathrm{Corr}(\rho,f)(x) = \intl{\mathbb{R}}{} \rho(y)f(x+y)\, dy.
	\end{equation}

We can also rewrite the update equation in a form that demonstrates its meaning more intuitively and transparently:
	\begin{equation}
	\rho_{t+1}(x) = \frac{1}{\mathcal N}\left(\sum_{n=0}^\infty \frac{e^{-\lambda}\lambda^n}{(n+1)!}\langle 
	f(x + x_1 + \ldots + x_n)\rangle^{}_{x_1,\ldots,x_n}\right)\rho_t(x),
	\end{equation}

\noindent where all $x_i$ are independently drawn from $\rho_t$ and we used an averaging notation common to physical and statistical literature.

In the following, we will consider two special cases that make it possible to compute $\rho_{t+1}$ based on $\rho_t$ in closed form, namely when $f$ is 
a polynomial and when $f$ is a linear combination of exponential functions.

\subsection{Polynomial $f$}
\label{polynomial}

Not only does this case allow to compute $\rho_{t+1}$ in closed form, but it also allows to derive the explicit expression for the update operator $A$, 
action of which is still a multiplication of $\rho_t$ by a polynomial function (coefficients of which nonlinearly depend on $\rho_t$).

Let $f(x) = a_0 + a_1 x + \ldots + a_m x^m$, $a_i \in \mathbb R$. Then every $h_n$ is a polynomial, too. Therefore, to compute $\langle \sigma_x, y 
\rangle$ and $\langle \sigma, y \rangle$, it is enough to find $(s^k_n)_x \deq \langle \delta_x*\rho^{*n}, y^k \rangle$ and $s^k_n \deq \langle 
\rho^{*n+1},y^k\rangle$ for any $k$, $0 \leqslant k \leqslant m$, and then to find the sums $\sum\limits_n\frac{P_n}{n+1}(s^k_n)_x$ and 
$\sum\limits_n\frac{P_n}{n+1}s^k_n$, where $P_n = e_{}^{-\lambda}\lambda^n/n!$. Indeed, for any density $\nu$ we have
	\begin{multline}
	\langle (h_{n+1})_\star \nu, x \rangle = 
	\frac{1}{n+1}\langle \nu, a_0 + a_1 x + \ldots + a_m x^m\rangle =\\=
	\frac{1}{n+1}(a_0 + a_1\langle \nu, x\rangle + \ldots + a_m\langle \nu,x^m\rangle).
	\end{multline}

The computation of the closed forms of $\langle \sigma_x, y\rangle$ and $\langle \sigma,y \rangle$, and therefore, of $A$, can in principle be
performed in an algorithmic way (see Appendix~\ref{app-polynomial}). As an example, we will treat the simplest nonlinear case $f\colon x \mapsto x^2$. 
There we have
	\begin{multline}
	\langle \delta_x* \rho^{*n}, y^2\rangle = \langle y^2(\delta_x*\rho^{*n}),1\rangle =
	\langle (y^2\delta_x)*\rho^{*n} + 2n(y\delta_x)*\rho^{*n-1}*(y\rho) + \\
	+n\delta_x*\rho^{*n-1}*(y^2\rho) + n(n-1)\delta_x*\rho^{*n-2}*(y\rho)^{*2},1\rangle =\\
	= x^2 + 2nx\bar x + n\overline{x^2} + n(n-1)\bar x^2.
	\end{multline}

\noindent Therefore
	\begin{equation}
	\langle \sigma_x,y\rangle = g_0x^2 + g_1\left(2x\bar x+ \overline{x^2}\right) + g_2 \bar x^2,
	\end{equation}

\noindent where $g_i = e_{}^{-\lambda}\lambda^i\dfrac{d}{d\lambda^i}\dfrac{e_{}^\lambda-1}{\lambda}$, so $g_0 = g(\lambda)$ (see 
Appendix~\ref{app-polynomial}). In the same way we find
	\begin{multline}
	\langle \rho^{*n+1}, y^2 \rangle = \langle y^2\rho^{*n+1},1\rangle = \langle
	(n+1)\rho^{*n}*(y^2\rho) + (n+1)n\rho^{*n-1}*(y\rho)^{*2}
	,1\rangle =\\=
	(n+1)\overline{x^2} + (n+1)n\bar x^2
	\end{multline}

\noindent and
	\begin{equation}
	\langle \sigma, y \rangle = \overline{x^2} + \lambda \bar x^2.
	\end{equation}

\noindent Finally, the update equation takes the form
	\begin{equation}
	\rho_{t+1} = \frac{g_0x^2 + g_1\left(2x\bar x_t+ \overline{x^2}_t\right) + g_2 \bar x_t^2}{\overline{x^2}_t + \lambda \bar x_t^2}\rho_t.
	\label{quadratic}
	\end{equation}

\subsection{$f$ as a linear combination of exponentials}

Another simple case is when the selection function $f$ is a linear combination of exponentials with constant coefficients, $f(x) = 
\sum\limits_{k=1}^m a_k e_{}^{b_k x}$. As in the case of a polynomial $f$, it is enough to compute $\langle \delta_x*\rho^{*n},e_{}^{by}\rangle$ and 
$\langle \rho^{*n+1},e_{}^{by}\rangle$ for an arbitrary $b$, which is, of course, a trivial problem. Indeed, $\langle 
\delta_x*\rho^{*n},e_{}^{by}\rangle = e_{}^{bx}\psi(b)^n$ and $\langle \rho^{*n+1},e_{}^{by}\rangle = \psi(b)^{n+1}$, where, as before, $\psi(s) \deq 
\langle \rho, e^{sy}\rangle$ is the moment generating function. The relevant sums are easily computable, too:
	\begin{eqnarray}
	\displaystyle
	\sum_{n=0}^\infty \frac{e_{}^{-\lambda}\lambda^n}{(n+1)!}e_{}^{bx}\psi(b)^n = \frac{e_{}^{\lambda \psi(b)} - 1}{\lambda \psi(b) 
	e_{}^{\lambda}}e_{}^{bx},\\
	\sum_{n=0}^\infty \frac{e_{}^{-\lambda}\lambda^n}{(n+1)!}\psi(b)^{n+1} = \frac{e_{}^{\lambda\psi(b)}-1}{\lambda e_{}^{\lambda}}.
	\end{eqnarray}
Therefore, the update equation takes the form
	\begin{equation}
	\rho_{t+1} = \frac{\displaystyle \sum\limits_{k=1}^m a_k \frac{e_{}^{\lambda\psi_t(b_k)}-1}{\psi_t(b_k)} e_{}^{b_k x}}{\displaystyle 
	\sum\limits_{k=1}^m a_k (e_{}^{\lambda\psi_t(b_k)}-1)}\,\rho_t.
	\label{exponential}
	\end{equation}

Interestingly, when $f(x) = e_{}^{bx}$, the selection dynamics does not depend on $\lambda$, as in this case $\rho_t = 
\dfrac{e_{}^{btx}}{\psi_0(bt)}\rho_0$.\\

Finally, the form of $f$ that has both polynomials and linear combinations of exponential with constant coefficients as its special cases is a linear 
combination of exponentials with polynomial coefficients
	\begin{equation}
	f(x) = \sum_{k=1}^m p_k(x) e_{}^{b_k x}, \quad \text{where}\quad p_k(x) = \sum_{j=0}^{n_k} a_{jk} x^j.
	\end{equation}

\noindent Using a combinations of the approaches considered in the current and the preceding sections, one can obtain the update equation for this 
function in closed form (see Appendix~\ref{app-polynomial}).

\subsection{Any continuous exponentially majorated $f$}

In principle, the action of $\rho$ can be extended on complex-valued test functions with values in $\mathbb C$ by a literal repetition of all the 
constructions outlined in Appendix~\ref{generalized-functions}. This gives a hope to use the result of the previous section to approximate an arbitrary 
$f$ by its truncated Fourier series on some large enough interval. As any such approximation is a sum of exponentials, the update equation for them can 
be written in closed form. In this case, all $b_k$ (in the notation of the previous section) are imaginary and $\psi(b_k t) = \phi(ib_k t)$, where 
$\phi$ is the characteristic function of $\rho$. However, one cannot, in general, select an interval once and then build approximations to any 
precision. The problem comes from the growing, in general, support of the growing convolution powers $\rho^{*n}$ in $\sigma_x$.

We demonstrate in Appendix~\ref{app-fourier} that for any continuous exponentially bound $f$ and for any compactly supported density $\rho$ with 
$\langle \sigma^f_x, y\rangle \neq 0$ there is a sequence of (possibly ever growing) intervals and trigonometric approximations $p_k$ of $f$ on these 
intervals based on the Fourier series that generates $\sigma^{p_k}_x$ and $\sigma^{p_k}$ such that $A^{p_k}(\rho) \to A^f(\rho)$, $\sigma^{p_k}_x \to 
\sigma^f_x$, and $\sigma^{p_k} \to \sigma^f$, where $\sigma^F_x$, $\sigma^F$, and $A^F$ denote the corresponding $\sigma_x$, $\sigma$, and $A$ 
generated by a selection function $F$ and the density $\rho$.

\section{Generalization to a non-Poissonian distribution in compartments, to nonadditive phenotype, and to multi-trait phenotype cases}
\label{nonadditive}

In this section, we do not aim for the proof of existence, continuity, etc of the relevant operators and operations or for the study of the generality 
of the results. We will just point out a possible generalization of the framework developed so far.

\subsection{Non-Poissonian distribution}
\label{non-poissonian}

It is possible to generalize $\sigma_x$ and $\sigma$, and thus the update operator, to an arbitrary repartition of individuals in the compartments. A 
non-Poissonian distributions can arise, for example, if bacteria are used as an intermediate vehicle for a genome and its product proteins before the 
compartmentalization. If the bacteria have a tendency to stick to each other it would in this case disturb the Poisson distribution of the bacteria in 
the compartments. In any case, this deviation is expressed in the fact that the probability to find a compartment with $n$ individuals $P_n$ is 
different from $e_{}^{-\lambda}\lambda^n/n!$. Following the same path as in Section \ref{selection-arbitrary}, it can be demonstrated that for 
arbitrary $P_n$ we have
	\begin{equation}
	\sigma_x = \frac{\displaystyle\sum_{n=1}^\infty n P_n (h_n)_\star (\delta_x * \rho^{*n-1})}{\displaystyle\sum_{n=1}^\infty n P_n},\quad\quad
	\sigma = \frac{\displaystyle\sum_{n=1}^\infty n P_n (h_n)_\star \rho^{*n}}{\displaystyle\sum_{n=1}^\infty n P_n}.
	\end{equation}

Of course, it is enough to know only $\sigma_x$ and $\langle \sigma,x \rangle$ can always be computed as $\langle \langle \sigma_x, y \rangle \rho, 1 
\rangle$. However, as it is seen from practical examples, a separate computation of $\sigma$ can be simpler. In addition, $\sigma$ has its own physical 
significance.

The update equation is, as before, $\rho_{t+1} = \dfrac{\langle\sigma_x(t),y\rangle}{\langle\sigma(t),y\rangle}\rho_t$. The cross-correlation form of 
it is represented by
	\begin{equation}
	\rho_{t+1} = \frac{1}{\mathcal N}\left(\sum_{n=1}^\infty P_n\mathrm{Corr}(\rho_t^{*n-1},f)\right)\rho_t,
	\end{equation}

\noindent where $\mathcal N$ is the normalization constant.

As before, we provide a form of this equation written in conventional notations of physical and applied statistical literature:
	\begin{equation}
	\rho_{t+1}(x) = \frac{1}{\mathcal N}\left(\sum_{n=1}^\infty P_n\langle f(x + x_1 + \ldots + 
	x_{n-1})\rangle^{}_{x_1,\ldots,x_{n-1}}\right)\rho_t(x),
	\end{equation}

\noindent where all $x_i$ are independently drawn from $\rho_t$.

\subsection{Nonadditive phenotype}

The assumption of the additivity of the phenotype, although reasonable in some cases, is far from being universal. For example, if the fitness value of 
a compartment is defined by some enzymatic kinetics, say, Michaelis-Menten reaction, the effective kinetic parameters may not be additive. If the 
fitness is defined by the total enzymatic reaction rate (encoded by the genomes of individuals) under assumption of the excess of the substrate, then we 
have the additivity. However, if enzymes belonging to different phenotypes have different affinity to the substrate and the substrate is not in excess, 
the additivity is lost.

The most general way to describe a nonadditive phenotype is to declare what happens to the local fitness of individuals in a compartment when different 
number of individuals with different phenotypes are enclosed together. This implies a definition of a family of selection functions $\{f_n\}$, $f_n 
\colon \mathbb R^n \to \mathbb R$, so if a compartment contains $n$ individuals with individual phenotypes $x_1$, \ldots, $x_n$, then the 
per-individual fitness in the compartment is given by $f_n(x_1,\ldots, x_n)$. When the local fitness of individuals is defined by the total phenotype 
of their compartment, functions $f_n$ have the form of compositions $f_n = h_n \circ f \circ c_n$ of the sharing function $h_n \colon x \mapsto x/n$, 
of the selection function $f\colon \mathbb R \to \mathbb R$ that maps the total phenotype in the compartment to the total number of offspring, and of 
the ``combination'' functions $c_n \colon \mathbb R^n \to \mathbb R$ that define the total phenotype from the phenotypes of individuals. Even more 
general case, when $f_n$ do not have the structure of the composition, corresponds to situations, when the very notion of the total phenotype is 
inapplicable.

All functions $f_n$ are naturally symmetric in the sense that if we denote by $\varpi$ a permutation of the set $\{1,\ldots,n\}$, then
	\begin{equation}
	\forall \varpi\quad f_n(x_{\varpi(1)},\ldots,x_{\varpi(n)}) = f_n(x_1,\ldots,x_n).
	\end{equation}

\noindent This symmetry comes from the fact that all individuals in a compartment are equivalent and it does not matter which one we consider to be the 
first one, which one to be the second one, and so on. If the compartments had some internal structure, or the total phenotype depended on the order of 
entry of individual (think of the order of infection of the same bacterium by multiple phages), then the symmetry would be lost.

Let us first consider $\rho \in \mathbb P_p$. Then the probability density of the local fitness of individuals with phenotype $x$ in an $nk$-class 
compartment, analogously to (\ref{sigma-n}), is given by
	\begin{equation}
	\sigma^x_{nk} = (f_n)_\star \left( \delta_x^{\otimes k} \otimes \rho_{-x}^{\otimes n-k} \right),
	\end{equation}

\noindent with the same notations as in Section~\ref{selection-arbitrary}. Here $(f_n)_\star \colon C_c'(\mathbb R^n) \to C_c'(\mathbb R)$ is the 
pushforward generated by $f_n$.

Note that for any $\rho_1$, \ldots, $\rho_n$ and any permutation $\varpi$ we have
	\begin{equation}
	(f_n)_\star(\rho_{\varpi(1)} \otimes \ldots \otimes \rho_{\varpi(n)}) = (f_n)_\star(\rho_1 \otimes \ldots \otimes \rho_n).
	\end{equation}

Indeed, for any $\phi \in C_c(\mathbb R)$ we have 
	\begin{multline}
	\langle (f_n)_\star(\rho_1 \otimes \ldots \otimes \rho_n) , \phi \rangle =
	\langle \rho_1(x_1) \otimes \ldots \otimes \rho_n(x_n) , \phi\big(f_n(x_1,\ldots,x_n)\big) \rangle = \\=
	\langle \rho_1(x_1), \langle \ldots, \langle \rho_n(x_n) , 
	\phi\big(f_n(x_{\varpi(1)},\ldots,x_{\varpi(n)})\big) \rangle \ldots \rangle \rangle = \\=
	\langle \rho_{\varpi(1)}(x_{\varpi(1)}), \langle  \ldots, \langle \rho_{\varpi(n)}(x_{\varpi(n)}) , 
	\phi\big(f_n(x_{\varpi(1)},\ldots,x_{\varpi(n)})\big) \rangle\ldots\rangle\rangle = \\=
	\langle (f_n)_\star(\rho_{\varpi(1)} \otimes \ldots \otimes \rho_{\varpi(n)}) , \phi \rangle
	\end{multline}

Because of this and because $(f_n)_\star$ is a linear operator, the following binomial identity is valid
	\begin{equation}
	\sum_{k=0}^n C^k_n p^k (1 - p)^{n-k} (f_n)_\star \left( \delta_x^{\otimes k} \otimes \rho_{-x}^{\otimes n-k} \right) = 
	(f_n)_\star \rho^{\otimes n}.
	\end{equation}

\noindent Here we identify $\rho^{\otimes 0}$ with $1 \in \mathbb R$, $\rho \otimes 1$ and $1 \otimes \rho$ with $\rho$.

Following the same reasoning as in Section~\ref{selection-arbitrary} and in Section~\ref{non-poissonian}, we can conclude that the fitness 
densities $\sigma_x$ and $\sigma$ are given by
	\begin{equation}
	\sigma_x = \frac{\displaystyle\sum_{n=1}^\infty n P_n (f_n)_\star (\delta_x \otimes \rho^{\otimes n-1})}{\displaystyle\sum_{n=1}^\infty n 
	P_n},
	\quad\quad
	\sigma = \frac{\displaystyle\sum_{n=1}^\infty n P_n (f_n)_\star \rho^{\otimes n}}{\displaystyle\sum_{n=1}^\infty n P_n},
	\end{equation}

\noindent or, in the case of Poissonian $P_n = e^{-\lambda} \lambda^n/n!$,
	\begin{equation}
	\sigma_x = \sum_{n=0}^\infty \frac{e^{-\lambda}\lambda^n}{n!} (f_{n+1})_\star (\delta_x \otimes \rho^{\otimes n}),
	\quad\quad
	\sigma = \sum_{n=0}^\infty \frac{e^{-\lambda}\lambda^n}{n!} (f_{n+1})_\star \rho^{\otimes n+1}.
	\end{equation}

These expressions can be extended by continuity to all $\rho \in \mathbb P$. The proof of continuity is analogous to the proof of 
Theorem~\ref{sigma-continuous} in Appendix~\ref{app-continuity}. The only essential additional fact to be used is the continuity of the tensor product. 
The continuity of the update operator is a more delicate issue and we will not study it here.

As before, we provide a conventional intuitively transparent form of the corresponding update equation for a general $P_n$ (all $x_i$ are independently 
drawn from $\rho_t$):	
	\begin{equation}
	\rho_{t+1}(x) = \frac{1}{\mathcal N}\left(\sum_{n=1}^\infty n P_n\langle f_n(x,x_1,\ldots,x_{n-1})\rangle^{}_{x_1,\ldots,x_{n-1}}\right)\rho_t(x).
	\end{equation}

\subsection{Multiple traits}
\label{multitrait}

In the same way we can consider not only a single trait phenotype $x$, but also multiple traits $x_1$, \ldots, $x_m$. We can organize them in a tuple 
$\xi = (x_1,\ldots,x_m) \in \mathbb R^m \deq \mathbb X$. The distribution of the traits at the beginning of each selection cycle is now given by a 
generalized function from $\mathbb P(\mathbb X) \subset C_c'(\mathbb X)$.

In the classical case, when no co-compartmentalization occurs, the selection process is ignorant about how exactly the fitness is defined by the 
underlying traits. Therefore, we could abstract from these traits once by using a pushforward of the trait distribution with the selection function. If 
the fitness is given by the selection function on the trait space $f\colon \mathbb X \to \mathbb R$ and the initial trait distribution is given by 
$\rho \in \mathbb P(\mathbb X)$, then the initial fitness distribution is given by $\tilde\rho = f_\star \rho \in \mathbb P(\mathbb R)$. As the 
resulting fitness distribution after $t$ cycles of selection is given by the initial fitness distribution multiplied by some continuous function of the 
fitness values, namely by $\tilde \rho_t = \dfrac{x^t}{\langle \tilde\rho, x^t \rangle}\tilde\rho$ (see (\ref{trajectory-discrete})), the final trait 
distribution could be reconstructed as the initial trait distribution multiplied by the pullback of this function by the selection function. Indeed, 
the pullback of any $\phi \in C_c(\mathbb R)$ to $C_c(\mathbb X)$ by $f$ is defined by $f^\star \phi \deq \phi \circ f \in C_c(\mathbb X)$. Using the 
identity $\phi (f_\star \rho) = f_\star\big((f^\star \phi)\rho\big)$, which is easy to verify, one recovers $\rho_t = \dfrac{f(\xi)^t}{\langle \rho, 
f(\xi)^t\rangle}\rho$. This expression goes well with the intuition (see the derivation of (\ref{trajectory-discrete})).

Unfortunately, the co-compartmentalization complicates this picture. As previously, we define the rules of how the phenotypic parameters (the traits) 
are related to the local fitness of individuals by a family of functions $\{f_n\}$ on Cartesian powers of the trait space $f_n \colon \mathbb X^n \to 
\mathbb R$. Functions $f_n$ assign the per-individual fitness $f_n(\xi_1,\ldots,\xi_n)$ to a compartment with $n$ individuals with phenotypes $\xi_1$, 
\ldots, $\xi_n$. In a case of a non-additive but still well defined total phenotypic traits in a compartment, each function $f_n$ is in fact a 
composition $f_n = h_n \circ f \circ c_n$ of the sharing functions $h_n \colon x \mapsto x/n$, the selection function $f \colon \mathbb X \to \mathbb 
R$ that operates on the combined phenotype of the compartment, and of the functions $c_n \colon \mathbb X^n \to \mathbb X$ that define how individual 
phenotypic traits are combined when multiple individuals are mixed in one compartment. In a more general case $f_n$ do not have this structure.
 
Analogously to the single trait case, we can start from $\mathbb P_p(\mathbb X)$ and obtain the expressions
	\begin{equation}
	\sigma_\xi = \frac{\displaystyle\sum_{n=1}^\infty n P_n (f_n)_\star (\delta_\xi \otimes \rho^{\otimes n-1})}{\displaystyle\sum_{n=1}^\infty n 
	P_n},
	\quad\quad
	\sigma = \frac{\displaystyle\sum_{n=1}^\infty n P_n (f_n)_\star \rho^{\otimes n}}{\displaystyle\sum_{n=1}^\infty n P_n},
	\end{equation}

\noindent or, in the case of Poissonian $P_n = e^{-\lambda}\lambda^n/n!$,
	\begin{equation}
	\sigma_\xi = \sum_{n=0}^\infty \frac{e^{-\lambda}\lambda^n}{n!} (f_{n+1})_\star (\delta_\xi \otimes \rho^{\otimes n}),
	\quad\quad
	\sigma = \sum_{n=0}^\infty \frac{e^{-\lambda}\lambda^n}{n!} (f_{n+1})_\star \rho^{\otimes n+1}.
	\end{equation}

These formulas are again extended by continuity to all $\rho \in \mathbb P(\mathbb X)$. Note that $\sigma_\xi$ and $\sigma$ are probability densities 
on the space of the per-individual fitness $\sigma_\xi, \sigma \in \mathbb P(\mathbb R)$. The update operator $A \colon \mathbb P(\mathbb X) \to 
\mathbb P(\mathbb X)$ is given by $A(\rho) = \dfrac{\langle \sigma_\xi, x \rangle}{\langle \sigma, x \rangle}\rho$. We will discuss neither its domain 
nor the condition of its continuity.

\section{Numerical simulations}
\label{numerical}

\subsection{Linear selection with large number of compartments and large population}

\begin{figure}[t!]
\centering
\includegraphics[scale=0.3]{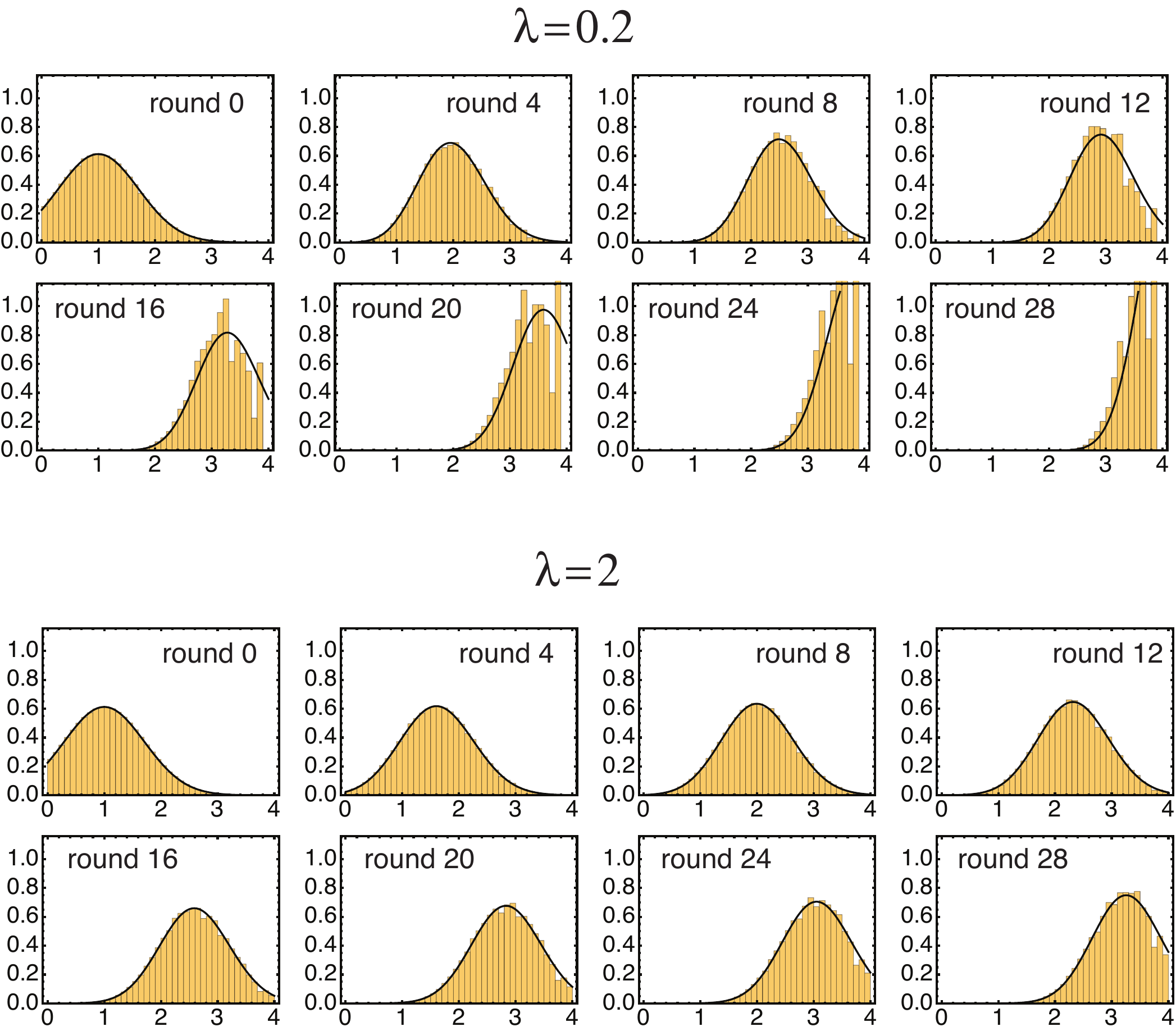}

\caption{Numerical simulations and their comparison to the theory. Individual trajectories are shown for two different values of $\lambda$. The 
phenotypic distribution in the population at each round of selection is shown by the histogram. The theoretical prediction obtained by iterating 
equation~(\ref{update-rho}) is shown as the solid line. The initial phenotypic distribution in each case is $\rho(x) = 
\frac{1}{\mathcal N}e^{-(x-1)^2}\chi_{[0,4]}(x)$, where $\mathcal N$ is the normalization constant. The horizontal axis represents the phenotypic value.}

\label{simulations} 
\end{figure}

To test the prediction given by the update equation (\ref{update-rho}) for the additive linear case, we carried out numerical simulations of the 
compartmentalized selection. The simulations were performed with Wolfram Mathematica. The corresponding notebook is provided in Online Resource. The 
number of compartments was fixed at $10^6$. An initial set of $2\cdot10^5$ or $2\cdot10^6$ phenotypes (depending on $\lambda$) was then drawn from a 
Gaussian distribution (centered at 1, with variance $1/2$) on the interval $[0,4]$.

One generation is implemented using the following loop, which is then repeated $n$ times.
\begin{itemize}

\item Each value from the set is randomly assigned to a compartment.

\item Each compartment is given the local fitness as the mean of the encapsulated phenotypes (with rounding, when needed).

\item An updated weight for each phenotype value is obtained by summing the local fitness of each compartment in which it was present (taking into 
account the multiplicity of that phenotype in each compartment, i.e. having in the end the number of offspring it was able to create overall).

\item A new set of $2\cdot10^5$ or $2\cdot10^6$ phenotype values (depending on $\lambda$) is drawn randomly from the list of the present phenotypes, 
using the updated weight list.

\end{itemize}

The result and its comparison to the theoretical prediction are shown on Figure~\ref{simulations}. One can see that the agreement is very good.

\subsection{The limit of small number of compartments and small population}

Although the predictions of the deterministic theory agree well with the individual-based simulation for reasonably large populations (populations of 
$10^5$--$10^6$ or more are typical for viral infections and for directed evolution experiments in emulsions), we also investigated their applicability 
to small populations and to small number of compartments. In a standard selection scenario (like the Fisher-Wright process), the only relevant effect 
observed in small populations is the genetic drift---a stochastic deviation from the deterministic selection due to random sampling effect in the 
construction of a new generation of a finite number of individuals. The compartmentalization adds another source of stochasticity, and thus it is 
interesting to compare these two cases. This additional randomness comes from a probabilistic connection between the phenotype of an individual and its 
fitness. Different realizations of the finite population packing into compartments result in different fitness assignments to individuals. Additional 
complication comes from the fact that this fitness assignment is not independent for different individuals. This implies a possible difference between 
the stochastic selection in a finite population with and without compartmentalization. As the formal treatment of the compartmentalized case in a 
finite population is very difficult and deserves a separate investigation, we compared the two cases by numerical simulations reducing both the 
population size $N$ (relevant for both cases) and the number of compartments $M$ (which may give a nonnegligible effect).

\begin{figure}
\centering
\includegraphics[scale=0.45]{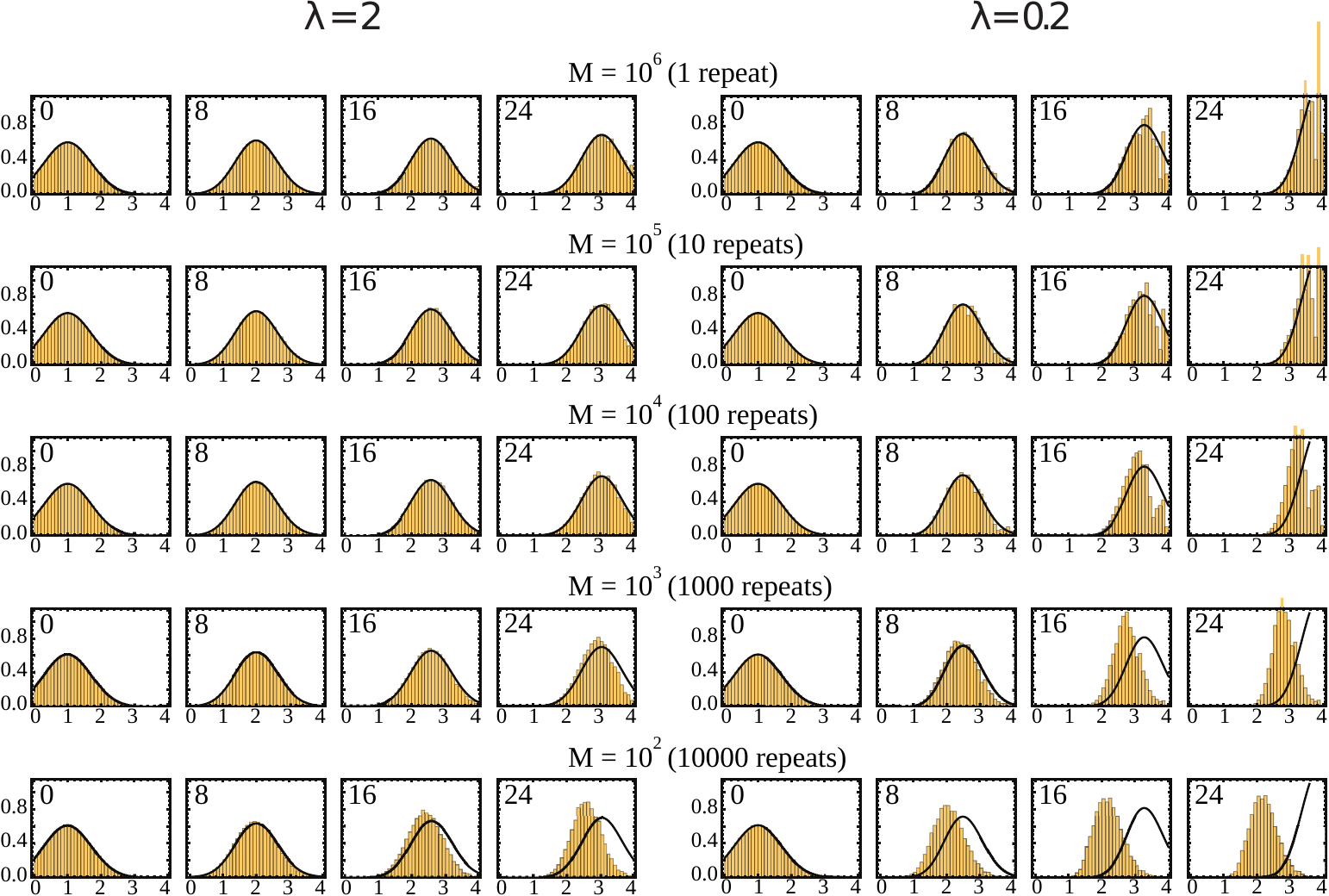}
\caption{
	The same numerical experiments (the same initial distribution, the same selection algorithm) as on Figure~\ref{simulations} performed with 
	different number of compartments $M$ and for two values of $\lambda$ for each value of $M$. The ensemble average phenotypic distributions are shown 
	for each indicated number of repeats. The black line shows the theoretical prediction by equation~(\ref{update-rho}). Numbers in boxes indicate 
	rounds of selection. The horizontal axis represents the phenotipic value.
}
\label{simulationsMN-all}
\end{figure}

Individual realizations of simulations become more and more random as $N$ decreases, both because of strong genetic drift of phenotypes with low 
frequencies and because of randomness in the initial drawing from the continuous distribution. This is especially important for the high-value tail of 
the initial phenotypic distribution. Indeed, the irregularities due to the randomness are already seen on Figure~\ref{simulations}. And as the high 
phenotype value front of the distribution is crucially important for the result of long-term selection, effectively defining the final state, it makes 
more sense to look at the average of an ensemble of trajectories instead of individual trajectories, when the stochasticity becomes too strong. 
Figure~\ref{simulationsMN-all} shows the ensemble average of individual-based simulations (phenotypic distributions are averaged for each time frame) 
for various $N$ and $\lambda$ (and thus for various $M$, as $\lambda = N/M$) for the same simulation algorithm and for the first generation drawn from 
the same phenotypic distribution, as in the previous section. To keep the statistics comparable, for each $M$ we ran simulations $m$ times to keep 
$mM = 10^6$ constant for values of $M$ $10^6$, $10^5$, \ldots, $10^2$. As before, we used $\lambda = 2$ and $\lambda = 0.2$ for every given $M$, thus 
changing $N$ from $2\cdot10^6$ to $20$.

The limit distribution of the averaged trajectory for small populations observed on these results is the distributions of the final fixed phenotypes of 
individual trajectories. This spreading can have three sources: 1) the initial sampling from the tail of $\rho_0(x) = \frac{1}{\mathcal 
N}e^{-(x-1)^2}\chi_{[0,4]}(x)$, which may result in the maximal value phenotype in the sampling smaller than 4, 2) the standard genetic drift, which 
may result in the fixation of a phenotype different from the maximal one in the initial sampling, and 3) the stochastic effects due to 
compartmentalization.

\begin{figure}
\centering
\includegraphics{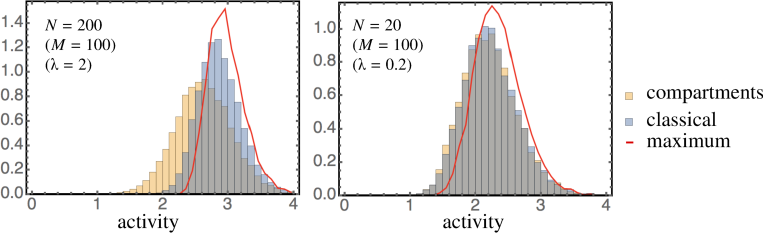}
\caption{
	Distribution of fixed phenotypes in individual-based simulations of compartmentalized and classical linear selection for two small values of the 
	population size. The corresponding $M$ and $\lambda$ for the compartmentalized case are shown in parentheses. The distributions are based on $10^4$ 
	trajectories. The red line shows the empirical distribution of the maximal phenotypic value in the first generation (a sample of size $N$ drawn 
	from the distribution $\rho_0$).
}
\label{sim-classic}
\end{figure}

To distinguish these effects, we performed the same individual-based simulation but without random compartmentalization, where each individual 
directly acquired the fitness equal to its phenotype. Focusing on the cases that corresponded to $M = 100$ on Figure~\ref{simulationsMN-all}, where the 
compartmentalized case showed fixation of a single phenotype in individual trajectories, we ran the noncompartmentalized simulations with populations 
sizes $N = 200$ and $N = 20$ to be compared with $\lambda = 2$ and $\lambda = 0.2$, respectively. The result of these simulations is shown on 
Figure~\ref{sim-classic}, where again a mean distribution of the fixed phenotypes are shown averaged for $10^4$ individual trajectories. One can see 
that the main part of the variability in the noncompartmentalized case comes from the sampling of the initial population, as the distribution of the 
fixed phenotypes closely matches the (numerically computed) distribution of the maximum phenotypic value in a sample of size $N$ from $\rho_0$. The 
final distribution, however, is broadened and slightly shifted to low values of fitness. We attribute this to the genetic drift. Note how the 
compartmentalization does not affect this process in case of small $\lambda$ but significantly worsens the final distribution in case of large 
$\lambda$---the effect that, in contrast to the regular genetic drift, becomes \emph{stronger} with the \emph{increase} of the population size. This 
effect may be a manifestation of the increase in the correlations between the fitness of individuals. Another possible explanation could be the 
weakening of the deterministic part of the selection process (predicted by equation (\ref{update-rho}) for large values of $\lambda$). This could give 
more chances for stronger mutants to be lost before their frequency grows high enough. It is difficult to separate these effects from this kind of data 
so we performed an additional numerical experiment designed specially for this purpose.

\subsection{Stochasticity in a bivariate population under linear and nonlinear selection}

\begin{figure}[t]
\centering
\includegraphics[scale=0.7]{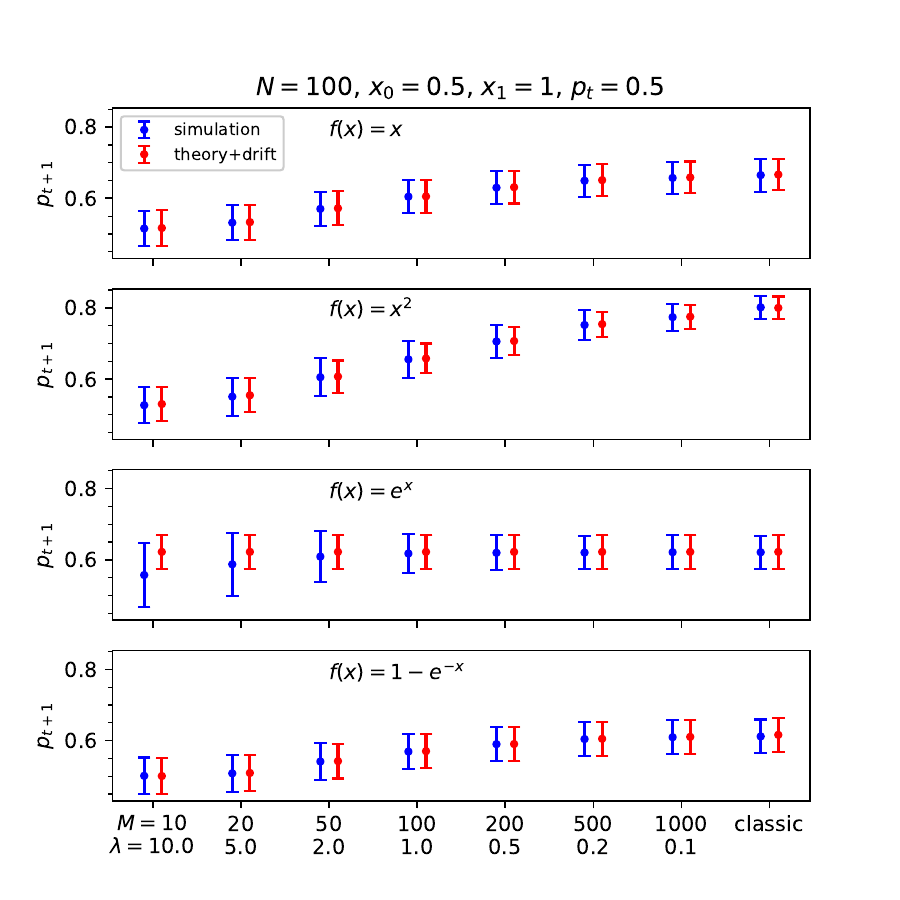}
\caption{
	Mean gene frequency $p$ of the more active mutant with phenotype $x_1$ (resident phenotype $x_0$) in a small population of size $N$ and its 
	standard deviation after one selection cycle with and without compartmentalization ($M$ compartments) and with various selection functions $f$. 
	Statistical properties are computed on a sample of $10^5$ individual numerical experiments (in blue). These values are compared to the prediction 
	of the deterministic infinite population theory and with the standard deviation given by a simple genetic drift due to sampling (in red). The case 
	marked ``classic'' corresponds to the classical selection without any compartmentalization, where the selection function directly defines the 
	fitness.
}
\label{fig-drift}
\end{figure}

To distinguish the stochastic effect of compartmentalization from the deterministic reduction of the selection pressure predicted by 
(\ref{update-rho}), we performed individual-based simulations on a population of much simpler structure (a bivariate case). More specifically, we took 
the initial distribution with density $\rho = p\delta_{x_1} + (1-p)\delta_{x_0}$, where we assume $x_0$ to be the wild type phenotype, $x_1$ to be an 
invading mutant, and $p$ to be the gene frequency if this mutant.

The case of a bivariate population makes the comparison simpler, as in the classical situation, the mean change of the frequency in one cycle of 
selection is given by the deterministic part of selection obtained in the infinite population limit, while the variance of this change results from the 
sampling stochasticity. To quantify the additional stochastic effect of the compartmentalization, we compared the mean frequency change to the value 
predicted by (\ref{update-rho}) for the corresponding value of $\lambda$ and its variance with the sampling variance (classical genetic drift) for the 
corresponding population size. The results are depicted on Figure~\ref{fig-drift}. Interestingly, the combination of the deterministic theory developed 
in our work with the sampling variance due to drawing of the new population (the standard genetic drift) completely explains the observed stochasticity 
in the compartmentalized selection with the linear selection function.

We also performed the same simulations for nonlinear selection functions theoretically studied in this article: a quadratic selection, an exponential 
one, and $f(x) = 1 - e^{-x}$ (see (\ref{quadratic}) and (\ref{exponential}) with the following remark). The latter function, being a sum of two 
exponentials, $e^{0x}$ and $e^{-x}$, and thus completely described by (\ref{exponential}), is an example of a saturating selection: it monotonously 
increases but is limited from above by 1. The results of these simulations are shown on Figure~\ref{fig-drift}, too. Although the quadratic case, and 
to even lesser extent the saturating selection, shows a wider variance of the gene frequency in one cycle than predicted by the genetic drift, the 
effect is very weak. Only the exponential case strongly demonstrates the effect of the compartmentalization when the number of compartments decreases 
($\lambda$ increases). Interestingly, the deviation of the noise from the genetic drift happens along with the deviation of the mean from the 
prediction based on the infinite population theory.

We can conclude that the deterministic theory with genetic drift works well even in very small populations at least for not very strongly growing 
selection functions. However, the behaviour of the exponential selection clearly demonstrates some new effect that deserves a separate investigation.

\section*{Discussion and Conclusion}

In this work we considered a model of the group selection with no intragroup competition, where grouping happens by a random encapsulation of the 
individuals in a collection of compartments at the beginning of each replication cycle. This study was initially motivated by experiments from the 
field of artificial molecular evolution. In particular, compartmentalized \emph{in vitro} directed evolution protocols provide a very pure 
implementation of our model, and may in the future be used to test its predictions. 

We note however that our framework can also apply to some situations of natural evolution. Viruses, for example, exist as highly polymorphic 
populations, or quasi-species \cite{Domingo2001}, which randomly infect individual cells \cite{Manrubia2006} to carry their replication cycle. These 
cells thus serve as genotype-phenotype linkage-maintaining compartments. The possibility of co-infection of a given cells by multiple, possibly 
non-clonal viral particles is largely acknowledged, and generally parametrized by the averaged ``multiplicity of infection'' 
\cite{Gonzalez-Jara2009,Frank2001,Novella2004}. Additionally, inside the cells, the genetic and viral replication is carried according to the 
phenotypic activity of gene products from the viral genome \cite{Manrubia2006}. This mode of reproduction naturally involves template 
indiscriminateness, and is thus accurately described by our model.

In the context of the origin of life, it is also assumed that ancestral replicators were RNA molecules functioning as universal RNA-depended 
RNA-polymerases with activity \emph{in trans}. These primordial replicators possibly used naturally formed vesicles, coacervates, pores in regolith, or 
other microcompartmetalized niches. This implies a random compartmentalization---hence frequency-dependent---stage in their replication cycle. In a 
similar way, modern parasites may also fall in our conceptual scheme, because multiple unrelated individuals may infect a single host and modify their 
joint virulence/survival as a result of interaction \cite{Davies2002,Fried1981,Fried1990}.

Finally, the process is similar to selection in polyploid organisms under random mating. The analogy becomes exact if one considers zygotes to be 
compartments and the haploid genomes to be the individuals under selection. There are two particularities though: the fixed group size in the 
polyploids (the ploidy of the organism) and the presence of the genetic cross-over, not considered in our model. Thus, the problem considered in the 
present article can be viewed as a generalization of natural selection theory to organisms with variable ploidy, in the absence of recombination.

Our model includes a number of simplifications: no selection inside compartments (pure group selection), infinitely large populations, no overlap 
between generations, completely random occupation of compartments by individuals, absence of mutations, deterministic phenotypic expression, and (in 
most parts) a single additive phenotypic trait. Many of these assumptions are relevant to the initial motivation of the work (\emph{in vitro} 
micro-compartmentalized evolution).

The most restrictive condition is the demand of the additivity of the phenotype in a compartment. In the context of directed evolution of enzymes, when 
the phenotypic trait is the enzymatic activity, this hypothesis neglects any effect of the activity saturation. Another case of failure of the 
additivity is brought by the independence of the total phenotype in a compartment on the number of individuals in it \emph{per se}. This situation is 
encountered, for example, when individuals are bacteria that contain a plasmid (plasmids) with the gene and the appropriate protein product, like 
\cite{Ghadessy2001}. If the content of a bacterial cell has an effect on critical reactions during the reproduction phase, the additivity of 
per-individual activities fails.

The deterministic infinite population limit is not very restrictive, as we demonstrated by numerical simulations in small finite populations. First, 
typical population sizes in some potential domains of application for our theory (high throughput \emph{in vitro} evolution, viral evolution) are very 
large. Second, when the population is small, the developed theory well predicts the average result of selection. The stochastic component, in turn, can 
be almost completely attributed to and taken in to account by the standard genetic drift, despite the compartmentalization. This situation is similar 
to the standard mathematical treatment of the selection process in diploid organisms, where the Hardy-Weinberg equilibrium followed by the 
deterministic selection is coupled to the haploid genetic drift \cite{Ewens2004}. In the light of the analogy between a compartmentalized population 
and a population of organisms with variable ploidy, the infinite population limit is equivalent to the Hardy-Weinberg equilibrium assumption. This 
approximation fails, however, if the selection function grows too fast, like in the exponential selection.

The initial motivating question for our work was the influence of the average number of individuals per compartment ($\lambda$) on the selection 
dynamics. We answered this question using distribution theoretical tools. The main result is a quantitative characterization of the co-encapsulation 
effect in the form of a discrete-time dynamical equation for the phenotypic distribution that uses $\lambda$ as the main parameter. Using this 
dynamical description, we thoroughly analyzed the simplest case: additive phenotype with linear selection function under Poisson partitioning in the 
compartments. The main conclusion is that the selection process, contrary to a common belief in the field of directed evolution, is still effective 
even in the context of mixing many individuals with different phenotypes in the same compartment. In quantitative terms, random co-encapsulation slows 
down the selection dynamics (the rate of the mean fitness change) by a factor that decays approximately as $1/\lambda$. With ten individuals per 
compartment in average, the selection is roughly ten-fold slower than that for the extremely diluted case, when any individual is alone in its 
compartment. In all cases, however, independent of $\lambda$, the most active mutant is still eventually selected for.

We also developed a framework to treat more complicated cases like nonlinear fitness, nonadditive phenotype, and multiple trait phenotype. For some 
special cases (polynomial and exponential selection), we obtained dynamical equations in closed form with an explicit effect of $\lambda$. The striking 
observation here is that different nonlinearities have very different resistance to random co-encapsulation. In particular, the exponential selection 
is completely immune to this effect (in the infinite population approximation). This is especially interesting as some important cases of evolving 
replicators can be approximated by this regime, e. g. when the phenotypic trait directly controls the catalytic rate of the replication.

Although this is not the first theoretical work on compartmentalized selection, we greatly generalized previous research, which only considered 
populations of few phenotypes. Our model is suitable for the treatment of arbitrary phenotypic distributions, discrete and continuous alike, and 
arbitrary laws of distribution of individuals among the compartments. Thanks to the advanced mathematical tools applied in this work, the computations 
became more transparent and simple in comparison to the traditional combinatorial approach, which gets very cumbersome already for a quadratic 
selection function, even in the case of a simple bivariate population.

Finally, one practical implication of this study concerns the design of \emph{in vitro} evolution experiments. While many practitioners tend to 
empirically select very small values of $\lambda$, we show that higher library concentration will not drastically affect the performance of the 
experiment. The corresponding increase of the throughput is, however, paid for by the decrease of the selection pressure, and more rounds may become 
necessary to fix the best genotypes. A more comprehensive assessment of the cost-benefit balance for directed evolution experiments will be published 
elsewhere.

\section*{Acknowledgements}
The authors are grateful to David Lacoste and Luca Peliti for stimulating discussions and especially to Ken Sekimoto for numerous discussions and for 
critically reading the manuscript. We also would like to thank an anonymous reviewer for pointing out a noncritical but unpleasant mathematical 
mistake in the manuscript. This work was supported by European Research Council (ERC, Consolidator grant 647275 PROFF).

{\appendix

\section*{Appendices}

\setcounter{section}{1}

\subsection{Random variables and generalized functions}
\label{generalized-functions}

This appendix does not contain new results. Its purpose is to be an introduction to the framework used in the article.

A random variables $X$ on $\mathbb R$ is associated with a positive Radon measure $\mu_X$, which can be understood as a non-negative generalized 
function (linear continuous functional) $\rho_X$ defined on the space of continuous functions with compact supports $C_c$ endowed with the appropriate 
topology (see, for example, \cite{Schwartz1}). This topology is conventionally induced by the following convergence rule: we say $\phi_n \to \phi$ in 
$C_c$, if there is a compact $K$ such that $\forall n$ $\supp \phi_n \subset K$ and $\phi_n \to \phi$ homogeneously. We will denote the topological 
linear space of generalized functions on $C_c$ endowed with the weak topology as $C_c'$, so the topology on $C_c'$ is induced by the convergence 
$\rho_n \to \rho \Leftrightarrow \forall \phi \in C_c$ $\langle \rho_n, \phi \rangle \to \langle \rho, \phi \rangle$. Such convergence is equivalent to 
the weak convergence of the corresponding measures in the measure-theoretical language. As this is the only convergence of generalized functions that 
we consider in the article, the use of the sign $\to$ to denote it does not bring any confusion.

A generalized function $\rho$ is called non-negative, if for 
any non-negative $\phi \in C_c$ (that is $\forall x \in \mathbb R$ $\phi(x) \geqslant 0$) we have $\langle \rho, \phi \rangle \geqslant 0$. We will 
denote the subset of non-negative generalized functions with the subset topology as $C'_{c+}$.

It is possible to extend the action of $\rho$ from functions in $C_c$ to any indicator function of a Borel set $\chi_B$, $B \in \mathfrak B$, where 
$\mathfrak B$ is the Borel $\sigma$-algebra. This is done using the so called upper and lower value of $\rho$ on an indicator of a set (in 
\cite{Schwartz1} this corresponds to the upper and the lower measure of a set). The upper value of $\chi_A$ is defined by
	\begin{equation}
	\langle \hat\rho, \chi_A \rangle = \inf_{\scriptsize \begin{matrix}U \\ A \subset U \end{matrix}}
	\sup_{\scriptsize \begin{matrix}\supp \phi \subset U \\ 0 \leqslant \phi \leqslant 1 \end{matrix}}
	\langle \rho, \phi \rangle,
	\end{equation}

\noindent where $U$ are open sets in the standard topology on $\mathbb R$, and $\phi \in C_c$. The lower value is defined as
	\begin{equation}
	\langle \check\rho, \chi_A \rangle = \sup_{K \subset A}
	\inf_{\scriptsize \begin{matrix}\phi \geqslant 0 \\ \phi|_K = 1\end{matrix}}
	\langle \rho, \phi \rangle,
	\end{equation}

\noindent where $K$ are compact. If, for \emph{compactly supported non-negative} $\rho$, $\langle \hat\rho, \chi_A \rangle = \langle \check\rho, \chi_A 
\rangle$, the set $A$ is called $\rho$-measurable and the value $\langle \rho, \chi_A \rangle$ is defined as this common value and is called the 
measure of $A$. It can be proven that for any compactly supported non-negative generalized function $\rho$ any Borel set is $\rho$-measurable (it is 
also $\sigma$-regular, see\cite{Schwartz1}). In particular, the whole $\mathbb R$ is measurable, and in fact, for $\rho$ to be a (generalized) 
probability density functions, we require $\langle \rho, \chi_{\mathbb R} \rangle \deq \langle \rho, 1 \rangle = 1$. For \emph{non-negative} $\rho$ 
\emph{that have} $\langle \rho, 1 \rangle < \infty$, any set $A$ such that $\langle \hat \rho, \chi_A \rangle = \langle \check \rho, \chi_A \rangle$ is 
called $\rho$-measurable, whether $\rho$ is compactly supported or not. Note that for a non-negative $\rho$ and a continuous $f$, $\langle \rho, 
f\rangle$ is well defined as $\langle f\rho,1\rangle$ and one does not need any further development of the theory.

Any point-set is measurable, too, and we can compute a measure of a point $x$ as
	\begin{equation}
	\langle \rho, \chi_{\{x\}} \rangle = \inf_{\scriptsize \begin{matrix}\phi \geqslant 0 \\ \phi(x) = 1\end{matrix}} \langle \rho, \phi \rangle.
	\end{equation}

From now on we will call non-negative generalized functions with $\langle \rho , 1 \rangle = 1$ (probability) \emph{densities} (of random variables). 
The meaning of a density $\rho_\xi$ of a random variable $\xi$ is that the probability to find the value of $\xi$ in a set $A$ is equal to $\langle 
\rho_\xi, \chi_A \rangle$. A (cumulative) distribution function of $\xi$ is the function $F_\xi \colon x \mapsto \langle \rho_\xi, \chi_{(-\infty,x]} 
\rangle$. The density is the (generalized) derivative of its own distribution function. The mathematical expectation (the mean) of the random variable 
is then computed as $\bar x = \langle \rho, x \rangle$.

Any density $\rho$ can be uniquely decomposed into a sum
	\begin{equation}
	\rho = \rho_a + \rho_p + \rho_r,
	\end{equation}

\noindent where $\rho_a$ is the regular part, $\rho_p$ is the point-mass part, and $\rho_r$ is the residual singular part.

The regular part $\rho_a$ is a regular generalized function, i.e. its action on any $\phi \in C_c$ can be represented by $\langle \rho_a , \phi 
\rangle = \intl{}{} f\phi \,dx$ for a unique $f \in L^1_\mathrm{loc}$ (the integration is in the sens of Lebesgue). The regular part is also called the 
absolutely continuous part (hence the notation $\rho_a$). A density that has only this part is also called absolutely continuous. Any point has a zero 
measure in respect to an absolutely continuous density. It is convenient to identify (a representative of) $f$ with $\rho_a$ and to write $\rho_a$ 
instead of $f$.

The point-mass part $\rho_p$ is an at most countable sum of $\delta$-functions
	\begin{equation}
	\rho_p = \sum_{n \in \mathbb N} a_n \delta_{x_n}, \quad a_n \geqslant 0, \quad n \neq m \Rightarrow x_n \neq x_m,\quad \sum_{n \in \mathbb N} a_n 
	\leqslant 1.
	\end{equation}

\noindent It follows that $\forall x \neq x_n$ $\langle \rho_p , \chi_{\{x\}} \rangle = 0$, and $\forall x_n$ $\langle \rho_p , \chi_{\{x_n\}} \rangle 
= a_n$. Physically speaking, $\rho_p$ represents all fitness values that are present in macroscopic quantities in the population.

Finally, the residual singular part $\rho_r$ is characterized by the zero (Lebesgue) measure of its support and, at the same time, $\forall x$ 
$\langle \rho_r, \chi_{\{x\}} \rangle = 0$. Its support is Cantor set-like and its distribution function is Cantor function-like.

The regular and the residual singular parts form together the continuous part $\rho_c = \rho_a + \rho_r$. The sum of the point-mass and the residual 
singular parts is the singular part $\rho_s = \rho_p + \rho_r$. It is tempting to disregard $\rho_r$ as unphysical. However, it may turn to be a good 
tool to model libraries obtained by a random mutagenesis from a single mutant in a very rugged fitness landscape. It might be possibly a good 
approximation to a library generated on a smooth landscape but by an error-prone PCR with large number of cycles.

\subsection{Continuity of operator $A$ and of the operators that generate $\sigma_x$ and $\sigma$ for the linear fitness case}
\label{app-continuity}

Let $\mathbb P$ be the space of probability densities with, so $\mathbb P = \{\rho \in C_{c+}'\,|\, \langle\rho,1 \rangle\ = 1\}$. Let $\mathbb P_p$ be 
the space of finite point-mass probability densities, so $\mathbb P_p = \{\rho \in \mathbb P \,|\, \exists\, n \in \mathbb N \,, \rho = 
\sum\limits_{k=0}^n a_k \delta_{x_k}\}$. Let us fix some very large positive number $\mathcal L$. Let us denote $\mathcal I \deq [0,\mathcal L]$. Let 
$\mathbb P^\mathcal{I}$ be the space of probability densities concentrated in $\mathcal I$, so $\mathbb P^\mathcal{I} = \{\rho \in \mathbb P \,|\, 
\supp \rho \subset \mathcal I\}$. Let $\mathbb P^\mathcal{I}_p$ be the space of finite point-mass probability densities from $\mathbb P^\mathcal{I}$, 
so $\mathbb P^\mathcal{I}_p = \mathbb P^\mathcal{I} \cap \mathbb P_p$. Let the topologies on all these spaces be inherited from $C_c'$, subspaces of 
which all of them are.

In Section~\ref{selection-arbitrary}, we derived the update operator $\rho_{t+1} = A(\rho_t)$ given by the formula (\ref{update-rho}) for densities 
from $\mathbb P_p$. This formula is generalizable on any generalized function $\rho$ from $\mathbb P$ with nonzero finite mean $\langle \rho, x 
\rangle$. Let us denote $N \deq \{\rho \in \mathbb P\,|\, \langle \rho,x\rangle = 0\}$. Let us also denote $N_p \deq N \cap \mathbb P_p$. If $N_p$ was 
nowhere dense in $\mathbb P_p$, if, in addition, $\mathbb P_p$ was dense in $\mathbb P$ and (\ref{update-rho}) happened to be continuous both in 
$\mathbb P_p\setminus N_p$ and $\mathbb P\setminus N$, we could take this expression as an extension by continuity of $A$ from $\mathbb P_p$ to 
$\mathbb P$. Unfortunately, this assertion is not true.

\begin{proposition}
For any $\lambda > 0$, the operator $A\colon \mathbb P_p \setminus N_p \to \mathbb P_p$ defined by (\ref{update-rho}) is nowhere continuous.
\label{discontinuous}
\end{proposition}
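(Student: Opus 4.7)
The plan is to show that at every $\rho \in \mathbb P_p \setminus N_p$ one can find an approximating sequence $\rho_n \in \mathbb P_p \setminus N_p$ with $\rho_n \to \rho$ in $C_c'$ but $A(\rho_n) \not\to A(\rho)$. The governing observation is that formula (\ref{update-rho}) involves the mean $\langle\rho,x\rangle$, and the functional $\rho \mapsto \langle\rho,x\rangle$ is not continuous on $C_c'$ because $x$ does not belong to $C_c$. Any perturbation that puts a vanishingly small fraction of mass arbitrarily far to the right will be invisible to test functions in $C_c$ yet modify the first moment by a fixed amount, and this is exactly the mechanism of discontinuity I would exploit.

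Concretely, given $\rho = \sum_{k=1}^K a_k \delta_{x_k} \in \mathbb P_p \setminus N_p$, pick some $c > 0$ and set $\rho_n \deq (1-1/n)\rho + (1/n)\delta_{cn}$, taking $n$ large enough that $cn$ avoids each $x_k$. Then $\rho_n \in \mathbb P_p$ by construction, and $\bar x_n = (1-1/n)\bar x + c > 0$ gives $\rho_n \notin N_p$. Weak convergence is immediate: for any $\phi \in C_c$, $\phi(cn) = 0$ eventually since $\supp\phi$ is compact and $cn \to +\infty$, so $\langle \rho_n,\phi\rangle = (1-1/n)\langle\rho,\phi\rangle \to \langle\rho,\phi\rangle$. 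The same vanishing gives $\langle \rho_n, x\phi\rangle \to \langle\rho, x\phi\rangle$ for every $\phi \in C_c$, while a direct computation with the unbounded observable $x$ reveals the shift: $\langle\rho_n,x\rangle \to \bar x + c \neq \bar x$.

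Pairing $A(\rho_n)$ with $\phi \in C_c$ yields
\[
\langle A(\rho_n),\phi\rangle = (1-g(\lambda))\langle \rho_n,\phi\rangle + \frac{g(\lambda)}{\bar x_n}\langle \rho_n, x\phi\rangle,
\]
which by the preceding step converges to $(1-g(\lambda))\langle\rho,\phi\rangle + \frac{g(\lambda)}{\bar x + c}\langle\rho,x\phi\rangle$, whereas $\langle A(\rho),\phi\rangle$ has the same form with $\bar x$ in place of $\bar x + c$. Their difference is $-\dfrac{c\,g(\lambda)\,\langle\rho,x\phi\rangle}{\bar x(\bar x + c)}$, which is strictly nonzero once $g(\lambda) > 0$ (automatic for $\lambda > 0$) and $\langle\rho,x\phi\rangle \neq 0$; the latter is secured by choosing any nonnegative $\phi \in C_c$ equal to $1$ on the finite set $\supp\rho$, giving $\langle\rho,x\phi\rangle = \bar x > 0$.

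The argument carries no serious obstacle: the bookkeeping (keeping $\rho_n$ in $\mathbb P_p \setminus N_p$, producing a witnessing $\phi$) is elementary. The whole content is the observation that weak convergence in $C_c'$ does not control behaviour at infinity, so the unbounded moment appearing in the denominator of $A$ breaks continuity at every point of $\mathbb P_p \setminus N_p$. This is precisely what forces the subsequent confinement of the domain to $\mathbb P^{\mathcal I}$, where compactness of the common support $\mathcal I$ effectively turns $x$ into a $C_c$ observable and restores continuity of the mean, hence of $A$.
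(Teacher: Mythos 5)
Your proof is correct and uses essentially the same mechanism as the paper's: a vanishing point mass pushed off to infinity, invisible to every test function in $C_c$ but not to the first moment. The only cosmetic difference is that the paper's perturbation $\frac{1}{n+1}\delta_{n^2}$ drives the mean to $+\infty$ (so $A(\rho_n)\to(1-g(\lambda))\rho$ outright), whereas yours shifts it by a fixed $c$; both yield the same conclusion.
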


\begin{proof}

Take any $\rho \in \mathbb P_p \setminus N_p$. By the definition of $\mathbb P_p$, we have $0 < |\langle \rho,x\rangle| < \infty$. Consider the 
sequence
	\begin{equation}
	\rho_n = \dfrac{n}{n+1}\rho + \dfrac{1}{n+1}\delta_{n^2}.
	\end{equation}

\noindent For any large enough $n$, $\rho_n \in \mathbb P_p \setminus N_p$. Furthermore, $\rho_n \to \rho$ in the topology of $\mathbb P_p \setminus 
N_p$. Indeed, for any test function $\phi \in C_c$ there is a number $n_0$ such that for any $n > n_0$ the point $n^2$ does not belong to the support 
of $\phi$. Therefore, for $n > n_0$ we have
	\begin{equation}
	\langle \rho_n, \phi\rangle = \dfrac{n}{n+1}\langle \rho,\phi\rangle \to \langle \rho,\phi \rangle.
	\end{equation}

\noindent From the other hand, $\rho_n$ does not converge to $\rho$ in mean. Indeed,
	\begin{equation}
	\langle \rho_n,x\rangle = \dfrac{n}{n+1}\langle \rho,x\rangle + \dfrac{n^2}{n+1} \to +\infty.
	\end{equation}

\noindent Therefore,
	\begin{equation}
	A(\rho_n) \to \Big(1 - g(\lambda)\Big) \rho \neq A(\rho) = \left(1 - g(\lambda) + g(\lambda)\frac{x}{\langle \rho,x\rangle}\right)\rho.
	\end{equation}
\end{proof}

This deplorable fact, however, can be remedied by the restriction of the space of the considered probability densities to $\mathbb P^\mathcal{I}$ 
for some $\mathcal I$. This restriction has purely technical meaning and does not reflect any physical constraints. Nevertheless, some justification 
comes from the fact that there is a universal upper bound on the activity of any enzyme of the considered class of enzymes. This upper bound is 
reflected by the number $\mathcal L$ that defines $\mathcal I$. Note that $\mathbb P^\mathcal{I} \cap N = \mathbb P^\mathcal{I}_p \cap N = 
\{\delta_0\}$.

\begin{theorem}
For any $\mathcal I$, the operator $A\colon \mathbb P^\mathcal{I} \setminus \{\delta_0\} \to \mathbb P^\mathcal{I}$ defined by the formula 
(\ref{update-rho}) is continuous.
\label{theorem-continuous}
\end{theorem}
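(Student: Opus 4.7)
The plan is to fix an arbitrary $\rho \in \mathbb P^{\mathcal I} \setminus \{\delta_0\}$ and an arbitrary sequence $\rho_n \to \rho$ in the same space, and to verify directly that $\langle A(\rho_n),\phi\rangle \to \langle A(\rho),\phi\rangle$ for every test function $\phi \in C_c$. Reading the product of a measure by a continuous function in the usual way, formula (\ref{update-rho}) gives
\begin{equation}
\langle A(\rho_n),\phi\rangle = (1 - g(\lambda))\,\langle \rho_n,\phi\rangle + \frac{g(\lambda)}{\langle \rho_n,x\rangle}\,\langle \rho_n, x\phi\rangle,
\end{equation}
so continuity of $A$ at $\rho$ reduces to three elementary facts: (i) $\langle \rho_n,\phi\rangle \to \langle \rho,\phi\rangle$; (ii) $\langle \rho_n, x\phi\rangle \to \langle \rho, x\phi\rangle$; and (iii) $\langle \rho_n,x\rangle \to \langle \rho, x\rangle$, with the limit strictly positive so that division is continuous.

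Facts (i) and (ii) are immediate: the first is the convergence hypothesis itself, and the second is the same hypothesis applied to $x\phi$, which belongs to $C_c$ whenever $\phi$ does because $\supp(x\phi) \subset \supp \phi$. Fact (iii) is where the restriction to $\mathbb P^{\mathcal I}$ plays its decisive role: picking a cutoff $\chi \in C_c$ with $\chi \equiv 1$ on $\mathcal I$, one has $\langle \mu, x\rangle = \langle \mu, x\chi\rangle$ for every $\mu \in \mathbb P^{\mathcal I}$, and since $x\chi \in C_c$ the functional $\mu \mapsto \langle \mu,x\rangle$ inherits continuity from the very definition of weak convergence in $C_c'$. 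This is precisely the step that fails without the confinement to $\mathcal I$ and that makes Proposition \ref{discontinuous} possible on $\mathbb P_p$, where mass can leak to infinity without being detected by test functions in $C_c$.

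Positivity of the limiting denominator $\langle \rho,x\rangle$ follows from the (unnamed) lemma already used in the proof of Property \ref{property2}: if $\rho \in \mathbb P^{\mathcal I}$ is not $\delta_0$, then $\supp \rho$ is not contained in $\{0\}$, so $\sup\supp \rho > 0$; together with the nonnegativity of $\rho$ and of the coordinate $x$ on $\mathcal I$ this forces $\langle \rho,x\rangle > 0$. Therefore $\langle \rho_n,x\rangle$ is bounded away from zero for $n$ large, and the displayed identity converges term by term to $\langle A(\rho),\phi\rangle$ by the algebra of limits. The main (and essentially only) obstacle is this possible vanishing of the denominator in the limit, which is exactly why $\delta_0$ must be excluded from the domain; everything else is a routine consequence of weak continuity tested against functions continuous on the shared compact support $\mathcal I$.
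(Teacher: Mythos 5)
Your proof is correct, and it reaches the conclusion by a mildly different route than the paper. The genuinely load-bearing steps are the same in both arguments: the observation that on $\mathbb P^{\mathcal I}$ the functional $\mu \mapsto \langle \mu, x\rangle$ becomes weakly continuous because $x$ can be replaced by a compactly supported surrogate $x\chi$ (the paper uses the same device with a function $\eta$ equal to $x$ on $\mathcal I$), and the fact that $\langle \rho, x\rangle > 0$ for every $\rho \in \mathbb P^{\mathcal I}\setminus\{\delta_0\}$, so the denominator is eventually bounded away from zero. Where you diverge is in how the convergence of the product is established. The paper first shows that the affine multipliers $\tilde\phi_n = 1 - g(\lambda) + g(\lambda)x/\langle\rho_n,x\rangle$ converge uniformly on $\mathcal I$, extends them to test functions $\phi_n \to \phi$ in $C_c$, and then proves a general statement that $\phi_n\rho_n \to \phi\rho$ whenever $\phi_n \to \phi$ in $C_c$ and $\rho_n \to \rho$ weakly, via the triangle-inequality estimate $|\langle\phi_n\rho_n - \phi\rho,\psi\rangle| \leqslant |\langle\rho_n - \rho,\phi\psi\rangle| + \sup|\phi_n\psi - \phi\psi|$. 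You instead exploit the affine structure of the multiplier to split $\langle A(\rho_n),\phi\rangle$ into a fixed linear combination of pairings of $\rho_n$ against the two fixed test functions $\phi$ and $x\phi$, divided by the convergent scalar $\langle\rho_n,x\rangle$, after which everything follows from the algebra of limits. Your version is more elementary and shorter for this specific operator; the paper's version proves a reusable joint-continuity fact about products of convergent sequences, which is the kind of statement it needs again in spirit (e.g., in Theorem \ref{theorem-dense} and in the nonlinear appendices). One small point worth tightening: your justification of $\langle\rho,x\rangle > 0$ from ``$\sup\supp\rho > 0$ plus nonnegativity'' is correct but is most cleanly obtained as the contrapositive of the unnamed lemma (if $\bar x = \inf\supp\rho = 0$ then $\rho = \delta_0$), which is exactly the paper's remark that $\mathbb P^{\mathcal I}\cap N = \{\delta_0\}$.
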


\begin{proof}

Let $\rho_n$ be some sequence from $\mathbb P^\mathcal{I}\setminus \{\delta_0\}$ that converges to some $\rho \in \mathbb P^\mathcal{I}\setminus 
\{\delta_0\}$. Then
	\begin{equation}
	|\langle \rho_n,x\rangle - \langle \rho,x\rangle| = |\langle \rho_n - \rho, x \rangle| = \mathcal |\langle \rho_n - \rho, \eta \rangle |
	\to 0,
	\label{mean-convergence}
	\end{equation}

\noindent where	$\eta$ is some function from $C_c$ such that $\forall x \in \mathcal I$ $\eta(x) = x$.

It follows that
	\begin{equation}
	\left(1-g(\lambda) + g(\lambda)\frac{x}{\langle \rho_n, x\rangle}\right) \to \left(1-g(\lambda) + g(\lambda)\frac{x}{\langle \rho, x\rangle}\right)
	\label{functions-phi}
	\end{equation}

\noindent point-wise on $\mathcal I$. As all the involved functions are also continuous on the compact set $\mathcal I$, the convergence is uniform.

Let us denote the function on the right-hand side of (\ref{functions-phi}) as $\tilde\phi$ and the functions on the left-hand side as $\tilde\phi_n$ 
(for each $n$, $\tilde\phi_n$ corresponds to the function generated by $\rho_n$). Then it is always possible to find functions $\phi,\phi_n \in C_c$ 
such that $\phi_n \to \phi$ in $C_c$ and $\left.\phi\right|_\mathcal{I} = \left.\tilde\phi\right|_\mathcal{I}$, $\left.\phi_n\right|_\mathcal{I} = 
\left.\tilde\phi_n\right|_\mathcal{I}$ and, therefore, $\tilde\phi_n\rho = \phi_n\rho$, $\tilde\phi\rho = \phi\rho$.

As $A(\rho_n) = \phi_n\rho_n$ and $A(\rho) = \phi\rho$, what is left to be proven is that given $\rho_n \to \rho$ and $\phi_n \to \phi$ we have $\phi_n 
\rho_n \to \phi \rho$. First notice that $\sup |\phi_n\psi - \phi\psi| \to 0$ for any $\psi \in C_c$ (from which it follows that $\phi_n \psi \to \phi 
\psi$ in $C_c$). Therefore, for any $\epsilon > 0$ there is $n_0$ such that for any $n > n_0$ we have $\sup |\phi_n\psi - \phi\psi| < \epsilon$. We 
have, therefore, for any $n > n_0$
	\begin{multline}
	|\langle\phi_n\rho_n - \phi\rho,\psi\rangle| \leqslant |\langle\rho_n - \rho, \phi\psi\rangle| + |\langle\rho_n,\phi_n\psi - \phi\psi\rangle| 
	\leqslant\\
	\leqslant
	|\langle\rho_n - \rho,\phi\psi\rangle| + \sup|\phi_n\psi - \phi\psi| < |\langle\rho_n - \rho,\phi\psi\rangle| + \epsilon.
	\end{multline}

But $\rho_n \to \rho$ means that there is $n_1 > n_0$ such that for any $n > n_1$ we have $|\langle\rho_n - \rho,\phi\psi\rangle| < \epsilon$. 
Therefore, for any $\epsilon$ and for any $\psi$ we have $|\langle A(\rho_n) - A(\rho),\psi\rangle| < 2\epsilon$ starting from $n_1$.

\end{proof}

As it is seen from the proof, the statement of the theorem stays correct, if we replace $\mathcal I$ with any compact set $K$ with nonempty interior 
and if we replace $\mathbb P^\mathcal{I}\setminus \{\delta_0\}$ with $\mathbb P^K \setminus (\mathbb P^K \cap N)$, where $\mathbb P^K = \{\rho \in 
\mathbb P\,|\,\supp \rho \subset K\}$. Furthermore, the set $\mathbb P^K_p \cap N$ is nowhere dense in $\mathbb P^K_p$ and the set $\mathbb P^K \cap N$ 
is nowhere dense in $\mathbb P^K$.

The only thing that is left to be proven is that the space of finite discrete densities is dense in the space of general densities.

\begin{theorem}
For any $\mathcal I$, the space $\mathbb P^\mathcal{I}_p$ is dense in $\mathbb P^\mathcal{I}$ as its subset.
\label{theorem-dense}
\end{theorem}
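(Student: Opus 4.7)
The plan is to construct, for each $\rho \in \mathbb{P}^\mathcal{I}$, an explicit approximating sequence $\rho_n \in \mathbb{P}^\mathcal{I}_p$ by binning the mass of $\rho$ at the vertices of a uniform grid on $\mathcal{I}$. For each $n \geqslant 1$ I would take the grid points $x_k^n = k\mathcal{L}/n$, $k = 0, \ldots, n$, and the associated continuous tent partition of unity $\{\psi_k^n\}_{k=0}^n \subset C_{c+}$: each $\psi_k^n$ is piecewise linear, nonnegative, supported in $[x_{k-1}^n, x_{k+1}^n] \cap \mathcal{I}$ (half-tents at the endpoints), satisfies $\psi_k^n(x_j^n) = \delta_{jk}$, and $\sum_{k=0}^n \psi_k^n \equiv 1$ on $\mathcal{I}$. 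Then set
\begin{equation*}
p_k^n \deq \langle \rho, \psi_k^n \rangle, \qquad \rho_n \deq \sum_{k=0}^n p_k^n\, \delta_{x_k^n}.
\end{equation*}

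Verifying $\rho_n \in \mathbb{P}^\mathcal{I}_p$ is routine: each $p_k^n \geqslant 0$ because $\rho \in C'_{c+}$ and $\psi_k^n \geqslant 0$; the total mass equals $\sum_k p_k^n = \langle \rho, \sum_k \psi_k^n \rangle = \langle \rho, 1 \rangle = 1$, using $\sum_k \psi_k^n \equiv 1$ on $\mathcal{I} \supseteq \supp \rho$; and every atom $x_k^n$ lies in $\mathcal{I}$. The substance is the weak-$*$ convergence $\rho_n \to \rho$ in $C_c'$. For any $\phi \in C_c$ I would rewrite
\begin{equation*}
\langle \rho_n, \phi\rangle = \sum_{k=0}^n \phi(x_k^n) \langle \rho, \psi_k^n\rangle = \langle \rho, g_n\rangle, \qquad g_n \deq \sum_{k=0}^n \phi(x_k^n)\psi_k^n,
\end{equation*}
where $g_n \in C_c$ is the continuous piecewise-linear interpolant of $\phi$ on the grid (zero outside $\mathcal{I}$). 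Uniform continuity of $\phi$ on the compact set $\mathcal{I}$, applied with mesh $\mathcal{L}/n$, forces $\sup_\mathcal{I}|g_n - \phi| \to 0$, and this is essentially the only analytic estimate the argument requires.

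The main obstacle --- the only non-routine point --- is the bookkeeping at the boundary of $\mathcal{I}$: outside $\mathcal{I}$ the interpolant $g_n$ vanishes identically while $\phi$ need not, so $\sup_\mathbb{R}|g_n - \phi|$ does not in general shrink in $n$. The fix exploits the support condition. Because $\rho$ is a positive functional vanishing on the open set $\mathbb{R}\setminus\mathcal{I}$, one has the positivity bound $|\langle \rho, \eta\rangle| \leqslant \sup_\mathcal{I}|\eta|\cdot\langle \rho,1\rangle$ for every $\eta \in C_c$ (test $\pm\eta$ against a cutoff that equals $1$ on $\supp\rho$, and use that $\rho$ assigns no mass off $\mathcal{I}$). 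Applied to $\eta = g_n - \phi$ this collapses the pairing error to $\sup_\mathcal{I}|g_n - \phi| \to 0$, completing the weak-$*$ convergence. Apart from this standard use of the support condition, the proof is exactly the classical weak-$*$ approximation of positive Radon measures on a compact interval by atomic measures.
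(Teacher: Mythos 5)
Your proof is correct and reaches the result by a genuinely different, and in fact more economical, route than the paper. The paper argues in two stages: it first shows that continuous densities are dense in $\mathbb P^\mathcal{I}$ by mollifying ($\rho * \omega_n$) and then pushing forward by affine contractions $F_n$ to force the smoothed support back into $\mathcal I$; it then discretizes a continuous density by Riemann sums, choosing the atom locations via the mean value theorem so that the total mass comes out exactly $1$. Your single-stage construction---binning the mass of an arbitrary $\rho$ against a piecewise-linear partition of unity and reading off $\langle\rho_n,\phi\rangle = \langle\rho,g_n\rangle$ with $g_n$ the nodal interpolant of $\phi$---avoids both the mollification step and the support-inflation problem that forces the paper to introduce the maps $F_n$, and it makes the normalization $\sum_k p_k^n = 1$ automatic rather than something to be engineered. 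The positivity bound $|\langle\rho,\eta\rangle|\leqslant\sup_{\mathcal I}|\eta|\cdot\langle\rho,1\rangle$ that you use to discard the behaviour of $g_n-\phi$ off $\mathcal I$ is the same device the paper uses elsewhere (e.g.\ when it replaces $\tilde\phi$ by a $C_c$ function agreeing with it on $\mathcal I$ in the continuity proof for $A$), so nothing foreign to the paper's framework is being smuggled in.

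One small repair is needed. As literally described, the two endpoint ``half-tents'' $\psi_0^n$ and $\psi_n^n$, being supported in $[0,x_1^n]$ and $[x_{n-1}^n,\mathcal L]$ while equal to $1$ at $0$ and at $\mathcal L$ respectively, are discontinuous at the endpoints of $\mathcal I$ and hence do not belong to $C_c$, so $\langle\rho,\psi_0^n\rangle$ is not defined in the stated framework. Replace them by full tents extending to $-\mathcal L/n$ and to $\mathcal L+\mathcal L/n$; since $\supp\rho\subset\mathcal I$, the coefficients $p_k^n$, the identity $\sum_k\psi_k^n\equiv 1$ on $\mathcal I$, and the estimate $\sup_{\mathcal I}|g_n-\phi|\to 0$ are all unaffected, and the rest of your argument goes through verbatim.
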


\begin{proof}

First let us prove that the space $\mathbb P^\mathcal{I} \cap C_c$ is dense in $\mathbb P^\mathcal{I}$, where $C_c$ is understood as being naturally 
embedded into $C_c'$. That is, any density from $\mathbb P^\mathcal{I}$ can be approximated by a sequence of densities from $C_c$ with the supports in 
$\mathcal I$.

Consider some $\omega \in C_{c+}$ such that $\forall x$ $\omega(-x) = \omega(x)$ and $\intl{}{} \omega(x)\,dx = 1$. Let us denote $r = \diam \supp \omega$ 
and $\omega_n(x) = \dfrac{1}{n}\omega\left(\dfrac{x}{n}\right)$. Let us also consider the sequence of mappings $F_n: \mathbb R \to \mathbb R$, $x 
\mapsto \dfrac{\mathcal L }{\mathcal L + 2\frac{r}{n}}\left(x + \dfrac{r}{n}\right)$. For each $n$, $F_n$ bijectively maps the interval 
$\left[-\dfrac{r}{n}, \mathcal L + \dfrac{r}{n}\right]$ to the interval $[0,\mathcal L] = \mathcal I$.

For any generalized function $\rho \in \mathbb P^\mathcal{I}$ take the sequence of $\psi_n = (F_n)_\star (\rho * \omega_n) \in \mathbb P^\mathcal{I} 
\cap C_c$. Since $\omega_n \to \delta_0$ in $\mathbb P$, as $n \to \infty$, we have $\rho * \omega_n \to \rho$ in $\mathbb P$. Let us prove that 
$\psi_n \to \rho$.

For any $\phi \in C_c$ we have $\langle (F_n)_\star (\rho * \omega_n), \phi \rangle = \langle \rho * \omega_n, \phi\circ F_n\rangle$. It is not 
difficult to show that $\phi \circ F_n \to \phi$ in $C_c$. The point $x_0 = \mathcal L/2$ is the stationary point for all $F_n$. All $F_n$ are affine 
and contracting with the contraction coefficient $\mathcal L/(\mathcal L + 2r/n)$. Their inverses $F_n^{-1}$ are expanding with the expansion 
coefficient $\kappa_n \deq 1 + 2r/(n\mathcal L)$. Note that $\forall n$, $\kappa \deq \kappa_1 \geqslant \kappa_n$. Let $\Delta \deq \max\Big(|x_0 - 
\inf\supp\phi|, |x_0 - \sup\supp\phi|\Big)$ and $K \deq [x_0 - \kappa\Delta, x_0 + \kappa\Delta]$. Then $\forall n > 0$, $\supp \phi \circ F_n \subset 
K$ and $\supp \phi \subset K$. As $F_n \to \operatorname{Id}_{\mathbb R}$ pointwise on $\mathbb R$ and $F_n$ are continuous, this convergence is 
uniform on $K$. That is $\sup\limits_{x \in K} |F_n(x) - x| \to 0$. As $\phi$ is continuous and compactly supported, it is also uniformly continuous. 
Therefore,
	\begin{equation}
	\sup_{x \in \mathbb R}\Big|\phi\Big(F_n(x)\Big) - \phi(x)\Big| \to 0.
	\end{equation}

\noindent But together with $\supp \phi \circ F_n \subset K$ this proves that $\phi \circ F_n \to \phi$ in $C_c$.

We have $\rho * \omega_n \to \rho$ and $\phi \circ F_n \to \phi$. Using the same reasoning as in the proof of Theorem~\ref{theorem-continuous}, we 
conclude that $\langle \psi_n, \phi\rangle = \langle \rho * \omega_n, \phi \circ F_n\rangle \to \langle \rho,\phi \rangle$, and thus, $\psi_n \to 
\rho$.

From the other hand, any $\psi \in \mathbb P^\mathcal{I} \cap C_c$ can be approximated by a sequence from $\mathbb P^\mathcal{I}_p$. Indeed, we can 
select some sequence of conventionally ordered Darboux partitions $\{\Delta_n\}$ of some interval $[a,b]$ that contains $\supp \psi$ (one can take tho 
whole $\mathcal I$) with the graininess of the partitions going to 0 with $n \to \infty$, where $\Delta_n = \{x^{(n)}_k\}$, $k \in \{0, 1, \ldots , 
K_n\}$, $x^{(n)}_k < x^{(n)}_{k+1}$, $x^{(n)}_0 = a$, $x^{(n)}_{K_n} = b$, $\Delta x^{(n)}_k = x^{(n)}_{k+1}-x^{(n)}_k$, and $\max\limits_{k < K_n} 
\Delta x^{(n)}_k \to 0$, when $n \to \infty$. Then we can take the sequence of $\rho_n \in \mathbb P^\mathcal{I}_p$ of the following form
	\begin{equation}
	\rho_n = \sum_{k = 0}^{K_n - 1}\Delta x^{(n)}_k\psi(\xi^{(n)}_k)\delta_{\xi^{(n)}_k},
	\end{equation}

\noindent where $\xi^{(n)}_k \in [x^{(n)}_k, x^{(n)}_{k+1}]$ such that
	\begin{equation}
	\psi(\xi^{(n)}_k) \Delta x^{(n)}_k = \intl{x^{(n)}_k}{x^{(n)}_{k+1}} \psi(x)\,dx.
	\end{equation}

\noindent Such $\xi^{(n)}_k$ always exist by the mean value theorem. Their role is to enforce $\langle \rho_n,1\rangle = 1$.

Then for any $\phi \in C_c$ we have
	\begin{multline}
	\langle \rho_n, \phi \rangle = \langle \sum_{k = 0}^{K_n - 1}
	\Delta x^{(n)}_k\psi(\xi^{(n)}_k)\delta_{\xi^{(n)}_k}, \phi \rangle =\\
	=\sum_{k = 0}^{K_n - 1}\Delta x^{(n)}_k\psi(\xi^{(n)}_k)\phi(\xi^{(n)}_k) \to
	\intl{a}{b}\psi(x)\phi(x)\,dx = \langle \psi,\phi \rangle,
	\end{multline}

\noindent as both $\psi$ and $\phi$ are continuous, and thus, $\psi\phi$ is Riemann integrable. As $\mathbb P^\mathcal{I} \cap C_c$ is dense in 
$\mathbb P^\mathcal{I}$, it follows that $\mathbb P^\mathcal{I}_p$ is dense in $\mathbb P^\mathcal{I}$.

\end{proof}

Note that the proof of the theorem is easily extended to probability densities on $\mathbb R^n$, the case important for a multitrait selection 
considered in Section~\ref{multitrait}. The only difference is that in this case Riemann sums are built on the base of Jordan partitions of a 
Jordan-measurable set (a simple rectangular parallelepiped is enough) that contains $\supp \psi$.

We now will prove that the operators that generate $\sigma_x$ and $\sigma$ from $\rho$ defined by formulas (\ref{sigma-x}) and (\ref{sigma}) are 
continuous, too, and that they can be thus extended to any probability density. It means that they can be used independently of $A$, if the situation 
demands it. Let us denote these operators from $\mathbb P$ to itself as $\Sigma_x$ and $\Sigma$. Let us also denote $\sigma^\rho_x \deq \Sigma_x(\rho)$ 
and $\sigma^\rho \deq \Sigma(\rho)$ to be able to distinguish fitness distributions generated by different phenotypic distributions.

\begin{theorem}

For any $x \in \mathbb R$, the operators $\Sigma_x, \Sigma \colon \mathbb P \to \mathbb P$ are continuous.
\label{sigma-continuous}

\end{theorem}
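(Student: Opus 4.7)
The plan is to prove continuity of $\Sigma_x$ and $\Sigma$ by combining three ingredients: (i) upgrading vague convergence in $\mathbb P$ to ``weak'' convergence against all bounded continuous test functions, so that the analytical machinery of probability theory becomes available; (ii) establishing termwise continuity using joint continuity of convolution and pushforward on the space of probability measures with this stronger topology; and (iii) interchanging the termwise limit with the infinite sum by a Weierstrass-type uniform bound coming from the Poisson weights.

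First I would show that if $\rho_k \to \rho$ in $\mathbb P$, then $\langle \rho_k, f \rangle \to \langle \rho, f \rangle$ for every bounded continuous $f$, not just for $f \in C_c$. Take cutoffs $\phi_R \in C_c$ with $0 \leqslant \phi_R \leqslant 1$ and $\phi_R \equiv 1$ on $[-R,R]$. By vague convergence, $\langle \rho_k, \phi_R \rangle \to \langle \rho, \phi_R \rangle$, and $\langle \rho, \phi_R \rangle \to \langle \rho, 1 \rangle = 1$ as $R \to \infty$, which yields tightness of $\{\rho_k\}$. Splitting $f = f \phi_R + f(1 - \phi_R)$ and using $|\langle \rho_k, f(1-\phi_R)\rangle| \leqslant \|f\|_\infty (1 - \langle \rho_k, \phi_R\rangle)$, one concludes the upgrade by a standard three-term $\varepsilon$-argument.

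Once this is available, termwise continuity follows cheaply. For fixed $n$, the map $\rho \mapsto \rho^{\otimes(n+1)}$ is continuous from $\mathbb P(\mathbb R)$ to $\mathbb P(\mathbb R^{n+1})$ in the weak topology (product of weakly convergent probability measures converges weakly), and the addition map $\mathbb R^{n+1} \to \mathbb R$ is continuous, so pushing forward gives continuity of $\rho \mapsto \rho^{*(n+1)}$. A further pushforward by the homeomorphism $h_{n+1}$ preserves weak convergence, and the auxiliary $\delta_x$ factor in $\Sigma_x$ is a constant. Thus, for every $\phi \in C_c$ and every fixed $n$,
\begin{equation}
\langle (h_{n+1})_\star \rho_k^{*(n+1)}, \phi \rangle \to \langle (h_{n+1})_\star \rho^{*(n+1)}, \phi \rangle,
\end{equation}
and analogously with $\delta_x * \rho^{*n}$.

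Finally, to commute the limit with the series defining $\Sigma(\rho)$ and $\Sigma_x(\rho)$, observe that each $(h_{n+1})_\star \rho^{*(n+1)}$ is itself a probability measure, so $|\langle (h_{n+1})_\star \rho^{*(n+1)}, \phi \rangle| \leqslant \sup|\phi|$ uniformly in $n$ and in $\rho$. Since $\sum_{n \geqslant 0} e^{-\lambda}\lambda^n/n! = 1$, the Weierstrass M-test justifies exchanging $\lim_k$ with $\sum_n$, yielding $\langle \Sigma(\rho_k), \phi\rangle \to \langle \Sigma(\rho), \phi\rangle$, and similarly for $\Sigma_x$; that the limit lies in $\mathbb P$ is clear from testing against $1$. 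The main obstacle is step (i): vague convergence in $C_c'$ does not a priori control the tails, so one must use the preservation of total mass to obtain tightness before probability-theoretic tools can be deployed. Once this bridge is established, steps (ii) and (iii) are routine applications of standard weak-convergence stability results.
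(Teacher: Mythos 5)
Your proof is correct, and its skeleton coincides with the paper's: termwise convergence of $\langle \delta_x * \rho_k^{*n}, \phi\circ h_{n+1}\rangle$ for each fixed $n$, the uniform bound $P_n\sup|\phi|$ on the $n$-th term coming from the fact that each $(h_{n+1})_\star(\delta_x*\rho^{*n})$ is again a probability density, and an exchange of $\lim_k$ with $\sum_n$ via splitting the series into a finite head and a uniformly small tail (the paper formalizes this exchange with its logical lemma~\ref{lemma-logic}; your Weierstrass/dominated-convergence phrasing is the same device). The one genuine difference is your step (i), and it is a point in your favour: the paper simply asserts that $\rho_k \to \rho$ in $C_c'$ implies $\delta_x*\rho_k^{*n} \to \delta_x*\rho^{*n}$, but for $\phi \in C_c$ the function $(x_1,\ldots,x_n)\mapsto\phi(x_1+\cdots+x_n)$ is \emph{not} compactly supported, so vague convergence of the tensor powers does not apply directly --- e.g.\ $\delta_k*\delta_{-k}=\delta_0\not\to 0$ although $\delta_{\pm k}\to 0$ vaguely. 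Your observation that conservation of total mass $\langle\rho_k,1\rangle=\langle\rho,1\rangle=1$ forces tightness, upgrading vague to weak convergence and hence making convolution (and pushforward by $h_{n+1}$) sequentially continuous, is exactly what is needed to close that step; it is also consistent with Proposition~\ref{discontinuous}, since $x\mapsto x$ is continuous but unbounded. So your route is architecturally the same but more complete on the one step the paper leaves implicit.
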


\begin{proof}

We will prove the theorem only for $\Sigma_x$. The proof for $\Sigma$ is analogous. The proof is essentially based on the absolute convergence of all 
the involved numerical series.

Let us choose any sequence of $\rho_n \in \mathbb P$ that converges to some $\rho \in \mathbb P$ in $\mathbb P$. We need to prove that 
$\Sigma_x(\rho_n) \to \Sigma_x(\rho)$. Let us choose some $\phi \in C_c$. The value of $\langle \Sigma_x(\rho), \phi \rangle$ is equal to
	\begin{equation}
	\langle \Sigma_x(\rho), \phi \rangle = \sum_{k=0}^\infty P_k \langle \delta_x * \rho^{*k}, \phi \circ h_{k+1}\rangle,
	\end{equation}

\noindent where $P_k = e^{-\lambda}\lambda^k/k!$ and $h_k\colon x \mapsto x/k$.

First note that $\sup\limits_x |\phi \circ h_k (x)| \leqslant \sup\limits_x |\phi(x)|$. Let us denote $\Phi \deq \sup\limits_{x} |\phi(x)|$. Then the 
following estimate is correct
	\begin{multline}
	|\langle \delta_x * \rho_n^{*k} - \delta_x * \rho^{*k}, \phi\circ h_{k+1} \rangle| 
	\leqslant
	|\langle \delta_x * \rho_n^{*k},\phi\circ h_{k+1}\rangle| + |\langle\delta_x * \rho^{*k}, \phi\circ h_{k+1} \rangle| 
	\leqslant  \\ \leqslant 
	\Phi\Big(\langle \delta_x * \rho_n^{*k},1\rangle + \langle\delta_x * \rho^{*k}, 1\rangle\Big) = 2\Phi.
	\label{Phi-upper-bound}
	\end{multline}
This, in turn, implies
	\begin{equation}
	|\langle\Sigma_x(\rho_n) - \Sigma_x(\rho),\phi\rangle| \leqslant 
	\sum_{k=0}^\infty P_k|\langle\delta_x*\rho_n^{*k}-\delta_x*\rho^{*k},\phi\circ h_{k+1}\rangle|
	\leqslant 2\Phi\sum_{k=0}^\infty P_k = 2\Phi < \infty.
	\end{equation}
Therefore, for any $\epsilon > 0$ there is $k_0$ such that $\sum\limits_{k=k_0}^\infty 2\Phi P_k < \epsilon$, and, thus, by (\ref{Phi-upper-bound}), 
for any $n$
	\begin{equation}
	\sum_{k=k_0}^\infty P_k|\langle\delta_x*\rho_n^{*k}-\delta_x*\rho^{*k},\phi\circ h_{k+1}\rangle| < \epsilon.
	\end{equation}

From the other hand, for any $k$, $\rho_n \to \rho$ implies $\delta_x*\rho_n^{*k} \to \delta_x*\rho^{*k}$. Therefore, for any $\epsilon > 0$ and any 
$k_0$ there is $n_0$ such that for any $n > n_0$
	\begin{equation}
	\sum_{k=0}^{k_0-1} P_k|\langle\delta_x*\rho_n^{*k}-\delta_x*\rho^{*k},\phi\circ h_{k+1}\rangle| < \epsilon.
	\end{equation}

Using the following intuitive logical formula
	\begin{equation}
	\Big(\forall x\exists y\forall \z\, A(x,y,\z)\Big) \wedge \Big(\forall x\forall y \exists \z\, B(x,y,\z) \Big)
	\Rightarrow \forall x\exists y\exists \z\ \Big(A(x,y,\z) \wedge B(x,y,\z)\Big),
	\label{lemma-logic}
	\end{equation}

\noindent where $A$ and $B$ are some propositional functions in three variables, we conclude that for any $\epsilon > 0$ there exists $n_0$ such that 
for any $n > n_0$
	\begin{equation}
	\sum_{k=0}^\infty P_k|\langle\delta_x*\rho_n^{*k}-\delta_x*\rho^{*k},\phi\circ h_{k+1}\rangle| < 2\epsilon,
	\end{equation}

\noindent and thus, $|\Sigma_x(\rho_n) - \Sigma_x(\rho)| < 2\epsilon$.

\end{proof}

The only assertion that is left to be proven to justify the extension of $\Sigma_x$ and $\Sigma$ from $\mathbb P_p$ to $\mathbb P$ is that $\mathbb 
P_p$ is dense in $\mathbb P$. This theorem can be proven in the same way as Theorem~\ref{theorem-dense} but simpler. One can simply take $\psi_n = 
\rho*\omega_n$.

\subsection{Continuity of $A$ and the operators that generate $\sigma$ and $\sigma_x$ for a nonlinear selection function $f$ majorated by an 
exponential function}
\label{app-f-continuity}

The assumption of the Poisson distribution of the individuals in the compartments is essential here. We also assume that the phenotype-fitness relation 
is defined by a continuous selection function $f$. By its meaning, $f$ is expected to be nonnegative on the positive semiaxis. We, however, will treat 
a more general case, which will be useful for the question of an approximation of $f$. The phenotype is considered to be additive. The notations are 
the same as in Appendix~\ref{app-continuity}, except that by $\sigma_x^{f,\rho}$ and $\sigma^{f,\rho}$ we will denote the expressions (\ref{sigma-x}) 
and (\ref{sigma}), respectively, where $h_n \colon x \mapsto f(x)/n$, and we explicitly indicate the dependence on $f$ and $\rho$. By $\Sigma_x^f$ and 
$\Sigma^f$ we will denote the operators $\Sigma_x^f\colon \rho \mapsto \sigma_x^{f,\rho}$ and $\Sigma^f \colon \rho \mapsto \sigma^{f,\rho}$. We will 
also denote $N^f \deq \{\rho \in \mathbb P \,|\, \langle \sigma^{f,\rho},x\rangle = 0\}$, $N^{f,\mathcal I} \deq N^f \cap \mathbb P^\mathcal{I}$, and 
$N^{f,\mathcal I}_p \deq N^f \cap \mathbb P^\mathcal{I}_p$.

With a nonlinear selection function $f$, the situation becomes more complicated. In general, both $\sigma^{f,\rho}$ and $\sigma_x^{f,\rho}$ are 
not compactly supported densities anymore. Thus, the update operator $A^f \colon \rho \mapsto 
\dfrac{\langle\sigma_x^{f,\rho},y\rangle}{\langle\sigma^{f,\rho},y\rangle}\rho$ may not even be defined on all densities from $\mathbb P\setminus N^f$ 
or even from $\mathbb P_p^\mathcal{I}\setminus N^{f,\mathcal I}_p$.

We will prove first that the operator in question is, indeed, well defined for some class of functions $f$.

\begin{theorem}

Let $f$ be a continuous function. Let $a$ and $b$ be positive real numbers such that $|f(x)| \leqslant a\ch bx$ for all $x \in \mathbb R$. Then for any 
compactly supported probability density $\rho$ and for any $\lambda > 0$ the expectations of $\sigma_x^{f,\rho}$ and $\sigma^{f,\rho}$ are finite.

\label{theorem-finiteness}
\end{theorem}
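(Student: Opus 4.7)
The plan is to expand $\langle \sigma_x^{f,\rho}, y\rangle$ and $\langle \sigma^{f,\rho}, y\rangle$ as explicit numerical series and show that each is dominated termwise by the Taylor coefficients of an absolutely convergent exponential series, using the moment generating function of $\rho$.

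First I would unwind the definition of the pushforward. For any compactly supported generalized function $\tau$, since $h_{n+1}\colon y \mapsto f(y)/(n+1)$, we have $\langle (h_{n+1})_\star \tau, y \rangle = \langle \tau, h_{n+1}\rangle = \frac{1}{n+1}\langle \tau, f\rangle$. Note that even though $f \notin C_c$, the pairings $\langle \delta_x * \rho^{*n}, f\rangle$ and $\langle \rho^{*(n+1)}, f\rangle$ are well defined because $\rho$ is finitely supported (so the convolution powers have compact support) and $f$ is continuous. Substituting into (\ref{sigma-x}) and (\ref{sigma}) termwise yields
\begin{equation}
\langle \sigma_x^{f,\rho}, y\rangle = \sum_{n=0}^\infty \frac{e^{-\lambda}\lambda^n}{(n+1)!}\,\langle \delta_x * \rho^{*n}, f\rangle,\qquad
\langle \sigma^{f,\rho}, y\rangle = \sum_{n=0}^\infty \frac{e^{-\lambda}\lambda^n}{(n+1)!}\,\langle \rho^{*(n+1)}, f\rangle.
\end{equation}

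Next I would plug in the majoration $|f(z)| \leqslant a\ch bz = \frac{a}{2}(e^{bz}+e^{-bz})$ and use the moment generating function $\psi(t) = \langle \rho, e^{tx}\rangle$, which is finite for all real $t$ because $\supp\rho$ is compact. A direct computation using $\langle \rho_1 * \rho_2, e^{bz}\rangle = \langle \rho_1, e^{bz}\rangle\cdot\langle \rho_2, e^{bz}\rangle$ gives
\begin{equation}
|\langle \delta_x * \rho^{*n}, f\rangle| \leqslant \tfrac{a}{2}\bigl(e^{bx}\psi(b)^n + e^{-bx}\psi(-b)^n\bigr),\qquad
|\langle \rho^{*(n+1)}, f\rangle| \leqslant \tfrac{a}{2}\bigl(\psi(b)^{n+1} + \psi(-b)^{n+1}\bigr).
\end{equation}

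Finally, I would compare with exponential series. Summing the majorant gives
\begin{equation}
\sum_{n=0}^\infty \frac{e^{-\lambda}\lambda^n}{(n+1)!}\cdot\tfrac{a}{2}\bigl(e^{bx}\psi(b)^n + e^{-bx}\psi(-b)^n\bigr) \leqslant \tfrac{a}{2\lambda}\bigl(e^{bx}e^{\lambda\psi(b)-\lambda} + e^{-bx}e^{\lambda\psi(-b)-\lambda}\bigr),
\end{equation}
which is finite, and an analogous bound handles $\langle \sigma^{f,\rho}, y\rangle$. This establishes absolute convergence and hence finiteness of both expectations. I do not expect a real obstacle here once one observes that the generating function $\psi(\pm b)$ exists by compact support of $\rho$; the only small care needed is to justify pairing the compactly supported convolution powers against the non-compactly-supported continuous function $f$, which is routine since each individual term is an integral of a continuous function against a compactly supported measure.
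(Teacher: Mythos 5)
Your proposal is correct and follows essentially the same route as the paper: both majorize $|f|$ by $a\ch bx$, factor the pairings $\langle\delta_x*\rho^{*n},e^{\pm by}\rangle$ through the moment generating function $\psi(\pm b)$ (finite by compactness of $\supp\rho$), and establish absolute convergence by summing the exponential series $\sum_n \frac{e^{-\lambda}\lambda^n}{(n+1)!}\psi(\pm b)^n$. The only blemish is your final displayed bound, which in effect replaces $(e^{\lambda\psi(\pm b)}-1)/\psi(\pm b)$ by $e^{\lambda\psi(\pm b)}$ and therefore fails as a literal inequality when $\psi(-b)<1$ and $\lambda$ is large (e.g.\ $\rho=\delta_1$, $b=1$); this does not affect the conclusion, since the series has the finite closed form $e^{-\lambda}\bigl(e^{\lambda\psi}-1\bigr)/(\lambda\psi)$, and the paper sidesteps the issue by bounding both terms via $\tilde\psi_\rho(b)=\max\bigl(\psi_\rho(b),\psi_\rho(-b)\bigr)$ and $e^{|bx|}$.
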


\begin{proof}

To prove the theorem we will show that the series involved in $\langle \sigma_x^{f,\rho}, y \rangle$ and in $\langle \sigma^{f,\rho}, y \rangle$ 
converge absolutely, namely that
	\begin{equation}
	\sum_{n=0}^\infty\frac{e^{-\lambda}\lambda^n}{(n+1)!}|\langle \delta_x * \rho^{*n}, f(y)\rangle| < \infty,\quad
	\sum_{n=0}^\infty\frac{e^{-\lambda}\lambda^n}{(n+1)!}|\langle \rho^{*n+1}, f(y)\rangle| < \infty.
	\end{equation}

First, for any compactly supported probability density $\nu$ we have $|\langle \nu, f(x) \rangle| \leqslant \langle \nu, |f(x)|\rangle$. Then, using 
the estimate $|f(x)| \leqslant a\ch bx$, the expression for the moment generating function $\psi_\rho(s) \deq \langle \rho, e^{sx}\rangle$, and the 
fact that $\psi_{\delta_a}(s) = e^{as}$, we obtain the estimates
	\begin{multline} 
	\sum_{n=0}^\infty\frac{e^{-\lambda}\lambda^n}{(n+1)!}|\langle \delta_x * \rho^{*n}, f(y)\rangle| \leqslant 
	\sum_{n=0}^\infty\frac{e^{-\lambda}\lambda^n}{(n+1)!}\frac{a}{2}(e^{bx}\psi_\rho(b)^n+e^{-bx}\psi_\rho(-b)^n) \leqslant \\ \leqslant 
	\sum_{n=0}^\infty\frac{e^{-\lambda}\lambda^n}{(n+1)!}ae^{|bx|}\tilde\psi_\rho(b)^n = 
	ae^{|bx|-\lambda}\frac{e^{\lambda\tilde\psi_\rho(b)}-1}{\lambda\tilde\psi_\rho(b)} < \infty,
	\label{sigma-x-converge}
	\end{multline}

\noindent where $\tilde\psi_\rho(x) = \max\Big(\psi_\rho(x),\psi_\rho(-x)\Big)$, and
	\begin{multline}
	\sum_{n=0}^\infty\frac{e^{-\lambda}\lambda^n}{(n+1)!}|\langle \rho^{*n+1}, f(y)\rangle| \leqslant
	\sum_{n=0}^\infty\frac{e^{-\lambda}\lambda^n}{(n+1)!}\frac{a}{2}(\psi_\rho(b)^{(n+1)}+\psi_\rho(-b)^{(n+1)}) \leqslant \\
	\leqslant \sum_{n=0}^\infty\frac{e^{-\lambda}\lambda^n}{(n+1)!}a\tilde\psi_\rho(b)^{(n+1)} =
	a\frac{e^{\lambda \tilde\psi_\rho(b)} - 1}{\lambda e^\lambda} < \infty.
	\label{sigma-converge}
	\end{multline}

The last relations in the chains follow from the fact that for any compactly supported probability density $\rho$ its moment generating function 
$\psi_\rho$ is positive and finite for any value of the argument.
\end{proof}

A counterexample to the theorem's statement with the dropped condition $|f(x)| \leqslant a\ch bx$ is given by the function $e^{x^2}$ and the 
density $\delta_1$. Indeed, in this case the expressions for $\sigma_x$ and $\sigma$ coincide and we have
	\begin{equation}
	\langle\sigma^{e^{x^2},\delta_1}_1,y\rangle = \langle\sigma^{e^{x^2},\delta_1},y\rangle = 
	\sum_{n=0}^\infty \frac{e^{-\lambda}\lambda^n}{(n+1)!}e^{(n+1)^2}.
	\end{equation}
This series is divergent as its positive terms increase with the increase of $n$.

The statement of theorem can be extended to a wider class of functions $f$ than merely continuous (keeping the majorating condition). This requires a 
construction of the fully developed theory of integration for Radon measures, which is possible but is not of an interest in this work.

The next two theorem establish the continuity of $A^f$, $\Sigma_x^f$, and $\Sigma^f$.

\begin{theorem}

For any interval $\mathcal I$, under conditions of Theorem~\ref{theorem-finiteness}, the operator $A^f$ is continuous on the subset $\mathbb 
P^\mathcal{I}\setminus N^{f,\mathcal I}$.

\label{theorem-Af-continuous}
\end{theorem}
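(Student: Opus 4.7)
The plan is to mirror the architecture of the proof of theorem~\ref{theorem-continuous}, with the affine multiplier $1 - g(\lambda) + g(\lambda)x/\bar x$ replaced by the ratio $m^\rho(x)/M^\rho$, where
\begin{equation*}
m^\rho(x) \deq \langle \sigma^{f,\rho}_x, y\rangle = \sum_{k=0}^\infty \tfrac{e^{-\lambda}\lambda^k}{(k+1)!}\langle \delta_x * \rho^{*k}, f\rangle, \qquad M^\rho \deq \langle \sigma^{f,\rho}, y\rangle = \sum_{k=0}^\infty \tfrac{e^{-\lambda}\lambda^k}{(k+1)!}\langle \rho^{*(k+1)}, f\rangle.
\end{equation*}
Fix an arbitrary sequence $\rho_n \to \rho$ in $\mathbb{P}^\mathcal{I}\setminus N^{f,\mathcal I}$. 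The strategy is to establish (i)~$M^{\rho_n} \to M^\rho$ (nonzero in the limit, by definition of $N^{f,\mathcal I}$), and (ii)~$m^{\rho_n} \to m^\rho$ uniformly on $\mathcal I$; the continuity of $A^f$ in $C_c'$ will then follow from the product-decomposition trick used at the end of the proof of theorem~\ref{theorem-continuous}.

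For (i), I would follow the template of theorem~\ref{sigma-continuous}. Each term $\langle \rho^{*(k+1)}, f\rangle$ depends continuously on $\rho$ in the $C_c'$-topology, since convolution preserves $C_c'$-convergence when supports are uniformly confined in some multiple of $\mathcal I$, and pairing a compactly supported density with a continuous function is jointly continuous. The essential new ingredient is that the exponential majoration makes the tail estimate~(\ref{sigma-converge}) \emph{uniform} over $\rho \in \mathbb{P}^\mathcal{I}$: any such $\rho$ satisfies $\tilde\psi_\rho(b) \leqslant e^{|b|\mathcal L}$. Applying lemma~\ref{lemma-logic} exactly as in theorem~\ref{sigma-continuous} then delivers $M^{\rho_n} \to M^\rho$. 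The same argument, using~(\ref{sigma-x-converge}) together with $e^{|bx|}\leqslant e^{|b|\mathcal L}$ on $\mathcal I$, yields pointwise convergence $m^{\rho_n}(x) \to m^\rho(x)$ for every $x\in\mathcal I$.

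To promote this pointwise convergence to uniform convergence on $\mathcal I$ I would invoke Arzelà--Ascoli. For each fixed $k$, the map $x \mapsto F_k(\rho,x) \deq \langle \rho^{*k}, f(x+\cdot)\rangle$ is equicontinuous in $x$ uniformly over $\rho \in \mathbb{P}^\mathcal{I}$, because $f$ is uniformly continuous on the compact $[0,(k+1)\mathcal L]$ while $\langle \rho^{*k},1\rangle = 1$. Combining this with the uniform smallness of the series tail (uniform in both $x\in\mathcal I$ and $\rho\in\mathbb{P}^\mathcal{I}$) gives uniform convergence $m^{\rho_n} \to m^\rho$ on $\mathcal I$. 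With (i) and (ii) in hand, for any $\phi\in C_c$ pick a compactly supported continuous extension $\widetilde{m^\rho\phi}$ of $(m^\rho\phi)|_\mathcal{I}$ and decompose
\begin{equation*}
\langle \rho_n, m^{\rho_n}\phi\rangle - \langle \rho, m^\rho\phi\rangle = \langle \rho_n - \rho, \widetilde{m^\rho\phi}\rangle + \langle \rho_n, (m^{\rho_n}-m^\rho)\phi\rangle;
\end{equation*}
the first summand vanishes because $\rho_n\to\rho$ in $C_c'$, the second is bounded by $\|\phi\|_\infty\,\sup_\mathcal{I}|m^{\rho_n}-m^\rho|\to 0$. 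Dividing by $M^{\rho_n}\to M^\rho\neq 0$ concludes that $\langle A^f(\rho_n),\phi\rangle \to \langle A^f(\rho),\phi\rangle$. The main obstacle is the equicontinuity-plus-uniform-tail step required to upgrade pointwise convergence of $m^{\rho_n}$ to uniform convergence on $\mathcal I$; the rest is a routine adaptation of theorem~\ref{theorem-continuous}.
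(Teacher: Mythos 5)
Your proposal is correct and follows essentially the same route as the paper's proof: a tail estimate made uniform over $\mathbb P^\mathcal{I}$ by the exponential majoration and the confinement $\tilde\psi_\rho(b)\leqslant e^{|b|\mathcal L}$, termwise convergence of the finite head, lemma~\ref{lemma-logic} to combine them, uniform convergence of the multiplier on $\mathcal I$, and the product-decomposition argument of theorem~\ref{theorem-continuous} to conclude. Your explicit equicontinuity/Arzel\`a--Ascoli justification of the passage from pointwise to uniform convergence of $x\mapsto\langle\sigma_x^{f,\rho_n},y\rangle$ is in fact a welcome sharpening of a step the paper states rather tersely (it infers uniformity from continuity of the terms and absolute convergence of the series).
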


\begin{proof}

The proof essentially repeats the proof of Theorem~\ref{sigma-continuous}.

Let $\rho_n \in \mathbb P^\mathcal{I}$ be a sequence that approaches some $\rho \in \mathbb P^\mathcal{I}$ in $\mathbb P^\mathcal{I}$. We will prove 
that $\langle \sigma_x^{f,\rho_n},y\rangle$, as functions of $x$, converge to $\langle \sigma_x^{f,\rho},y\rangle$ uniformly on $\mathcal I$. The 
convergence $\langle \sigma^{f,\rho_n},y\rangle \to \langle \sigma^{f,\rho},y\rangle$ is proven analogously. The both facts will imply that, if $\rho 
\in \mathbb P^\mathcal{I}\setminus N^{f,\mathcal I}$, then
	\begin{equation}
	\frac{\langle \sigma_x^{f,\rho_n},y\rangle}{\langle \sigma^{f,\rho_n},y\rangle} \to 
	\frac{\langle\sigma_x^{f,\rho},y\rangle}{\langle \sigma^{f,\rho},y\rangle}
	\end{equation}

\noindent uniformly on $\mathcal I$, and thus, $A^f(\rho_n) \to A^f(\rho)$ in $\mathbb P^\mathcal{I}\setminus N^{f,\mathcal I}$.

As $\supp\rho_n \subset \mathcal I$ and $\supp\rho \subset \mathcal I$, we have the pointwise convergence $\psi_{\rho_n} \to \psi_\rho$. Indeed, for 
any $t \in \mathbb R$ we have
	\begin{equation}
	\psi_{\rho_n}(t) - \psi_\rho(t) = \langle\rho_n - \rho,e^{xt}\rangle = \langle\rho_n - \rho,\eta_t(x)\rangle \to 0,
	\end{equation}

\noindent where $\eta_t \in C_{c+}$ such that $\forall x \in \mathcal I$ $\eta_t(x) = e^{xt}$.

Therefore, for any $n$, we have the following estimate
	\begin{equation}
	|\langle \delta_x*(\rho_n^{*k} - \rho^{*k}),f(y)\rangle| \leqslant 
	ae^{b\mathcal{L}}\left(\left(\sup_m \tilde\psi_{\rho_m}(b)\right)^k + \tilde\psi_\rho(b)^k\right) \leqslant
	ae^{b\mathcal{L}}\left(\sup_m \tilde \psi_{\rho_m}(b) + \tilde\psi_\rho(b)\right)^k,
	\end{equation}

\noindent where the notation is the same is in the proof of Theorem~\ref{theorem-finiteness}.

Let us denote $c = \sup\limits_m \tilde\psi_{\rho_m}(b) + \tilde\psi_\rho(b)$. It follows that for any $n$
	\begin{equation}
	\sum_{k=0}^\infty\frac{e^{-\lambda}\lambda^k}{(k+1)!}|\langle \delta_x * (\rho_n^{*k} - \rho^{*k}), f(y)\rangle| \leqslant
	\sum_{k=0}^\infty\frac{ae^{b\mathcal{L}-\lambda}(c\lambda)^k}{(k+1)!} =
	\frac{ae^{b\mathcal{L}-\lambda}\left(e^{c\lambda} - 1\right)}{c\lambda} < \infty,
	\end{equation}

\noindent and therefore, for any $\epsilon > 0$ there is $k_0$ such that $\sum\limits_{k = 
k_0}^\infty\frac{ae^{b\mathcal{L}-\lambda}(c\lambda)^k}{(k+1)!} < \epsilon$, and, thus, for any $n$
	\begin{equation}
	\sum_{k=k_0}^\infty\frac{e^{-\lambda}\lambda^k}{(k+1)!}|\langle \delta_x * (\rho_n^{*k} - \rho^{*k}), f(y)\rangle| < \epsilon.
	\end{equation}

From the other hand, as $\rho_n \to \rho$, for any $\epsilon > 0$ and any $k_0$ there is $n_0$ such that for any $n > n_0$
	\begin{equation}
	\sum_{k=0}^{k_0 - 1}\frac{e^{-\lambda}\lambda^k}{(k+1)!}|\langle \delta_x * (\rho_n^{*k} - \rho^{*k}), f(y)\rangle| < \epsilon.
	\end{equation}

Therefore, by (\ref{lemma-logic}), it follows that for any $x$, $\langle \sigma_x^{f,\rho_n}, y\rangle \to \langle \sigma_x^{f,\rho}, y\rangle$.

For any $\rho \in \mathbb P$ with bounded support the function $\langle \delta_x*\rho,f(y)\rangle$ is continuous in $x$. Indeed, we have $\langle 
(\delta_{x'}-\delta_x)*\rho,f(y)\rangle = \langle\rho, f(y+x') - f(y+x)\rangle$ and $f(y + x') \to f(y+x)$ uniformly on $\supp \rho$ as $x' \to x$. 
Therefore, $\langle\sigma_x^{f,\rho_n},y\rangle$ and $\langle\sigma_x^{f,\rho},y\rangle$ are continuous in $x$ as absolutely convergent series of 
continuous functions. This, in turn, implies that $\langle \sigma_x^{f,\rho_n}, y\rangle \to \langle \sigma_x^{f,\rho}, y\rangle$ uniformly on 
$\mathcal I$ as functions of $x$.

\end{proof}

Note that $N^{f,\mathcal I}$ is nowhere dense in $\mathbb P^\mathcal{I}$ and $N^{f,\mathcal I}_p$ is nowhere dense in $\mathbb P^\mathcal{I}_p$. 
Indeed, this follows from the implication $\neg(\langle\sigma^{f,\rho_n},x\rangle \to \langle\sigma^{f,\rho},x\rangle)$ $\Rightarrow$ $\neg(\rho_n \to 
\rho)$ , which, in turn, is equivalent to the proven implication $\rho_n \to \rho$ $\Rightarrow$ $\langle\sigma^{f,\rho_n},x\rangle \to 
\langle\sigma^{f,\rho},x\rangle$. Note also that the statement of the theorem can be extended to $\mathbb P^K$, where $K$ is any compact set with 
nonempty interior. In the case when $\forall x > 0$, $f(x) > 0$, we have either $N^{f,\mathcal I} = \{\delta_0\}$ or $N^{f,\mathcal I} = \varnothing$, 
so all these subtleties become irrelevant for the extensions by continuity of $A^f$ from $\mathbb P^\mathcal{I}_p$ to $\mathbb P^\mathcal{I}$.

\begin{theorem}

Under conditions of Theorem~\ref{theorem-finiteness}, the operators $\Sigma_x^f$ and $\Sigma^f$ are continuous on $\mathbb P$.

\end{theorem}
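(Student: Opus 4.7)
The plan is to follow closely the template of the proof of theorem~\ref{sigma-continuous}, with $h_n$ now standing for $h_n\colon x \mapsto f(x)/n$ as throughout section~\ref{nonlinear}. I will treat $\Sigma_x$ in detail; the argument for $\Sigma$ is entirely analogous, with $\delta_x * \rho^{*k}$ replaced by $\rho^{*(k+1)}$. Fix $\rho \in \mathbb P$, a sequence $\rho_n \to \rho$ in $\mathbb P$, an arbitrary $\phi \in C_c$, and set $\Phi \deq \sup_x |\phi(x)|$. The objective is to swap the limit $n \to \infty$ with the summation in
\begin{equation*}
\langle \Sigma_x(\rho_n), \phi \rangle = \sum_{k=0}^\infty P_k \langle \delta_x * \rho_n^{*k}, \phi \circ h_{k+1} \rangle,
\end{equation*}
which fits exactly the shape of lemma~\ref{lemma-logic}.

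For the uniform tail control, the decisive observation will be that $\sup_y |\phi(h_{k+1}(y))| \leqslant \Phi$ independently of $f$, $k$, or $\rho$, whereas $\delta_x * \rho^{*k}$ remains a probability measure for any $\rho \in \mathbb P$. This will give
\begin{equation*}
\sum_{k=k_0}^\infty P_k \bigl|\langle \delta_x * (\rho_n^{*k} - \rho^{*k}), \phi \circ h_{k+1} \rangle\bigr| \leqslant 2\Phi \sum_{k=k_0}^\infty P_k,
\end{equation*}
which can be made smaller than any prescribed $\epsilon$ by choosing $k_0$ large, \emph{uniformly in} $n$. It is worth emphasizing that the exponential majoration $|f| \leqslant a\ch bx$ of theorem~\ref{theorem-finiteness} is not actually required for this step: we only test $\Sigma_x(\rho)$ against $\phi \in C_c$, not against the possibly unbounded $f(y)$ used to form expectations.

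For the pointwise-in-$k$ convergence, joint continuity of convolution on $\mathbb P$ will give $\rho_n^{*k} \to \rho^{*k}$ and hence $\delta_x * \rho_n^{*k} \to \delta_x * \rho^{*k}$ in $\mathbb P$ for every fixed $k$. The essential novelty compared with the linear case is that $\phi \circ h_{k+1}$ is only bounded and continuous; since $f$ need not be proper, this composition need not sit in $C_c$, and so vague convergence in $C_c'$ does not directly supply convergence against it. I therefore plan to upgrade vague to narrow convergence: the limit $\delta_x * \rho^{*k}$ being a probability measure and all total masses being equal to $1$, a standard tightness argument (choose $\chi \in C_{c+}$ with $\chi \leqslant 1$ and $\chi \equiv 1$ on a compact carrying all but $\epsilon$ of the mass of the limit, then exploit $\langle \delta_x * \rho_n^{*k}, \chi \rangle \to \langle \delta_x * \rho^{*k}, \chi \rangle$ to control the mass at infinity) yields uniform tightness, and tightness together with vague convergence is narrow convergence. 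Consequently $\langle \delta_x * \rho_n^{*k}, \psi \rangle \to \langle \delta_x * \rho^{*k}, \psi \rangle$ for every bounded continuous $\psi$, in particular for $\psi = \phi \circ h_{k+1}$.

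Combining these two ingredients through lemma~\ref{lemma-logic} yields $\langle \Sigma_x(\rho_n), \phi \rangle \to \langle \Sigma_x(\rho), \phi \rangle$, and arbitrariness of $\phi \in C_c$ gives $\Sigma_x(\rho_n) \to \Sigma_x(\rho)$ in $\mathbb P$. The hardest part, in my judgement, will be this vague-to-narrow upgrade together with the verification that convolution on $\mathbb P$ is jointly continuous in the relevant sense; the series-swap itself is a verbatim transcription of the scheme already used twice earlier in the appendix, and the domination bound is in fact simpler here than the one established in theorem~\ref{theorem-finiteness}.
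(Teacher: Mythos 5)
Your proposal is correct and follows the same skeleton as the paper's proof, which consists of exactly one remark---that $\sup_x|\phi(f(x)/n)| \leqslant \sup_x|\phi(x)|$---followed by the instruction to repeat the proof of theorem~\ref{sigma-continuous} verbatim. The one place where you genuinely go beyond the paper is the vague-to-narrow upgrade, and you are right that something of this kind is needed: in the linear case $\phi \circ h_{k+1}$ lies in $C_c$, so the termwise convergence $\langle \delta_x * \rho_n^{*k}, \phi\circ h_{k+1}\rangle \to \langle \delta_x*\rho^{*k},\phi\circ h_{k+1}\rangle$ is immediate from convergence in $C_c'$, whereas for a general continuous $f$ (say a bounded one, with $\phi(0)\neq 0$) the composition $\phi\circ h_{k+1}$ is merely bounded and continuous rather than compactly supported, and testing a vaguely convergent sequence against it requires precisely the tightness argument you describe---all the measures involved are probability measures and so is the limit, hence no mass escapes to infinity. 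The same caveat applies to the joint continuity of convolution you invoke: it holds on $\mathbb P$ only because of mass preservation (compare $\delta_n * \delta_{-n} = \delta_0$ while $\delta_{\pm n} \to 0$ vaguely), so the cleanest order of operations is to upgrade $\rho_n \to \rho$ to narrow convergence first and only then convolve; this is a reordering of your steps, not a gap. Your observation that the exponential majoration of $f$ is not actually used in this proof is also accurate---it is needed for theorem~\ref{theorem-finiteness} and for the update operator $A^f$, not for $\Sigma_x$ and $\Sigma$ as maps of $\mathbb P$ to itself---and indeed the paper's own one-line argument never appeals to it either.
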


\begin{proof}

Note that for any $\phi \in C_c$ we have $\sup\limits_x \left|\phi\left(\dfrac{f(x)}{n}\right)\right| \leqslant \sup\limits_x |\phi(x)|$. After that, 
the proof literally repeats the proof of Theorem~\ref{sigma-continuous}.

\end{proof}

\subsection{
Polynomial selection function and sums of exponential with polynomial coefficients}
\label{app-polynomial}

Let the total fitness in a compartment with $n$ individuals characterized by phenotypes $x_1$, \ldots, $x_n$ be given by $f(x_1 + \ldots + x_n)$, where
	\begin{equation}
	f(x) = a_0 + a_1 x + \ldots + a_m x^m.
	\end{equation}

As it is shown in Section~\ref{polynomial}, to find $\sigma_x$ and $\sigma$ it is enough to consider $(s^k_n)_x = \langle \delta_x*\rho^{*n}, y^k 
\rangle$ and $s^k_n = \langle \rho^{*n+1},y^k\rangle$ for any $k$, $0 \leqslant k \leqslant m$, and then to find the sums 
$\sum\limits_n\frac{P_n}{n+1}(s^k_n)_x$ and $\sum\limits_n\frac{P_n}{n+1}s^k_n$, $P_n = e_{}^{-\lambda}\lambda^n/n!$.

Note that the operation of the multiplication of a generalized function $\rho$ by the parameter ($x\rho$) is a derivation on a convolution algebra, 
that is it is linear and obeys the Leibniz rule:
	\begin{equation}
	x(\rho_1 * \rho_2) = (x\rho_1)*\rho_2 + \rho_1*(x\rho_2).
	\label{leibniz}
	\end{equation}

\noindent Indeed, for any $\phi \in C_c$ we have
	\begin{multline}
	\langle x(\rho_1*\rho_2), \phi\rangle = \langle \rho_1*\rho_2, x\phi\rangle = \langle \rho_1\otimes\rho_2,(x_1+x_2)\phi(x_1+x_2)\rangle=\\
	=\langle (x_1\rho_1)\otimes\rho_2 + \rho_1\otimes(x_2\rho_2),\phi(x_1+x_2)\rangle = \langle (x\rho_1)*\rho_2 + \rho_1*(x\rho_2),\phi\rangle.
	\end{multline}

As $\langle \rho,y^m\rangle = \overline{x^m}$ and $\langle \delta_x,y^m\rangle = x^m$, $(s^k_n)_x = \langle y^k(\delta_x*\rho^{*n}),1\rangle$ is a 
linear combination of all expressions of the form $x^\alpha\prod\limits_i\overline{x^{\beta_i}}^{\gamma_i}$, where all $\beta_i$ are different and 
$\alpha + \sum\limits_i(\beta_i + \gamma_i) = k$. For example, for $k=2$ we have $x^2$, $x \bar x$, $\bar x^2$, $\overline{x^2}$, for $k = 3$ we have 
$x^3$, $x^2 \bar x$, $x \bar x^2$, $x \overline{x^2}$, $\bar x^3$, $\bar x \overline{x^2}$, $\overline{x^3}$, for $k = 4$ we have $x^4$, $x^3\bar x$, 
$x^2\bar x^2$, $x^2 \overline{x^2}$, $x \bar x^3$, $x \bar x \overline{x^2}$, $x \overline{x^3}$, $\bar x^4$, $\bar x^2\overline{x^2}$, $\bar x 
\overline{x^3}$, $\overline{x^2}^2$, $\overline{x^4}$, etc. These expressions enter $(s^k_n)_x$ with coefficients that are (nonnegative) polynomials in 
$n$ of the form $n(n-1)\ldots(n-l+1)a_l(k)$ or just $a_0(k)$ ($a_i(k) \in \mathbb N$). The polynomial (in $k$) functions $a_i(k)$ can be in principle 
found using some combinatorics. At the very least, they are algorithmically computable.

The sums $\sum\limits_n P_n (s^k_n)_x/(n+1)$ can be evaluated using the identity
	\begin{eqnarray}
	\sum_{n=0}^\infty \frac{\lambda^n}{n!}\frac{n(n-1)(n-2)\ldots(n-p+1)}{n+1} = \lambda^p\sum_{n=0}^\infty 
	\frac{\lambda^n}{n!}\frac{1}{n+p+1} =
	\lambda^p \frac{d}{d\lambda^p}\frac{e_{}^\lambda - 1}{\lambda}.
	\end{eqnarray}

Likewise, $s^k_n$, being $\langle x^k \rho^{*n+1},1 \rangle$, is a linear combination of all expressions of the form 
$\prod\limits_i\overline{x^{\beta_i}}^{\gamma_i}$, where all $\beta_i$ are different and $\sum\limits_i(\beta_i + \gamma_i) = k$. The coefficients in 
front these expressions in $s^k_n$ are of the form $(n+1)n(n-1)\ldots(n-l+1)a_l(k)$, $l \in \mathbb N$. The summation of $\sum\limits_n P_n 
s^k_n/(n+1)$ is trivial.\\

Let us now consider that the total fitness in a compartment with $n$ individuals with phenotypes $x_1,\ldots,x_n$ is given by $f(x_1+\ldots+x_n)$, 
where $f$ is a linear combination of exponential functions with polynomial coefficients:
	\begin{equation}
	f(x) = \sum_{j=1}^m p_j(x) e_{}^{a_j x},
	\label{polynom-exp}
	\end{equation}

\noindent where $p_j$ are polynomials. It is again sufficient to find $\langle \delta_x*\rho^{*n}, y^k e_{}^{ay} \rangle$ and $\langle \rho^{*n+1}, 
y^k e_{}^{a y} \rangle$ for any $k$ and $a$. Note that $\langle x\rho, e_{}^{ax}\rangle = \left.\dfrac{d}{ds}\psi(s)\right|_{s=a}$, where $\psi(a) = 
\langle \rho, e_{}^{ax} \rangle$ (this is directly related to (\ref{leibniz})). Therefore, $\langle \delta_x*\rho^{*n},y^k e_{}^{ay}\rangle = 
\left.\dfrac{d^k}{ds^k}(e_{}^{sx}\psi(s)^n)\right|_{s=a}$ and $\langle \rho^{*n+1},y^k e_{}^{ay}\rangle = 
\left.\dfrac{d^k}{ds^k}\psi(s)^{n+1}\right|_{s=a}$. These expressions can be summed with $e_{}^{-\lambda}\lambda^n/(n+1)!$ in the same way as in the 
pure polynomial case. Thus, every function $f$ of the form (\ref{polynom-exp}) allows to write the update equation in closed form.

\subsection{Approximation of $f$ using truncated Fourier series}
\label{app-fourier}

We assume the Poisson distribution of the individuals in the compartments. We also assume that the phenotype-fitness relation is defined by a 
continuous selection function $f$. The activity is considered to be additive. By $\sigma_x^{f,\rho}$ and $\sigma^{f,\rho}$ we will denote the 
expressions (\ref{sigma-x}) and (\ref{sigma}), respectively, where $h_n \colon x \mapsto f(x)/n$, and we explicitly indicate the dependence on $f$ and 
$\rho$.

We will call a trigonometric polynomial of order $n$ and period $T$ any function from $\mathbb R$ to $\mathbb C$ of the form
	\begin{equation}
	p(x) = \sum_{k=-n}^{n}c_k e^{\frac{2\pi ikx}{T}},\quad c_k \in \mathbb C,\quad c_{-n}c_n \neq 0.
	\end{equation}

The trigonometric polynomial $p$ is called real if $p(\mathbb R) \subset \mathbb R$, where the second $\mathbb R$ is understood as the natural 
embedding in $\mathbb C$. $p$ is real if and only if $\forall k$ $c_{-k} = \bar c_k$.

Let $L(I)$ mean the length of the interval $I$. We call the truncated to order $n$ Fourier series of the function $f$ on the interval $I$ the 
trigonometric polynomial
	\begin{equation}
	f_n(x) = \sum_{k=-n}^n a_k e^{\frac{2\pi ikx}{L(I)}},\quad a_k = \frac{1}{L(I)}\intL{I}{} f(x)e^{-\frac{2\pi ikx}{L(I)}}dx.
	\end{equation}

We need the following known fact.

\begin{theorem}
For any $\epsilon > 0$ and any periodic continuous function 
$f$ with period $T$ there exists a trigonometric polynomial $p$ with period $T$ such that $\sup\limits_x |f(x) - p(x)| < \epsilon$. Furthermore, $p$ 
can be constructed from the Fourier series of $f$, namely if $f_n$ is the truncated to order $n$ Fourier series of $f$, then, for large enough $n$, 
one can take $p = (f_0 + f_1 + \ldots + f_n)/(n+1)$.
\label{theorem-fejer}
\end{theorem}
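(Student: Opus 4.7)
The plan is to prove this classical fact (Fej\'er's theorem) by expressing the Ces\`aro mean $\sigma_n(f) \deq (f_0 + f_1 + \ldots + f_n)/(n+1)$ as a convolution of $f$ with an approximate identity---the Fej\'er kernel---and then converting the approximate identity property into uniform convergence via uniform continuity of $f$.

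First I would substitute the integral representation of the Fourier coefficients $a_k$ into the definition of $f_n$ and interchange the finite sum with the integral to obtain $f_n(x) = T^{-1}\int_I f(y)D_n(x-y)\,dy$, where $D_n(x) \deq \sum_{k=-n}^n e^{2\pi i k x/T}$ is the Dirichlet kernel. Averaging over $n$ then gives $\sigma_n(f)(x) = T^{-1}\int_I f(y)K_n(x-y)\,dy$ with $K_n \deq (D_0 + D_1 + \ldots + D_n)/(n+1)$.

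Next, a direct rearrangement of the double sum yields $\sum_{k=0}^n D_k(x) = \bigl|\sum_{k=0}^n e^{2\pi i k x/T}\bigr|^2$, whence, by the geometric series formula, $K_n(x) = \frac{1}{n+1}\left(\frac{\sin((n+1)\pi x/T)}{\sin(\pi x/T)}\right)^2$. This closed form immediately establishes the three properties of an approximate identity: nonnegativity $K_n \geq 0$ (as a square), normalization $T^{-1}\int_I K_n = 1$ (since the constant Fourier coefficient of $K_n$ equals $1$), and concentration $\sup_{\delta \leq |x| \leq T/2} K_n(x) \leq \frac{1}{(n+1)\sin^2(\pi \delta/T)} \to 0$ for any fixed $\delta \in (0, T/2)$.

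Finally, since $f$ is continuous and $T$-periodic, it is uniformly continuous on $\mathbb R$. Given $\epsilon > 0$, pick $\delta > 0$ with $|f(x) - f(x-y)| < \epsilon/2$ whenever $|y| < \delta$. Writing $\sigma_n(f)(x) - f(x) = T^{-1}\int_I (f(x-y) - f(x))K_n(y)\,dy$ (using normalization and $T$-periodicity) and splitting the integration domain into $|y| < \delta$ and $\delta \leq |y| \leq T/2$, the first region contributes at most $\epsilon/2$ by uniform continuity and $K_n \geq 0$, while the second is bounded by $2\sup|f|\cdot T^{-1}\int_{\delta \leq |y| \leq T/2} K_n(y)\,dy$, which falls below $\epsilon/2$ for large $n$ by concentration. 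This gives $\sup_x|\sigma_n(f)(x) - f(x)| < \epsilon$, as required. The main obstacle is ensuring $K_n \geq 0$: the raw Dirichlet kernels $D_n$ are oscillatory with $L^1$-norm growing like $\log n$, so the untreated $f_n$ need not converge uniformly to $f$; only the Ces\`aro averaging produces the telescoping cancellation that recasts $K_n$ as a squared modulus, enabling the trivial estimate $\bigl|\int (f(x-y)-f(x))K_n(y)\,dy\bigr| \leq \sup|f(x-y)-f(x)|\int K_n(y)\,dy$ that closes the argument.
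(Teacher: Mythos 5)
Your proof is correct and complete: it is the classical Fej\'er-kernel argument (write the Ces\`aro mean as convolution with $K_n$, verify that $K_n$ is a nonnegative, normalized, concentrating kernel via the closed form $K_n(x)=\frac{1}{n+1}\bigl(\sin((n+1)\pi x/T)/\sin(\pi x/T)\bigr)^2$, then conclude by uniform continuity of $f$). The paper gives no proof of this statement---it only cites Rudin and labels it as Fej\'er's theorem---so your argument is precisely the standard one the reference supplies, and all the individual steps (the double-sum rearrangement identifying $\sum_{k=0}^n D_k$ with a squared modulus, the normalization via the constant Fourier coefficient, and the $|y|<\delta$ versus $\delta\leqslant|y|\leqslant T/2$ split) check out.
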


The proof can be found, for example, in \cite{Rudin1}. The last statement is known as Fej\'er's theorem. Note that all $p$ constructed in this way for 
a real function $f$ are real.

We start with an observation that for any probability density $\rho$ the following holds. If $\supp \rho \subset [-d,d]$, then $\supp \rho^{*n} 
\subset [-nd, nd]$ and $\supp \delta_x * \rho^{*n-1} \subset [-nd, nd]$ for any $x \in \supp \rho$. For any $\rho$ we will denote $I^\rho$ some 
interval $I^\rho = [-d,d]$ such that $\supp \rho \subset I^\rho$. For example, one can use the smallest interval with these properties. We will also 
introduce intervals $I^\rho_n = [-nd,nd]$ (not to be confused with $I_k$ used later) for the same $d$ that defines $I^\rho$.

We will prove the following main theorem.

\begin{theorem}
Let $f$ be a continuous function $f\colon \mathbb R \to \mathbb R$ exponentially bounded in the following sense: There are positive numbers $a$ and $b$ 
such that $|f(x)| \leqslant a\ch bx$ for all $x$. Then for any compactly supported probability density $\rho$ there exists a sequence $k \mapsto 
(I_k,p_k)$ of pairs of closed intervals $I_k$ and of real trigonometric polynomials $p_k$, where $p_k$ approximates $f$ on $I_k$, such that $\langle 
\sigma_x^{p_k,\rho},y\rangle \to \langle \sigma_x^{f,\rho},y\rangle$ homogeneously on $\supp \rho$ as a function of $x$, $\langle 
\sigma^{p_k,\rho},y\rangle \to \langle \sigma^{f,\rho},y\rangle$ as a sequence of numbers, while $\sigma_x^{p_k,\rho} \to \sigma_x^{f,\rho}$ for any $x 
\in \supp \rho$ and $\sigma^{p_k,\rho} \to \sigma^{f,\rho}$ in the sense of generalized functions. $p_k$ can be constructed using Fourier series 
approximations of $f$ on the appropriate intervals.
\label{f-approx}
\end{theorem}

As the logic of the proof is slightly convoluted, we will first formulate and prove two auxiliary lemmas.

\begin{lemma}

Under conditions of Theorem~\ref{f-approx}, for any $\epsilon > 0$ there exists $n_0 > 0$ such that
for any function $p$ that is bound by $\sup\limits_{x \in \mathbb R} |p(x)| < \sup\limits_{x \in I^\rho_{n_0}} |f(x)| + \epsilon$ the following holds:
	\begin{eqnarray}
	\forall x \in \supp \rho\quad
	&\displaystyle
	\left|\sum_{n=n_0}^\infty \frac{e^{-\lambda}\lambda^n}{(n+1)!}\langle\delta_x * \rho^{*n}, f - p \rangle\right|
	< \epsilon\quad \text{and}
	\label{first-statement1}\\
	&\displaystyle
	\left|\sum_{n=n_0}^\infty \frac{e^{-\lambda}\lambda^n}{(n+1)!}\langle\ \rho^{*n+1}, f - p \rangle\right|
	< \epsilon.
	\label{second-statement1}
	\end{eqnarray}

\label{lemma-more}
\end{lemma}

\begin{proof}

By the virtue of the estimate on $f$ from the conditions of Theorem~\ref{f-approx}, the relations of the statement of the lemma follow from the 
relations
	\begin{eqnarray}
	\forall x \in \supp \rho\quad
	&\displaystyle
	\sum_{n=n_0}^\infty \frac{e^{-\lambda}\lambda^n}{(n+1)!}\langle\delta_x * \rho^{*n}, |f| + \epsilon + a \ch\big(b L(I^\rho_{n_0})\big) \rangle
	< \epsilon\quad \text{and}
	\label{first-statement1-1}\\
	&\displaystyle
	\sum_{n=n_0}^\infty \frac{e^{-\lambda}\lambda^n}{(n+1)!}\langle\ \rho^{*n+1}, |f| + \epsilon + a \ch\big(b L(I^\rho_{n_0})\big) \rangle
	< \epsilon.
	\label{second-statement1-1}
	\end{eqnarray}

Let us prove (\ref{first-statement1-1}). Indeed, using the same reasoning as in the proof of Theorem~\ref{theorem-finiteness}, we have $\forall x \in 
\supp \rho$
	\begin{multline}
	\sum_{n=n_0}^\infty \frac{e^{-\lambda}\lambda^n}{(n+1)!}\langle\delta_x * \rho^{*n}, |f(y)| + a \ch\big(b L(I^\rho_{n_0})\big) + \epsilon\rangle
	\leqslant\\ \leqslant
	\sum_{n=n_0}^\infty \frac{e^{-\lambda}\lambda^n}{(n+1)!}\langle\delta_x * \rho^{*n}, a \ch\big(by) + a \ch\big(b L(I^\rho_{n_0})\big) + \epsilon 
	\rangle
	< \\ <
	\sum_{n=n_0}^\infty \frac{e^{-\lambda}\lambda^n}{(n+1)!} \left(ae^{b|x|}\tilde\psi_\rho(b)^n + a\left(e^{L(I^\rho)}\right)^{n_0} + \epsilon \right)
	\leqslant \\ \leqslant
	a\sum_{n=n_0}^\infty \frac{e^{-\lambda}\lambda^n}{(n+1)!} e^{bx_0}\tilde\psi_\rho(b)^n +
	a\sum_{n=n_0}^\infty \frac{e^{-\lambda}\lambda^n}{(n+1)!} \left(e^{L(I^\rho)}\right)^n +
	\epsilon\sum_{n=n_0}^\infty \frac{e^{-\lambda}\lambda^n}{(n+1)!},
	\label{n-condition}
	\end{multline}

\noindent where $\psi_\rho(t) = \langle \rho,e^{tx}\rangle$, $\tilde\psi_\rho(t) = \max\Big(\psi_\rho(t), \psi_\rho(-t)\Big)$, and $x_0 = \max(|\inf 
\supp \rho|, |\sup \supp \rho|)$.

The statement (\ref{first-statement1-1}) follows from the fact that the series in the last expression have positive terms and converge to finite 
numbers, if summed started from $n = 0$. Indeed, this means that for any $\epsilon > 0$ there is $n_0$ such that the last expression is smaller than 
$\epsilon$. It can be proven analogously that the same $n_0$ fulfills the statement~(\ref{second-statement1-1}).

\end{proof}

\begin{lemma}

Under conditions of Theorem~\ref{f-approx}, for any $\epsilon > 0$ and any $n_0 > 0$ there exists a real trigonometric polynomial $p$ such that 
$\sup\limits_{x \in I^\rho_{n_0}} |f(x) - p(x)| < \epsilon$, $\sup\limits_{x \in \mathbb R}|p(x)| < \sup\limits_{x \in I^\rho_{n_0}}|f(x)| + \epsilon$, 
as well as
	\begin{eqnarray}
	\forall x \in \supp \rho\quad
	&\displaystyle
	\left|\sum_{n=0}^{n_0-1} \frac{e^{-\lambda}\lambda^n}{(n+1)!} \langle \delta_x * \rho^{*n}, f - p \rangle\right| < \epsilon \quad \text{and}
	\label{first-statement2}
	\\
	&\displaystyle
	\left|\sum_{n=0}^{n_0-1} \frac{e^{-\lambda}\lambda^n}{(n+1)!} \langle \rho^{*n+1}, f - p \rangle\right| < \epsilon.
	\label{second-statement2}
	\end{eqnarray}

\label{lemma-less}
\end{lemma}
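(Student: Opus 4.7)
The plan is to construct $p$ as a Fej\'er-type trigonometric approximation of a suitable continuous periodic extension of $f|_{I^\rho_{n_0}}$, then verify the three required bounds by choosing the uniform approximation error small enough.

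First, set $M \deq \sup_{x \in I^\rho_{n_0}}|f(x)|$, which is finite by continuity of $f$ and compactness of $I^\rho_{n_0}$. Pick any $T$ slightly larger than $L(I^\rho_{n_0})$. I would build a continuous, $T$-periodic function $\tilde f \colon \mathbb R \to \mathbb R$ with $\tilde f|_{I^\rho_{n_0}} = f|_{I^\rho_{n_0}}$ and $\sup_{\mathbb R}|\tilde f| \leqslant M$. Concretely, on the two gaps $[-T/2, -n_0 d]$ and $[n_0 d, T/2]$ linearly interpolate the boundary values $f(\mp n_0 d)$ to a common value $c = (f(n_0 d) + f(-n_0 d))/2$ assigned at both $\pm T/2$, and extend by periodicity. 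The result is continuous at $\pm T/2$ by construction, and since a linear segment between two reals stays bounded by the larger of their absolute values, $\sup_{\mathbb R}|\tilde f| \leqslant M$.

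Second, I would apply theorem~\ref{theorem-fejer} to $\tilde f$: for any $\epsilon' > 0$ there exists a real trigonometric polynomial $p$ of period $T$ with $\sup_{\mathbb R}|\tilde f - p| < \epsilon'$ (realness is preserved by the Ces\`aro averaging of Fourier partial sums of a real function). In particular $\sup_{I^\rho_{n_0}}|f - p| < \epsilon'$ and $\sup_{\mathbb R}|p| \leqslant M + \epsilon'$. For the tail-sum bounds, recall from the observation preceding the theorem that for $x \in \supp \rho$ one has $\supp(\delta_x * \rho^{*n}) \subset I^\rho_{n+1}$ and $\supp \rho^{*(n+1)} \subset I^\rho_{n+1}$; hence whenever $n \leqslant n_0 - 1$ both supports lie inside $I^\rho_{n_0}$, so that $\langle \delta_x * \rho^{*n}, |f - p|\rangle \leqslant \sup_{I^\rho_{n_0}}|f - p| < \epsilon'$ and similarly for $\rho^{*(n+1)}$. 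Combined with $\sum_{n=0}^{n_0-1} e^{-\lambda}\lambda^n/(n+1)! \leqslant g(\lambda) \leqslant 1$, both partial sums in (\ref{first-statement2}) and (\ref{second-statement2}) are bounded by $\epsilon'$. Choosing any $\epsilon' < \epsilon$ then yields all four required strict inequalities at once.

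The only delicate point is arranging the periodic extension so that it does not inflate the global sup, since otherwise the bound $\sup_{\mathbb R}|p| < M + \epsilon$ would degrade. The linear interpolation to a single common midpoint value at $\pm T/2$ handles this automatically. Everything else reduces to Fej\'er's theorem plus the observation that for $n \leqslant n_0 - 1$ the measures $\delta_x * \rho^{*n}$ and $\rho^{*(n+1)}$ are supported inside $I^\rho_{n_0}$, so the global behavior of $p$ outside $I^\rho_{n_0}$ is irrelevant for those terms.
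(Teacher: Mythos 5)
Your proposal is correct and follows essentially the same route as the paper: extend $f|_{I^\rho_{n_0}}$ by linear interpolation to a common midpoint value on a slightly larger interval, periodize, apply Fej\'er's theorem, and then use the support inclusions $\supp(\delta_x*\rho^{*n}),\,\supp\rho^{*(n+1)} \subset I^\rho_{n_0}$ for $n \leqslant n_0-1$ together with $\sum_{n\geqslant 0} e^{-\lambda}\lambda^n/(n+1)! = g(\lambda)\leqslant 1$. The only cosmetic difference is that the paper fixes the period to be $L(I^\rho_{n_0+1})$ rather than an arbitrary $T$ slightly exceeding $L(I^\rho_{n_0})$, which changes nothing.
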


\begin{proof}

Let $d$ be the number such that $I^\rho = [-d,d]$. For any $n \leqslant n_0-1$, we have $\supp \delta_x * \rho^{*n} \subset I^\rho_{n_0}$ and $\supp 
\rho^{*n+1} \subset I^\rho_{n_0}$. Let us extend $f|_{I^\rho_{n_0}}$ to $I^\rho_{n_0+1}$ by a function $\hat f$ such that $\hat f(x) = \alpha x + 
\beta_1$ for any $x \in [-(n_0+1)d, -n_0 df]$ and $\hat f(x) = \alpha x + \beta_2$ for any $x \in [n_0 d, (n_0+1)d]$, where $\alpha$, $\beta_1$, and 
$\beta_2$ are selected to fulfill

	\begin{equation}
	\hat f\big(-(n_0+1)d\big) = \hat f\big((n_0+1) d\big) = \frac{f(-n_0 d) + f(n_0 d)}{2},\quad
	\hat f(-n_0 d) = f(-n_0 d),\quad \hat f(n_0 d) = f(n_0 d).
	\end{equation}

\noindent Function $\hat f$ is continuous on $I^\rho_{n_0+1}$ and can be extended to the whole $\mathbb R$ as a periodic continuous function $\check f$ 
with period $L(I^\rho_{n_0+1})$. By Theorem~\ref{theorem-fejer}, using the truncations of the Fourier series for $\check f$ up to some order, we can 
construct a real trigonometric polynomial $p$ such that for any $\epsilon > 0$\, $\sup\limits_{x \in \mathbb R}
|\check f(x) - p(x)| < \epsilon$. It follows that for any $\epsilon > 0$ we can find $p$ such that $\sup\limits_{x \in I^\rho_{n_0}} |f(x) - p(x)| <
\epsilon$. Furthermore, by construction, $\sup\limits_{x \in \mathbb R}|p(x)| < \sup\limits_{x \in I^\rho_{n_0}}|f(x)| + \epsilon$.

Let us consider the right-hand side of the statement (\ref{first-statement2}). For any $x \in \supp \rho$ we have

	\begin{equation}
	\sum_{n=0}^{n_0-1} \frac{e^{-\lambda}\lambda^n}{(n+1)!} \langle \delta_x * \rho^{*n}, |f - p| \rangle <
	\epsilon\sum_{n=0}^{n_0-1} \frac{e^{-\lambda}\lambda^n}{(n+1)!} < \epsilon g(\lambda) \leqslant \epsilon.
	\label{delta-x-relation}
	\end{equation}

\noindent Here we essentially used the inclusion $\supp \delta_x * \rho^{*n} \subset I^\rho_{n_0}$ for any $x \in \supp \rho$. The relation 
(\ref{delta-x-relation}) implies the relation (\ref{first-statement2}).

The statement (\ref{second-statement2}) can be proven analogously, using the fact that $\supp \rho^{*n+1} \subset I^\rho_{n_0}$ for any $n < 
n_0$.

\end{proof}

\begin{proof}[Proof of Theorem \ref{f-approx}]

Let us symbolically rewrite the statement of Lemma~\ref{lemma-more} as $\forall \epsilon > 0\,\exists n_0 > 0\,\forall p\colon A(\epsilon, n_0, p)$. In 
the same manner, let us symbolically rewrite the statement of Lemma~\ref{lemma-less} as $\forall \epsilon > 0\,\forall n_0 > 0\,\exists p\colon 
B(\epsilon, n_0, p)$. Then (\ref{lemma-logic}) implies $\forall \epsilon >0\, \exists n_0 > 0\, \exists p\colon A(\epsilon, n_0, p) \wedge B(\epsilon, 
n_0, p)$. As the statement $A$ estimates the sums from 0 to $n_0 - 1$, and the statement $B$ estimates the sums from $n_0$ to $\infty$, the statement 
$A \wedge B$ gives estimates on the sums from 0 to $\infty$. Therefore, the consequence of the above is explicitly read as following. For any positive 
$\epsilon$ there exists a number $n_0$ and a trigonometric polynomial $p$ (which can be constructed based on an approximation of $f$ by truncated 
Fourier series, as in the proof of Lemma~\ref{lemma-less}) such that $\sup\limits_{x \in I^\rho_{n_0}} |f(x) - p(x)| < \epsilon$ and

	\begin{eqnarray}
	\forall x \in \supp \rho\quad
	&\displaystyle
	\left|\sum_{n=0}^\infty \frac{e^{-\lambda}\lambda^n}{(n+1)!} \langle \delta_x * \rho^{*n}, f - p \rangle\right| < \epsilon \quad \text{and}
	\\
	&\displaystyle
	\left|\sum_{n=0}^\infty \frac{e^{-\lambda}\lambda^n}{(n+1)!} \langle \rho^{*n+1}, f - p \rangle\right| < \epsilon.
	\end{eqnarray}

The part of the theorem with $\langle \sigma_x^{p_k,\rho},y\rangle \to \langle \sigma_x^{f,\rho},y\rangle$ and $\langle \sigma^{p_k,\rho},y\rangle \to 
\langle \sigma^{f,\rho},y\rangle$ is proven, if we take, for example, $\epsilon_k = 1/(1 + k)$, $k \in \mathbb N$, and if we take as $I_k$ the interval 
$I^\rho_{n_0}$ and as $p_k$ the trigonometric polynomial $p$, where $n_0$ and $p$ are provided by the above statement for $\epsilon = \epsilon_k$.

Let us choose any $\phi \in C_c$. Let us denote $\Phi \deq \sup\limits_x |\phi(x)|$ and $s_k \deq \sup\limits_{x \in I_k} 
|\phi\left(p_k(x)\right) - \phi\left(f(x)\right)|$. As $\phi$ is continuous and compactly supported it is also uniformly
continuous on $\mathbb R$. Therefore, we have $s_k \to 0$, as $k \to \infty$. Furthermore, for any $n$ we have
	\begin{equation}
	\sup_{x \in I_k} \left|\phi\left(\frac{p_k(x)}{n+1}\right)- \phi\left(\frac{f(x)}{n+1}\right)\right| \leqslant s_k
	\end{equation}
and
	\begin{equation}
	\sup_x \left|\phi\left(\frac{p_k(x)}{n+1}\right)- 
	\phi\left(\frac{f(x)}{n+1}\right)\right| \leqslant 2\Phi.
	\end{equation}

Using the same technique of splitting the series into two parts by $n_0(k) = n_0(\epsilon_k)$ and using the former estimate for the initial part of the 
sum of the series and the latter estimate for the rest of the series by the same logic and taking into account that $n_0(\epsilon)$ provided by the 
Lemma~\ref{lemma-more} ever grows with the decay of $\epsilon$, we conclude that
	\begin{equation}
	|\langle \sigma_x^{p_k,\rho} - \sigma_x^{f,\rho},\phi\rangle| \leqslant
	s_k + 2\Phi e^{-\lambda}e_{n_0(k)}^\infty(\lambda) \to 0,\quad k \to \infty,
	\end{equation}

\noindent where $\displaystyle e_m^\infty(x) \deq \sum_{j = m}^\infty \frac{x^j}{j!}$.

In the same way we prove $\sigma^{p_k,\rho} \to \sigma^{f,\rho}$.

\end{proof}

This theorem implies that for any compactly supported $\rho$ with $\langle \sigma^{f,\rho},x\rangle \neq 0$ we have the convergence $A^{p_k}(\rho) \to 
A^f(\rho)$, as $k \to \infty$. Note that the sequence $p_k$ depends on $\rho$. This is due to the involvement of $\psi_\rho$ in (\ref{n-condition}). 
However, if $f$ is a bounded function, then this dependence can be dropped from (\ref{n-condition}), and the choice of $n_0$, and thus of $p_k$, 
becomes independent of the current distribution. In this case we can state the pointwise convergence $A^{p_k} \to A^f$ (on $\mathbb P^K$ for some 
compact $K$). The proof of the theorem is constructive. However, it is not optimized for applications.}


\end{document}